\newcommand{\numjobs}{\textit{num}\xspace}
\newcommand{\thresh}{\textrm{PRICE}\xspace}
\newcommand{\currentprice}{current price\xspace}
\newcommand{\price}{price\xspace}
\newcommand{\intvl}{\ensuremath{\mathcal{I}}}
\newcommand{\fee}{attached fee\xspace}
\newcommand{\defn}[1]{\textbf{\emph{#1}}}
\newcommand{\whp}{w.h.p.\xspace}
\newcommand{\floor}[1]{\left \lfloor #1 \right \rfloor}
\newcommand{\ceil}[1]{\left \lceil #1 \right \rceil}
\newcommand{\Linear}{\textsc{LINEAR}\xspace}
\newcommand{\LinPow}{\textsc{LINEAR-POWER}\xspace}
\newcommand{\algExp}{\alpha\xspace}
\newcommand{\estGap}{\gamma\xspace}
\newcommand{\AlgTotal}{A\xspace}
\newcommand{\AdvTotal}{B\xspace}
\newcommand{\AlgRB}{A_{\mbox{\tiny RB}}\xspace}
\newtheorem{theorem}{Theorem}
\newtheorem{fact}{Fact}
\newtheorem{lemma}{Lemma}
\newif\ifcomments
\title{Bankrupting DoS Attackers\footnote{A preliminary version of this work appeared at the 32nd International Colloquium on Structural Information and Communication Complexity (SIROCCO) \cite{bankrupting:chakraborty}.}}
\author{
Trisha Chakraborty\thanks{Amazon Web Services, Minneapolis, MN, USA}
\and
Abir Islam\thanks{Meta Research, Menlo Park, CA 94025, USA}
\and
Valerie King\thanks{Department of Computer Science, University of Victoria, Victoria, BC V8P~5C2, Canada}
\and
Daniel Rayborn\thanks{Department of Computer Science and Engineering, Mississippi State University, Mississippi State, MS 39762, USA}
\and
Jared Saia\thanks{Department of Computer Science, University of New Mexico, Albuquerque, NM 87131, USA}
\and
Maxwell Young\footnotemark[4]
}
\date{}
\date{}
\begin{document}
\maketitle

\begin{abstract}
Can we make a denial-of-service attacker pay more than the server and honest clients? Consider a model where a server sees a stream of jobs sent by either honest clients or an adversary.  The server sets a price for servicing each job with the aid of an \emph{estimator}, which provides approximate statistical information about the distribution of previously occurring good jobs. 

We describe and analyze pricing algorithms for the server under different models of synchrony,  with total cost parameterized by the accuracy of the estimator. Given a reasonably accurate estimator, the algorithm's cost provably grows more slowly than the attacker's cost, as the attacker's cost grows large. Additionally, we prove a lower bound, showing that our pricing algorithm yields asymptotically tight results when the estimator is accurate within constant factors.
\end{abstract}

\clearpage
\section{Introduction} \label{s:intro}

Since their first documented occurrence in 1974~\cite{first-DoS-attack}, denial-of-service(DoS) attacks have evolved into a multi-billion dollar security dilemma~\cite{cost-DDoS}.  The DoS problem has always been economic: success or failure is decided by who can marshal resources most effectively.  In this paper, we provide a novel theoretical formulation and analysis of the DoS problem. We hope this will be a first step towards: (1) leveraging tools from theoretical computer science to solve a critical and persistent security problem; and (2) spurring new theoretical techniques and models; for example, in the same way that the theoretical focus on the paging problem eventually helped create the area of online algorithms.  

It has long been known that fees can be used to curb DoS attacks \cite{dwork:pricing,juels:dos,walfish:ddos,noureddinerevisiting}. 
This fee may be in the form of computation: a client solves a computational puzzle or Captcha per request \cite{kandula2005botz,dwork1992pricing,liu:proof}; or bandwidth: a client sends multiple messages to the server per request \cite{walfish2010ddos,walfish:ddos,yu2007detection}.   Fees may be set to $0$ in the absence of attack, so that in peaceful times, clients pay nothing.

%\cite{walfish2010ddos,walfish:ddos,yu2007detection,kandula2005botz}.

In this paper, we design algorithms for setting the fee (the ``price") in an online manner, assisted by an ``estimator", which provides statistical information about the distribution of previously occurring good jobs in the stream. Our goal is to set prices so as to minimize the total algorithm's cost - that is, the cost to the server and honest clients - relative to the attacker's cost.  To achieve this, we define an accuracy measure of the estimator, and determine how the algorithm's cost increases with this measure. 

\subsection{The Dynamic Job Pricing Problem}\label{s:problem}

\noindent{\bf Server and Clients.} 
The \defn{server} receives a stream of jobs.  Each job is either \defn{good}: sent by a client - to avoid redundancy, henceforth ``clients" rather than ``honest clients" - or \defn{bad}: sent by the adversary (described below).

We assume a message-passing model of communication.  The following happens in parallel for the server and clients. 
\begin{itemize}
    \item \textbf{Server:} The server uses the estimator to compute the \currentprice. Jobs are serviced based on whether the attached fee meets this \price. When the server receives a job request with an insufficient fee, it sends the \currentprice back to the sender.
    \item \textbf{Clients:} Each client may send a job to the server with a fee.
\end{itemize}

\medskip

\noindent{\bf Adversary.} The adversary knows everything about: the distribution of good jobs, the estimator, and our algorithm. At any time, the adversary can send messages to the server with (bad) job requests with fees attached.
\medskip

\noindent{\bf Costs.} Our algorithmic costs are two-fold: (1) fees paid by clients; and (2) \defn{service costs} incurred by the server.  We assume a normalized service cost of $1$ for servicing any job, whether good or bad.  The cost to the adversary is the total cost of all fees paid to the server.\medskip

%Ideally, one would prefer to assume an estimator which is as weak as possible.  Here we describe a set of \emph{properties} which suffice to prove our results.

\noindent{\bf Estimator.} 
The server has an estimator whose output may be used to set prices.   The \defn{estimator} estimates the number of good jobs generated in any contiguous time interval in the past.  For any such interval, {\boldmath{$\intvl$}}, let {\boldmath{$g(\intvl)$}} be the number of good jobs in $\intvl$ and {\boldmath{$\hat{g}(\intvl)$}} be the value returned by the estimator for interval $\intvl$. The estimator $\hat{g}$ is additive: for any two non-overlapping intervals, $\intvl_1$ and $\intvl_2$:
$$\hat{g}(\intvl_1 \cup \intvl_2) = \hat{g}(\intvl_1) + \hat{g}(\intvl_2).$$ 

For a given input, the \defn{estimation gap},   {\boldmath{$\estGap$}}, is the smallest number greater than or equal to $1$, such that the following holds for all intervals, $\intvl$ in the input:
$$g(\intvl)/\estGap - \estGap \leq \hat{g}(\intvl) \leq \estGap g(\intvl) + \estGap.$$

\medskip

\noindent{\bf Communication Latency.} We have two problem variants:
\begin{itemize}
    \item \textbf{Zero-latency:} All messages are sent instantaneously.  So, the clients always know the exact \currentprice; and the fees attached to jobs sent to the server always arrive before any change in the \currentprice, and thus are always serviced.\medskip

    \item \textbf{Asynchronous:} Communication is \emph{asynchronous}~\cite{lynch1996distributed}: there is some unknown maximum latency, {\boldmath{$\Delta$}} seconds, for any message to be received after it is sent. This latency can also capture the maximum time it took any client to attach a fee.  Also, there is some unknown maximum number of good jobs, {\boldmath{$M$}}, that can be generated over $\Delta$ seconds.  Both $\Delta$ and $M$ are known to the adversary, but not to the algorithm. Also, the adversary controls the timing of all messages subject to the above constraints. 
    \medskip
\end{itemize}

\noindent{\bf (Lack of) Knowledge of the Servers and Clients.} Other than the outputs of the estimator, the server and clients have no prior knowledge of the number of good jobs, the amount spent by the adversary, or the estimation gap $\estGap$. Recall that, in the latency model, $\Delta$ and $M$ are also unknown. 

\subsection{Model Discussion}\label{s:model-discussion}

\noindent\textbf{Estimator.} Our estimator generalizes many DoS detection tools, summarized below; see also ~\cite{khalaf2019comprehensive,li2023comprehensive} for comprehensive surveys.  DoS detection tools can be characterized along $3$ main axes.  First, input attributes used include: network traffic patterns~\cite{doriguzzi2020lucid,yuan2017deepdefense,wang2008intelligent,nguyen2010proactive,tsai2010early,braga2010lightweight,cheng2012extreme,yan2018multi,tan2013system,cheng2002use,li2009new,dainotti2009cascade}; reliability of access routers~\cite{gonzalez2011trust}; request payloads~\cite{ndibwile2015web,kim2006packetscore,yu2007detection}; geographical information~\cite{bandara2016preventing}; hop count travelled by request messages~\cite{chouhan2012hierarchical,wang2007defense}; and social media posts~\cite{chambers2018detecting}.  Second, detection algorithms used include: Bayesian learning~\cite{vijayasarathy2011system,katkar2015detection,gonzalez2011trust}; k-nearest neighbors~\cite{nguyen2010proactive,barrionuevo2018anomaly,li2008lightweight,bandara2016preventing,yu2007detection}; neural networks and deep learning~\cite{doriguzzi2020lucid,yuan2017deepdefense,chambers2018detecting,cheng2012extreme,yan2018multi}; and statistical methods~\cite{ndibwile2015web,tan2013system,cheng2002use,li2009new,dainotti2009cascade,wang2008intelligent}.  Finally, outputs of the detection algorithms include: classification of each packet as good or bad \cite{doriguzzi2020lucid,yuan2017deepdefense,vijayasarathy2011system,yan2018multi,tan2013system,cheng2002use,yu2007detection}; or general detection of the occurrence of an attack or the severity of attack \cite{katkar2015detection,wang2008intelligent,bandara2016preventing,nguyen2010proactive,tsai2010early,braga2010lightweight,chambers2018detecting,cheng2012extreme,yan2018multi,tan2013system,cheng2002use,li2009new,dainotti2009cascade}. 

The estimator generalizes three detection techniques. First, for detection tools that \textbf{classify jobs}: the estimator can return the number of jobs classified as good in the input interval, perhaps adjusted based on the classifier error rates if these are known.  Second, for detection tools that \textbf{detect attacks}: when there is no attack detected in the input interval, the estimator can return the number of jobs in the interval; when there is an attack detected, the estimator can return the expected number of good jobs based on past data before the attack.  Finally, when good jobs arrive via a \textbf{Poisson, Weibull, Gamma, or similar stochastic process}: we describe an estimator, in Section~\ref{s:estimator}, with estimation gap that is $O(\log g)$, with probability at least $1-1/g^k$ for any positive $k$, where $g$ is the number of good jobs over the system lifetime.

\medskip
\noindent\textbf{Prices.}  Many DoS defenses incorporate prices for service.  One such tool is \textbf{bandwidth-based pricing}~\cite{li2023comprehensive,walfish:ddos,walfish2010ddos,yu2007detection,lukyanenko2010playing}.  Every message from a client incurs a resource cost to the client; a bandwidth price for a job is set by requiring the client to send a certain number of messages per job serviced.  In the simplest case, the server services each request with some tunable probability.  This sets an \emph{expected} bandwidth cost for each job serviced.  When this probability changes, it is shared with the clients.  Speak-up~\cite{walfish:ddos,walfish2010ddos} (402 cites, according to Google Scholar) and DOW~\cite{yu2007detection} (112 cites) are popular DoS defenses that primarily use this approach.  When these systems detect an attack, they notify clients that the service probability has decreased.  If the attack continues, good and bad clients pay similar bandwidth costs, and also have similar rates of service; thus the total algorithmic cost is $\Omega(B)$.

Rate-limiting is a bandwidth-based defense technique where packets can be dropped with some tunable probability, but this probability is not necessarily shared with the clients~\cite{liu2021jaqen,zargar2013dicodefense,kim2006packetscore,kim2004packetscore}; OpenFlow~\cite{naous2008implementing,braga2010lightweight,yuhunag2010novel,buragohain2016flowtrapp,yao2011source} ($>1800$ cites) is a popular such rate-limiting system, even though it has no theoretical guarantees.

\medskip
\noindent\textbf{Puzzle-based pricing} defenses require clients to solve computational or proof-of-work (PoW) puzzles in order to receive service~\cite{dwork:pricing,juels:dos,aura2000resistant,Chen2010,khor2009spow,kumar2011mitigation,shi2006overdose,dixon2008phalanx,parno2007portcullis}.  Popular DoS defense systems using PoW include:  Poseidon~\cite{zhang2020poseidon} (207 cites); KaPow~\cite{feng2010kapow,kaiser2007mod_kapow} (37 cites); Phalanx~\cite{dixon2008phalanx} (120 cites); Portcullis ~\cite{parno2007portcullis} (288 cites); 
OverDOSe~\cite{shi2006overdose} (52 cites); and Puzzle Auctions~\cite{wang2003defending} (304 cites).  Other puzzle-based defenses use \emph{Captchas}~\cite{von2003captcha,von2008recaptcha}: human-solvable puzzles.  Such systems including: Kill-Bots~\cite{kandula2005botz} (572 cites) and Enhanced DDoS-MS~\cite{alosaimi2015denial} (13 cites).  In all of these puzzle-based systems, clients pay an amount during an attack that matches the attacker, and thus receive a fair share of job service.  However, to the best of our knowledge, none of these defense systems have provable total cost of $o(B)$, where $B$ is the amount spent by the attacker.

%The functionality of puzzle generation and solution verification can be hardened against attack by outsourcing to a dedicated service~\cite{waters:new} ($262$ cites); leveraging the Domain Name System \cite{Lee:DDD}; or using routers \cite{tourani2020persia}. 

%may be attacked~\cite{anderson2004preventing,wu-chang:case}, and solutions include outsourcing to a dedicated service \cite{waters:new} ($262$ cites), leveraging the Domain Name System \cite{Lee:DDD} or routers \cite{tourani2020persia}. 

\medskip
\noindent\textbf{Asymptotic Advantage.}
Many DoS defenses \textbf{filter} out packets that are classified as bad~\cite{6489876}.  But, even if filtering is $99.9\%$ effective -- and reported classification accuracy now hovers around $95\%$~\cite{doriguzzi2020lucid,ndibwile2015web,yuan2017deepdefense} -- filtering alone still requires the defenders to spend $\Omega(B)$.  Thus, standalone filtering will never give the defender an asymptotic advantage, although filtering can be combined with our results to reduce costs by constant factors.

Should DoS defenders care about an asymptotic advantage?  Resource costs related to DoS attacks are increasingly liquid: attackers can rent bot-farms on the black market~\cite{somani2017combating}; and defenders can provision additional resources from the cloud to service requests~\cite{yuan2020minimizing,zheng2019dynashield,AWS_white}.  So, if a defender has access to a DoS defense algorithm with an asymptotic cost advantage, it seems feasible they should be able to outlast an attacker with a similar monetary budget.  See also Section~\ref{s:related} for game-theoretic implications.

\subsection{Our Results}

For a given input, let {\boldmath{$g$}} be the number of good jobs,  {\boldmath{$\AdvTotal$}} be the total spent by the adversary, and $\estGap$ be the estimation gap for the input.  Our primary goal is to minimize the algorithmic cost --- i.e. fees plus  service costs --- as a function of $g$, $\AdvTotal$, and $\estGap$ 

Our first result is for the algorithm, \Linear, which assumes zero-latency.  %Our first result is the following (See Sections~\ref{s:linear} and~\ref{s:lin-anal}).

\begin{theorem}\label{t:Linear}
\Linear has total cost $$O\left( \estGap^{5/2}\sqrt{\AdvTotal\,(g+1)} + \estGap^3(g+1)\right).$$
Moreover, the server sends $O(g)$ messages to clients, and clients send a total of $O(g)$ messages to the server.  Also, each good job is serviced after at most $3$ messages are exchanged.
\end{theorem}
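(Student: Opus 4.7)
My approach is to charge the algorithmic cost in a pay-as-you-go manner. At every moment I compare the current RB threshold $\alpha$ chosen by \Linear against the ``ideal'' balance point between client fees and adversary spending, and then aggregate.

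I would begin by partitioning the execution into maximal intervals (``phases'') within which the threshold $\alpha_i$ is held at a fixed value. Because \Linear sets its fee by a simple linear rule based on $\hat{g}$ and the count of incoming jobs, phase boundaries correspond precisely to the events at which this rule changes $\alpha$. Let $g_i$ be the number of good jobs, $b_i$ the number of bad jobs, and $B_i = \alpha_i b_i$ the adversarial spend in phase $i$. The algorithmic cost of phase $i$ is then at most $\alpha_i g_i$ (client RB) plus $g_i + b_i$ (server provisioning), and summing yields $\AdvTotal = \sum_i B_i$.

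The heart of the proof is a per-phase comparison showing that $\alpha_i$ lies within a $\mathrm{poly}(\estGap)$ factor of the balance point $\sqrt{B_i/(g_i+1)}$. Here the estimator bound $g(\intvl)/\estGap - \estGap \le \hat{g}(\intvl) \le \estGap\, g(\intvl) + \estGap$ is used to sandwich the ratio \Linear examines (observed jobs over estimated good jobs) against the true ratio $(g_i + b_i)/g_i$. Combining this with the linear fee rule and collecting powers of $\estGap$ yields a per-phase bound of the form
\[
\alpha_i g_i + g_i + b_i \;=\; O\!\left(\estGap^{5/2}\sqrt{B_i(g_i+1)} + \estGap^3(g_i+1)\right).
\]
Establishing this cleanly is the main obstacle: the estimator's additive slack of $\estGap$ is most delicate in phases with few good jobs, where the multiplicative bound degenerates, and I expect to dispose of these by charging their contribution to the $\estGap^3(g_i+1)$ term and treating short/low-traffic phases separately.

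Summing over all phases, the linear term collapses to $O(\estGap^3(g+1))$ directly, since the number of phases is at most $O(g+1)$ (each threshold update is triggered by the receipt of a job, and between consecutive good-job arrivals the threshold changes a bounded number of times). For the square-root term, Cauchy--Schwarz gives
\[
\sum_i \sqrt{B_i(g_i+1)} \;\le\; \sqrt{\Bigl(\sum_i B_i\Bigr)\Bigl(\sum_i (g_i+1)\Bigr)} \;=\; O\!\left(\sqrt{\AdvTotal\,(g+1)}\right),
\]
yielding the claimed cost bound. For the communication counts, the algorithm structure suffices: a client sends a job (message 1); the server either services it or replies with the current threshold (message 2); if necessary the client resubmits with a valid RB solution (message 3), at which point the server services it. Thus each good job consumes at most three messages, and summing over good jobs gives $O(g)$ messages in each direction.
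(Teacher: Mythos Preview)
Your plan rests on a misreading of how \Linear sets its threshold. In \Linear the server sets \thresh to $s+1$, where $s$ is the number of jobs already serviced in the current iteration; the threshold therefore increments with \emph{every} job, good or bad. Consequently your ``phases'' of constant threshold are single jobs, and the two structural claims you rely on both collapse. First, the per-phase inequality $\alpha_i g_i + g_i + b_i = O(\estGap^{5/2}\sqrt{B_i(g_i+1)} + \estGap^3(g_i+1))$ fails: take a phase consisting of a single good job that happens to be the $k$-th job of its iteration; then $\alpha_i = k$, $g_i = 1$, $b_i = 0$, $B_i = 0$, and your right-hand side is $O(\estGap^3)$ while the left-hand side is $k+1$, which can be arbitrarily large. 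Second, the assertion that ``between consecutive good-job arrivals the threshold changes a bounded number of times'' is simply false: the adversary may insert arbitrarily many bad jobs between two good ones, each one bumping \thresh by one, so the number of phases is $\Theta(n)$, not $O(g+1)$, and your Cauchy--Schwarz step does not close.

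The paper's proof proceeds quite differently. It works at the granularity of \emph{iterations} (intervals ending when $\hat{g}(\intvl)\ge 1$), not constant-threshold phases. Using the estimator gap it bounds the number of iterations by $\estGap(g+1)$ and the number of good jobs per iteration by $\estGap(\estGap+1)$. It then runs two separate exchange arguments: one shows the adversary's cost $B$ is minimized when bad jobs are spread evenly across iterations (yielding $B \ge (b-\estGap(g+1))^2/(2\estGap(g+1))$, hence $b = O(\sqrt{\estGap B(g+1)})$), and another shows the algorithm's RB cost is maximized when good jobs sit at the ends of iterations and bad jobs are packed into one iteration (yielding $A_{\mathrm{RB}} = O(\estGap^2 g + \estGap^2 b)$). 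Combining the two and adding the provisioning cost $g+b$ gives the theorem. Your message-count argument for the ``at most 3 messages'' claim is correct and matches the paper.
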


To build intuition for this result, consider the case where $\gamma = O(1)$. Then, the theorem states that the cost to \Linear is $O(\sqrt{\AdvTotal(g+1)} + (g+1))$.  Importantly, the cost to the algorithm grows asymptotically slower than $B$ whenever $g = o(B)$.  We note that the additive $O(g+1)$ term results from service costs due to good jobs, which must hold for any algorithm.  Our result is actually better than stated in Theorem~\ref{t:Linear}, when $\gamma$ grows large; in that case the total cost is always bounded by $g^2 + B$,  as will become clear in Section~\ref{s:algorithms}.

Finally, the number of messages sent from server to clients and from clients to server, and also the maximum number of messages exchanged to obtain service are all asymptotically optimal.  To see this note that at least one message must be sent for each good job serviced. 

\medskip

What if there is latency?  For our asynchronous variant, we design and analyze an algorithm \LinPow.  Our main theorem for this case follows.  The $\tilde{O}$ notation for the algorithmic cost hides terms that are logarithmic in $\AdvTotal$ and $\gamma$. 
 
\begin{theorem}\label{t:LinPow}
\LinPow has total cost:
$$ 
\tilde{O} \left(\estGap^3 \left(\sqrt{B+1} \min \left(g+1, M \sqrt{g+1},M \sqrt{B+1} \right) + (g+1) \right) \right).
$$
Moreover, the server sends $O( g \log (B + \estGap))$ messages to clients, and clients sends a total of $O( g \log (B + \estGap))$ messages to the server.  Also, each good job is serviced after at most $4\log (\estGap (B+1)) + 3$ messages are sent.  
\end{theorem}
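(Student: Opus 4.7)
The plan is to mirror the \Linear analysis from Theorem~\ref{t:Linear} while adding a layer that handles in-flight messages. First I would describe the handshake: since a client may not know the current threshold when it first bids (an update could be in transit, or the threshold may have changed during a prior rejection), \LinPow has the client double its RB expenditure across successive attempts, using the server's rejection messages (which return the current threshold) as feedback. The maximum threshold ever set is $O(\estGap^2(B+1))$---any larger value would only be issued in response to spending beyond $\AdvTotal$ after inflation by the estimator gap---so after $O(\log(\estGap(B+1)))$ doublings every good job is accepted, and a constant number of additional exchanges handle the final acknowledgment. This yields the per-job bound $4\log(\estGap(B+1))+3$ and, summing over $g$ good jobs, the total message bound $O(g\log(B+2\estGap^2))$ in each direction.

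For the cost bound I would split the algorithm's expenditure into: (i) provisioning, at most $1$ per serviced job, contributing $O(g)$ plus the cost charged to the adversary; (ii) RB on accepted good jobs; and (iii) RB wasted on rejected attempts. Because rejected bids of a single client form a geometric progression, (iii) is dominated by a constant-factor blow-up of (ii), so it suffices to bound (ii). Following the \Linear template, I would partition time into maximal intervals $\intvl_1, \intvl_2, \ldots$ on which the threshold is constant, and use the estimator inequality $g(\intvl)/\estGap - \estGap \le \hat{g}(\intvl) \le \estGap g(\intvl) + \estGap$ to bound the threshold in terms of the true good-job count on each interval. A Cauchy--Schwarz step between the per-interval threshold values (whose inner product with the adversary's bad-job counts is at most $\AdvTotal$) and the per-interval good-job counts (summing to $g$) recovers the main $\estGap^3\sqrt{(B+1)(g+1)}$ contribution, absorbing an extra $\estGap^{1/2}$ over \Linear to account for slack in both setting and reading the threshold across a latency window.

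The novel ingredient is handling the $L$-second window after each threshold change, during which up to $M$ good jobs may still be in flight with stale bids. I would assign every such job to a \emph{blame interval} of length $L$ attached to the transition, and bound the number of blame intervals by the number of threshold transitions, itself $O(\log(\estGap(B+1)))$ per monotone run of the linear rule. The cost charged to blame intervals is at most $M$ good jobs times the threshold in force, and bounding this sum using Cauchy--Schwarz in two different ways---once against the sum of good-job counts and once against the adversary's total bid---produces the $M\sqrt{g+1}$ and $M\sqrt{B+1}$ terms inside the min, while the crude bound of $g+1$ good jobs each paying the maximum threshold closes the third branch.

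The hard part will be bookkeeping: showing that retries triggered by stale thresholds do not double-count against both the geometric-series bound and the latency charge, and that descending transitions (where the adversary might exploit briefly lower prices) are correctly attributed to $\AdvTotal$. I plan to handle this with a single potential argument that assigns each good job to exactly one of (its true serviced interval, a blame interval, or a retry charge), keeping the logarithmic factors inside the $\tilde{O}$ and matching the stated bound. A secondary subtlety is verifying that the ``estimator gap'' framework remains consistent when the server's view of $\hat{g}$ lags reality by up to $L$; I would address this by invoking additivity of $\hat{g}$ over a decomposition into pre- and post-transition subintervals, so that any gap-induced loss is already absorbed into the $\estGap^3$ prefactor.
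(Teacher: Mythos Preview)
Your message-count argument and the geometric reduction of wasted RB to accepted RB are fine. The cost bound, however, has two real gaps. First, the Cauchy--Schwarz step over constant-threshold intervals does not work as stated: knowing $\sum_j \tau_j b_j\le B$ and $\sum_j g_j=g$ does not bound $\sum_j \tau_j g_j$ by $O(\sqrt{Bg})$, since the two inner products involve different weight vectors and no form of Cauchy--Schwarz bridges them. The paper instead observes that the threshold in \LinPow differs from that in \Linear by at most a factor of~$2$ per serviced job, and invokes Theorem~\ref{t:Linear} directly for the non-overpaying part (that theorem is proved by exchange arguments over \emph{iterations}, not Cauchy--Schwarz over constant-threshold subintervals).

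Second, and more seriously, your latency accounting misses the mechanism that actually drives the extra cost: \emph{overpayment}, where a client learns a high threshold, a new iteration then resets the threshold to~$1$, and the client's next bid is accepted at far more than the current threshold. Your blame intervals are keyed to threshold \emph{increases}, whereas overpayment is triggered by \emph{decreases}; and the claim that each $L$-length blame interval carries at most $M$ affected jobs is not true, since a single job bounces $O(\log(\gamma B))$ times and hence $\Theta(M\log(\gamma B))$ good jobs can be in flight at any one transition. The paper's remedy is to partition time into \emph{epochs}, each ending with $2L$ seconds during which the threshold never increases. This definition forces every pending job to be serviced before the epoch closes, bounds the good jobs arriving per epoch by $O(M\log(\gamma(B_i+1)))$, bounds the epoch's maximum threshold by $O(\gamma^2\sqrt{B_i+1})$, and---crucially (Lemma~\ref{l:nonZeroEpoch})---shows that an epoch has overpaying jobs only if it or its predecessor spans an iteration boundary, so at most $2\gamma(g+1)$ epochs contribute. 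Cauchy--Schwarz over this bounded index set, together with the trivial bound $\sum_i\sqrt{B_i}\le B$, then yields the three branches of the $\min$. Your plan has no analogue of this ``overpayment only near iteration boundaries'' lemma, and the potential argument you sketch does not supply one. (The worry about $\hat{g}$ lagging by $L$ is a non-issue: the estimator is evaluated server-side on the server's own received stream.)
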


To gain intuition for this result, assume that $\gamma$ and $M$ are both $O(1)$.  Then the cost to \LinPow is $\tilde{O} (\sqrt{B+1}$ $\min (\sqrt{g+1},$ $\sqrt{B+1}) + (g+1) )$.  This equals $\tilde{O} (\sqrt{B(g+1)} + (g+1))$,  since when $B = O(g)$, the expression is $\tilde{O}(g+1)$; and otherwise the expression is $\tilde{O}(\sqrt{B(g+1)})$.  Thus, when $\gamma$ and $M$ are constants, the cost of \LinPow matches that of \Linear, up to logarithmic terms.  For communication, the number of messages sent from the server to clients, clients to the server, and the number of messages exchanged until a job is serviced all increase by a logarithmic factor in $B$, when compared to \Linear.

\medskip
\noindent To complement these upper bounds, in Section \ref{s:lower}, we also prove the following lower bound.  This lower bound holds for our zero-latency variant, and so directly also holds for the harder, asynchronous variant. Let {\boldmath{$n$}} be the total number of jobs. % and {\boldmath{$\AlgTotal$}}  be the total algorithm cost.

\begin{theorem}\label{t:main-lower}
Let $\estGap$ be any positive integer; $g_0$ be any positive multiple of $\estGap$; and $n$ any multiple of $\estGap g_0$. Next, fix any randomized algorithm.  Then, there is an input  with $n$ jobs, $g$ of which are good for $g \in \{ \estGap g_0, g_0/\estGap \}$; an estimator with gap $\estGap$ on that input; and the randomized algorithm on that input has expected cost: $$ \Omega\left(\sqrt{\estGap B (g+1)} + \estGap (g+1)\right).$$
\end{theorem}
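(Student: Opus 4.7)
I will prove the bound via Yao's minimax principle. It suffices to exhibit a distribution $\mathcal{D}$ over (input, estimator) pairs such that every \emph{deterministic} algorithm run against $\mathcal{D}$ incurs expected cost at least the claimed lower bound; some input in the support of $\mathcal{D}$ then witnesses the theorem for the given randomized algorithm.

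My distribution is a fair coin flip between two scenarios, each producing the same length-$n$ stream that the algorithm cannot distinguish. Scenario~I ($g = \estGap g_0$) places good jobs in the first $\estGap g_0$ slots, with the adversary filling slots $\estGap g_0+1,\ldots,n$ with bad jobs whose RB solutions meet the current threshold. Scenario~II ($g = g_0/\estGap$) places good jobs at the regularly spaced positions $\estGap^2,2\estGap^2,\ldots,g_0\estGap$ inside $[1,\estGap g_0]$, and the adversary fills every other slot of $[1,n]$ with RB-matching bad jobs. The common estimator is defined additively by $\hat g(\intvl) := |\intvl \cap [1,\estGap g_0]|/\estGap$, so it yields identical outputs in the two scenarios. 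A direct case analysis verifies $g(\intvl)/\estGap - \estGap \le \hat g(\intvl) \le \estGap g(\intvl)+\estGap$ on every sub-interval in both scenarios: Scenario~II's $\estGap^2$-spacing of good jobs combined with the additive $\estGap$ slack in the gap definition is exactly what keeps the upper bound valid on sub-intervals of $[1,\estGap g_0]$ that happen to miss every good job.

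Because the job stream and all estimator queries look identical, the deterministic algorithm commits to the same threshold sequence $\alpha_1,\ldots,\alpha_n$ in both scenarios. With $T_1 := \sum_{i\le \estGap g_0}\alpha_i$, $T_2 := \sum_{i>\estGap g_0}\alpha_i$, and $S := \sum_{k=1}^{g_0/\estGap}\alpha_{k\estGap^2}$, a direct accounting yields $A_I = T_1+n$, $B_I = T_2$, $A_{II} = S+n$, and $B_{II} = (T_1-S)+T_2$. The core of the argument is a case analysis showing that, for every choice of thresholds, at least one scenario satisfies $A_s \ge \Omega\!\bigl(\sqrt{\estGap B_s(g_s+1)} + \estGap(g_s+1)\bigr)$. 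When $T_1 \ge \estGap\sqrt{g_0\, T_2}$, Scenario~I's RB contribution $T_1$ alone already exceeds the $\sqrt{\estGap B_I(g_I+1)}$ target, and the provisioning contribution $n$ (a multiple of $\estGap g_0$ chosen to be $\Omega(\estGap^2 g_0)$) delivers the additive $\estGap(g_I+1)$ term. Otherwise $T_2$ dominates, and an AM-GM argument on $A_{II} = S + n$ against $\sqrt{g_0\, B_{II}}$ (using $B_{II} \ge T_2$) yields Scenario~II's bound, with the much smaller additive $\estGap(g_{II}+1) \approx g_0$ absorbed into $n$.

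The main obstacle is the joint construction of indistinguishable inputs together with a common estimator that is valid (gap $\le \estGap$) in both scenarios. In particular, verifying the upper inequality $\hat g(\intvl) \le \estGap g(\intvl) + \estGap$ on every sub-interval of $[1,\estGap g_0]$---including those in Scenario~II that contain no good job---is what forces both the regular $\estGap^2$-spacing of Scenario~II's good jobs and the essential use of the additive $\estGap$ slack; once this indistinguishable construction is in place, the remainder of the proof reduces to the case analysis above combined with AM-GM.
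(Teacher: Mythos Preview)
Your construction has a genuine gap: because both scenarios place \emph{all} good jobs deterministically inside the prefix $[1,\estGap g_0]$, a deterministic algorithm can exploit this. Concretely, consider the algorithm that sets threshold $\alpha_i=0$ for every $i\le \estGap g_0$ and $\alpha_i=K$ (arbitrarily large) for every $i>\estGap g_0$. Then $T_1=S=0$ and $T_2=(n-\estGap g_0)K$, so $A_I=A_{II}=n$ while $B_I=B_{II}=(n-\estGap g_0)K$. In your case split this lands in Case~2, but your ``AM--GM argument on $A_{II}=S+n$ against $\sqrt{g_0\,B_{II}}$'' cannot succeed: the left side is fixed at $n$ while $\sqrt{g_0\,B_{II}}=\sqrt{g_0(n-\estGap g_0)K}$ grows without bound in $K$. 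Neither scenario achieves the $\sqrt{\estGap B g}$ term, so the claimed case analysis fails. The indistinguishability you carefully arranged is not enough; you also need the algorithm to be unable to separate the positions where good jobs sit from positions that are guaranteed bad.

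The paper closes exactly this hole by \emph{randomizing} the good-job positions: in each of the $g_i$ blocks of length $n/g_i$, one uniformly random slot is good. Now every slot is good with probability $g_i/n$, so any cost $c_i$ the algorithm places on slot $i$ contributes $c_i\,g_i/n$ to $E(A)$ and $c_i(1-g_i/n)$ to $E(B)$; neither side can be made large without dragging the other along. The paper then shows, by optimizing over $C=\sum_i c_i$, that $E(A)/\sqrt{E(B)\,g}\ge \sqrt{\estGap}$ for every deterministic algorithm, and applies Jensen to pass to $E(A)\ge E(\sqrt{\estGap B g})$. There is also a second, logical issue in your outline: showing ``at least one scenario satisfies the bound'' for each deterministic algorithm is \emph{not} what Yao requires---you must show that the \emph{expectation over $\mathcal D$} of (say) $A-c\sqrt{\estGap B g}$ is nonnegative; the other scenario could be arbitrarily negative and swamp the average. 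The paper's payoff matrix with entries $A-\sqrt{\estGap B g}$ is set up precisely to make this averaging step go through. Finally, your appeal to ``$n$ chosen to be $\Omega(\estGap^2 g_0)$'' is stronger than the hypothesis $n\in \estGap g_0\cdot\mathbb Z$ in the theorem statement.
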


When $\gamma = O(1)$, this lower bound asymptotically matches the cost for \Linear, and is within logarithmic factors of the cost for \LinPow.  Additionally, the lower bound allows for a fair amount of independence between the settings of $\gamma$, $g$ and $n$ for which the lower bound can be proven.  For example, it holds both when $g$ is small compared to $n$ and also when $g$ is large.  Finally, note that the lower bound holds for randomized algorithms, even though both \Linear and \LinPow are deterministic.  Thus, randomization does not offer significant improvements for our problem.

Finally,  in Section \ref{s:simulation}, we provide preliminary simulation results for \Linear and \LinPow. Our results suggest that the hidden constants in the asymptotic notation may be small, and that the scaling behavior for our algorithms' costs matches that predicted by theory.

%%%%%%%%%%%%%%%%%%%%%%%%%%%%%%%%%%%%%
%%%%%%%%%%%%%%%%%%%%%%%%%%%%%%%%%%%%%
%%%%%%%%%%%%%%%%%%%%%%%%%%%%%%%%%%%%%

\subsection{Technical Overview}\label{s:overview} 

Our full proofs are provided in Sections \ref{s:lin-anal}, \ref{s:linPow-anal}, and \ref{s:lower}. Here, we give intuition for our technical results.  As a warm up, we start with our easiest result: the zero-latency case.  Then, we sketch how we build on this result in our asynchronous setting.  Finally, we give intuition for our lower bound.
\medskip

\noindent
\textbf{Zero-Latency.}  Our algorithm for the zero-latency problem is \Linear.  Here, we give a high-level overview of the algorithm and its analysis; our detatiled pseudocode is given in  Section~\ref{s:linear}, and our formal analysis is provided  in Section \ref{s:lin-anal}.  

\Linear is conceptually simple.  Time is partitioned into \defn{iterations}, which end whenever the estimator returns a value of at least $1$ for the estimate of the number of good jobs seen in the iteration so far.  The \currentprice is set to $s$, where $s$ is the number of jobs seen so far in the iteration.

Our analysis of \Linear involves two key steps: (1) establishing upper and lower bounds on the number of iterations as a function of $\estGap$ and $g$  (Lemmas \ref{l:upperIter} and \ref{l:lowerIter}); and (2) deriving an upper  bound on the number of good jobs in any iteration (Lemma \ref{l:upperGood}).

Next, using (1) and (2), we prove an upper bound on the cost of \Linear.  To build intuition, we consider a general class of \currentprice setting algorithms where the charge to the $s$-th job in the iteration is $s^{\algExp}$, for any constant $\algExp \geq 0$.  We prove that $\algExp = 1$ gives the best algorithmic cost as a function of $B$, $g$ and $\estGap$.

Our upper bound consists of four exchange arguments~\cite{kleinberg2006algorithm}, which will be over the ordering of good and bad job requests.  Our first two exchange arguments lower bound $B$ and the next two upper bound the algorithm cost, {\boldmath{$A$}}.  We begin by lower bounding $B$.  To do so, our first exchange argument shows that $B$ is minimized when, in each iteration, the bad jobs occur before the good.  Our second exchange argument shows that, subject to this constraint, $B$ is minimized when the bad jobs are distributed as evenly as possible across the iterations.  From these two exchange arguments, a lower bound on $B$ follows via an integral lower bound and basic algebra (see Lemma \ref{l:costAdv}).

We next upper bound $A$.  To do so, we make two  observations: (1) within an iteration, $A$ is maximized by putting all good jobs at the end; and (2) without loss of generality, we can assume that the iterations are sorted in decreasing order by the number of bad jobs they contain.  Then, we use a third (simple) exchange argument to show that $A$ is maximized when good jobs are ``packed left'': packed as much as possible in the earlier iterations, which have the higher number of bad jobs.  
 
Finally, we use the fourth (harder) exchange argument to show that $A$ is maximized when {\it bad} jobs are ``packed left'' when $\algExp \geq 1$, and are evenly spread when $\algExp < 1$.  To show this, for an iteration $i$, we let $f(b_i)$ be the cost of the $i$-th iteration when there are $b_i$ bad jobs at the start.  Then, let $\delta(b_i) = f(b_i +1) - f(b_i)$ be the change in the cost of iteration $i$ when we add $1$ bad job.  Then the change in cost when there is an exchange of one bad job from iteration $k$ to iteration $i$ for $k>i$ will be $\delta(b_i) - \delta(b_k-1)$.

How do we show that $\delta(b_i) - \delta(b_k-1)) \geq 0$?  If we relax $b_i$ so it can take on values in the real numbers, then $f(b_i)$ becomes a continuous, differentiable function. Thus, we can use the mean value theorem (MVT). In particular, the MVT shows that $\delta(b_i)$ equals $f'(x)$ at some value of $x \in [b_i, b_{i+1}]$. Then, upper and lower bounding the $\delta$ values reduces to bounding $f'$ in the appropriate range.  This is a simpler problem because $f'$ is a monotonically non-decreasing (non-increasing) function when $\algExp \geq 1$ ($\algExp < 1$, respectively).  Once we have this exchange argument in hand, we can calculate the algorithmic costs (see the proofs of Lemma \ref{l:CostAlgRB} and Theorem \ref{t:Linear} in Section \ref{s:lin-anal} for details). 
\medskip

\noindent\textbf{Asynchronous.} If we naively use \Linear for the asynchronous model, costs can be high.  For simplicity, assume that $\estGap = \Theta(1)$ and $g = \Theta(M\sqrt{B})$.   Then, the adversary can insert $\sqrt{B}$ bad jobs initially, causing the \currentprice in \Linear to increase $\sqrt{B}$ times.  During this time $ \Theta\left(\min\left(g,M \sqrt{B} \right)\right)$ good jobs join the system, all of them are returned without service repeatedly, as the \currentprice increases, until the \currentprice reaches its maximum value of $\Theta(\sqrt{B})$.  Next, all $\Theta\left(\min(g, M \sqrt{B})\right)$ of these good jobs wind up paying a fee equal to $\Theta(\sqrt{B})$.  Adding in the cost to service jobs and the cost for any additional good jobs that enter after the bad, we have a total cost for the \Linear of $\Theta\left(\min \left(g+1, M \sqrt{B+1} \right) \sqrt{B+1} + g \right)$. 

How can we improve this?  One key idea is to reduce the number of times that the server increases the \currentprice.  Our second algorithm, \LinPow does just that: (1) the server sets the \currentprice to the largest power of $2$ less than the total number of jobs in the current iteration; and (2) clients double their fee whenever their job request is returned.  

To compare \LinPow with \Linear, suppose $\estGap = O(1)$.  Then, it is not hard to see that in any iteration $i$ where the adversary spends $B_i$, the maximum \currentprice  is $O(\sqrt{B_i})$ (see Lemma \ref{l:maxRBThresh}).  Thus, we can argue that any job is returned $O(\log B)$ times, and so pays a total of $O(\sqrt{B})$. 

But the above argument gives a poor bound on fees paid by the clients, since it multiplies all such fees from the \Linear analysis by a $\sqrt{B}$ factor.  To do better, we need a tighter analysis for fees.  But how do we proceed when a single job can be refused service over many iterations because it keeps learning the \currentprice too late?   

The first key idea is to partition time into epochs, for the purpose of analysis only.  An \defn{epoch} is a period of time ending with $2\Delta$ seconds during which the \currentprice never increases.  Then, several facts follow.  First, every good job is serviced within $2$ epochs.  Second, we can bound the total fees that all good jobs pay during epoch $i$ to be essentially $O(M (\sqrt{B_{i-1}} + \sqrt{B_{i}}))$, where $B_j$ is the amount spent by the adversary in epoch $j$  (see Lemma \ref{l:overpayEpoch}).  Third, an epoch has overpaying jobs only if it or its preceding epoch spans the beginning of an iteration (see Lemma \ref{l:nonZeroEpoch}).  This last fact allows us to bound the total number of epochs that add to the overpayment amount using our previous bounds on the number of iterations.  Putting all these facts together, we obtain a total cost for \LinPow that is only $\tilde{O} \big(\min \big(g+1, M \sqrt{g+1},M \sqrt{B+1} \big)\sqrt{B+1}$  $+ g \big)$  (see Section \ref{s:linPow-anal}) versus the cost of \Linear computed above of $\Theta(\min (g+1, M \sqrt{B+1} ) \sqrt{B+1} + g )$.  Ignoring logarithmic factors, the cost of \LinPow may be less than \Linear by a factor of $\sqrt{g+1}/M$.  As an example, when $M=O(1)$ and $B = \Theta(g)$, \LinPow has cost $\tilde{O}(g)$, and \Linear has cost $\Theta(g^{3/2})$.\medskip

\noindent\textbf{Lower Bound.} We prove a lower bound for the zero-latency case (Section \ref{s:lower}). This bound directly holds for our harder, asynchronous problem variant.

In our lower bound, for given values of $\estGap$, $g_0$ and $n$, we design an input distribution that (1) sets $g$ to  $g_0/\estGap$ with probability $1/2$, and sets $g$ to $\estGap g_0$ otherwise; (2) distributes $g$ good jobs uniformly but randomly throughout the input; and (3) creates an estimator that effectively hides both the value of $g$ and also the location of the good jobs, and always has estimation gap at most $\estGap$ on the input (see Section \ref{s:adv} and Lemma \ref{l:estim} for details).

The key technical challenge is to set up a matrix in order to use Yao's minimax principle~\cite{yao:probabilistic} to prove a lower bound on the expected cost for any randomized algorithm.  In particular, we need a game-theory payoff matrix where the rows are deterministic algorithms, and the columns are deterministic inputs.  A natural approach is for the matrix entries to be the ratios $A/\sqrt{B g}$, where $A$ is the cost to our algorithm.  Unfortunately, this fails to achieve our goal since this would yield $E[A/\sqrt{Bg}]$ and we need $E[A]/E[\sqrt{Bg}]$, which are not generally equal.  Some lower bound proofs can circumvent this type of problem.  For example, lower bound proofs on competitive ratios of online algorithms circumvent the problem because the offline cost does not vary with the choice of the deterministic algorithm, i.e. from row to row. So the expected offline cost can be factored out and computed separately (see~\cite{borodin2005online}, Chapter 5).  Unfortunately, this does not work for us since our random variables $A$ and $B$ are interdependent, and vary across both rows and columns.

To address this challenge, we first set the matrix entries to $E(A) - \sqrt{E(B) g \estGap})$, and then prove that the minimax for such a payoff matrix has expected value at least $0$.  Then, we use Jensen's inequality and linearity of expectation to show that the minimax is also at least $0$ when the matrix entries are $E(A - \sqrt{B g \estGap})$ (see Lemma \ref{l:lbDeterm}).  Next, using Yao's principle, we can show the maximin is nonnegative, and thereby obtain a key inequality for $A$ and $B$.   Once we have this inequality, we can use algebra to establish our lower bound for any randomized algorithm (see Theorem \ref{t:main-lower}).  

We believe that this new approach -- using Yao's principle on a matrix whose entries are $X - Y$ in order to bound the ratio $E(X)/E(Y)$, and thereby prove some target lower bound of $E(Y)$ -- may have broader applications to proving lower bounds for resource competitive algorithms (see our expanded discussion of related work in Section \ref{s:related}).

\subsection{Other Related Work} \label{s:related}

A preliminary version of this work appeared at the {\it 32nd International Colloquium on Structural Information and Communication Complexity (SIROCCO)} \cite{bankrupting:chakraborty}. Here, we present the full technical analysis, along with an in-depth description and intuition for our results, an expanded discussion of related work, and preliminary findings from our simulations.
In addition to the discussion given previously in Section \ref{s:model-discussion}, we summarize additional related results here.

\medskip

\noindent{\bf Game Theory.} Many results analyze security problems in a game theoretic setting.  These include two-player security games between an attacker and defender, generally motivated by the desire to protect infrastructure (airports, seaports, flights, networks)~\cite{tambe2011security,avenhaus2002inspection,von1991recursive} or natural resources (wildlife, crops)~\cite{fang2015security,fang2017paws}.  Game theoretic security problems may include multiple players, and may also have a mix of Byzantine, altruistic and rational (BAR) players~\cite{clement2007theory,clement2008bar}.  

Specific to the challenge of mitigating DoS attacks, there are several results that apply game theory (for examples, \cite{wu2010modeling,noureddinerevisiting,SPYRIDOPOULOS201339,fallah2008puzzle,dingankar2007denial,he:game}).   In contrast to these results, our approach does not assume the adversary, server or clients are rational.  Our main results show how to minimize the server and clients costs \emph{as a function of the adversarial cost}.  This is different than optimizing most typical utility functions for the defender.  However, there are two game theoretic connections.  First, when there exists an algorithm for the defender, as in this paper, where the defender pays some function $f$ of the adversarial cost $B$, and $f(B) = f(0) + o(B)$, it is possible to solve a natural zero-sum, two-player attacker-defender game. The solution of this game is for the attacker to set $B = 0$ --- essentially to not attack --- in which case the defender pays a cost of $f(0)$, which in our case is $O(\gamma^3 g)$  (see Section 2.6 of~\cite{gupta2020resource} for details). Second, our paper suggests, for future work, the game-theoretic problem of refining our current model and algorithms in order to build a mechanism for clients and a server that are all selfish-but-rational (Section~\ref{s:conclusion-future-work}).

\smallskip

\noindent\textbf{General Resource Burning.}  Starting with the seminal 1992 paper by Dwork and Naor~\cite{dwork:pricing}, there has been significant academic research, spanning multiple decades, on using resource burning (RB) -- the verifiable expenditure of resources -- to address many security problems; see surveys~\cite{alifoundations,gupta2020resource}.  Security algorithms that use resource burning arise in the domains of wireless networks~\cite{gilbert:sybilcast}, peer-to-peer systems~\cite{li:sybilcontrol,borisov:computational}, blockchains~\cite{lin2017survey}, and e-commerce~\cite{gupta2020resource}. Our algorithm is purposely agnostic about the specific resource burned, which can include computational power~\cite{wang:defending}, bandwidth~\cite{walfish2010ddos},  computer memory~\cite{abadi2005moderately}, or human effort~\cite{von2003captcha,oikonomou2009modeling}.  In the DoS setting, the functionality of puzzle generation and solution verification can be hardened against attack by outsourcing to a dedicated service~\cite{waters:new}; leveraging the Domain Name System \cite{Lee:DDD}; or using routers \cite{tourani2020persia}. \smallskip

\noindent{\bf Resource-Competitive Algorithms.} %\label{sec:rc-algs}
There is a growing body of research on security algorithms whose costs are parameterized by the adversary's cost.  Such results are  referred to as {\it resource-competitive}~\cite{Bender:2015:RA:2818936.2818949}.  Specific  results in this area address: communication on a broadcast channel \cite{gilbert:making,gilbert:near,king:conflict,chen:broadcasting}, contention resolution~\cite{bender:how,bender:fully}, interactive communication~\cite{ICALP15,aggarwal2016secure}, hash tables \cite{chakraborty:defending}, bridge assignment in Tor~\cite{zamani2017torbricks}, and the Sybil attack~\cite{Gupta_Saia_Young_2021,GUPTA202389,pow-without}. Our result is the first in this area to use an estimator and incorporate the estimation gap into the algorithmic cost.\smallskip

\noindent{\bf Algorithms with Predictions.}  Algorithms with predictions is a new research area which seeks to use predictions to achieve better algorithmic performance.  In general, the goal is to design an algorithm that performs well when prediction accuracy is high, and where performance drops off gently with decreased prediction accuracy. Critically, the algorithm has no a priori knowledge of the accuracy of the predictor on the input.

The use of predictions offers a promising approach for improving algorithmic performance.  This approach has found success in the context of contention resolution~\cite{gilbert:contention}; bloom filters~\cite{mitzenmacher:model}; online problems such as job scheduling~\cite{PurohitSK18,lattanzi2020online,DBLP:conf/innovations/ScullyGM22}; and many others.  See surveys by Mitzenmacher and Vassilvitskii~\cite{roughgarden2021,mitzenmacher:algorithms-acm}.  Our estimation gap for the input, $\estGap$ is related to a accuracy metric for a predictor of job lengths in~\cite{DBLP:conf/innovations/ScullyGM22}.  

Our estimator differs from predictions in that it never returns results predicting the future, it only estimates the number of good jobs in past intervals.

%%%%%%%%%%%%%%%%%%%%%%%%%%%%%%%%%%%%%%%%%%%%%%
%%%%%%%%%%%%%%%%%%%%%%%%%%%%%%%%%%%%%%%%%%%%%%
%%%%%%%%%%%%%%%%%%%%%%%%%%%%%%%%%%%%%%%%%%%%%%

\section{Our Algorithms}\label{s:algorithms}

In Section~\ref{s:linear}, we formally define the algorithm \Linear; and in Section~\ref{s:linPow}, we define the algorithm \LinPow.  As described in Section~\ref{s:problem}, both of our algorithms make use of an estimator. So, in Section~\ref{s:example-scenario}, we give an example of the type of estimator that might be used by these algorithms.  Next, in Section~\ref{s:scenarios}, to build intuition, we give some example scenarios for our algorithms. 
 
\subsection{Algorithm \Linear}\label{s:linear}
Our first algorithm, \Linear, addresses our zero-latency problem variant. The server partitions the servicing of jobs into \defn{iterations}, which end when $\hat{g}(\intvl) \geq 1$, where $\intvl$ is the time interval in the current iteration.  The \currentprice, \defn{\thresh},  for the next job is set to $s + 1$, where $s$ is the number of jobs serviced so far in the iteration. Since there is no latency, good jobs will always send in a fee with value equal to \thresh, thus jobs with lower value solutions can be ignored. 
 Pseudocode is given in Figure~\ref{alg:linear}.  
Again, because there is no latency, the client's action is just to either send in the job with attached fee equal to the current value of \thresh, or else drop the job. Every message is presumed to be received at a unique point in time; if not, the server can break ties arbitrarily.

%%%%%%%%%%%%%%%%%%%%%%%%%%%%%%%%%%%%%
%%%%%%%%%%%%%%%%%%%%%%%%%%%%%%%%%%%%%
\begin{figure}[t!]
\centering
\begin{tcolorbox}[standard jigsaw, opacityback=0]
\begin{minipage}[h]{0.97\textwidth}
\noindent{\textsc{\large \Linear }}
\medskip

\noindent{\bf Server}: \\
Repeat forever: 
\begin{enumerate}
\item If $\hat{g}(\intvl) \geq 1$, where $\intvl= (t,t']$,  $t$ is the start time of the current iteration, and $t'$ is the current time, then begin a new iteration.
\item $\thresh \leftarrow s+1$, where $s$ is the number of jobs serviced so far in this iteration. If $\thresh$ has changed, send the new value to all clients.
\item If the next job has a \fee at least equal to $\thresh$ then service it, otherwise do not service and reply with $\thresh$.  
\end{enumerate}
\smallskip
\noindent{\bf Client:} $x \leftarrow 1$.\\ Until the job is serviced, do:
 
\begin{enumerate}
\item Send the server a job request with fee $x$.
\item $x\leftarrow$ most-recent \price  received from the server.
\end{enumerate}

\end{minipage}
\end{tcolorbox}
\vspace{-10pt}\caption{Pseudocode for \Linear.}
\label{alg:linear}  
\end{figure}

%%%%%%%%%%%%%%%%%%%%%%%%%%%%%%%%%%%%%
%%%%%%%%%%%%%%%%%%%%%%%%%%%%%%%%%%%%%

\subsection{Algorithm \LinPow}\label{s:linPow}

We next describe a second algorithm, \LinPow, that addresses the asynchronous variant of our problem; see Figure~\ref{alg:linPow} for the pseudocode. 
\smallskip

\noindent
\textbf{\large Server.} The server operates in \defn{iterations}, which end when $\hat{g}(\intvl) \geq 1$, where $\intvl$ is the time interval elapsed in the iteration. 

In each iteration, the \currentprice, $\thresh$, is set to $2^{\lfloor\log (s+1)\rfloor}$, where $s$ is the number of jobs serviced so far in the iteration.  If the fee attached to the next job is at least $\thresh$, the job is serviced.  Otherwise, the job is not serviced, and the server sends a message with the value $\thresh$ to the client that sent the job. 

\smallskip
\noindent
\textbf{\large Client.}  
Initially, each client sets a local $x$ variable to $1$.  The client sends the server a fee of $x$ with its job.  When receiving a message from the server with a $thresh$ value greater than $x$, it sets $x$ to the value received, and re-sends the job with a fee of $x$, or chooses to drop the job.

%%%%%%%%%%%%%%%%%%%%%%%%%%%%%%%%%%%%%
%%%%%%%%%%%%%%%%%%%%%%%%%%%%%%%%%%%%%
\begin{figure}[t!]
\centering
\begin{tcolorbox}[standard jigsaw, opacityback=0]
\begin{minipage}[h]{0.97\textwidth}
\noindent{\textsc{\large \LinPow }}
\medskip

\noindent{\bf Server}:\\
Repeat forever:
\begin{enumerate}
\item If $\hat{g}(\intvl) \geq 1$, where $\intvl= (t,t']$,  $t$ is the start time of the current iteration, and $t'$ is the current time, then begin a new iteration.
\item $\thresh \leftarrow 2^{\floor{\lg (s+1)}}$, where $s$ is the number of jobs serviced in this iteration.
\item If the next job has a fee at least equal to \thresh, then service it. Otherwise, do not service the job; instead send a message with the value \thresh to the client that sent the job.
\end{enumerate}
\smallskip
\noindent{\bf Client:} $x \leftarrow 1$.\\ Until the job is serviced, do:
\begin{enumerate}
\item Send the server a fee of $x$ with the job.
\item $x\leftarrow$ maximum \price ever received from the server.
\end{enumerate}

\end{minipage}
\end{tcolorbox}
\vspace{-10pt}\caption{Pseudocode for \LinPow.} 
\label{alg:linPow} 
\end{figure}

\section{An Estimator with Broad Application}\label{s:example-scenario} \label{s:estimator}
 
Our main goal is to design algorithms that harness the power of estimators, rather than to create the estimators themselves. However, to convey the broad applicability of our results, we consider various distributions for the arrival of good jobs that correspond with real-world observations and established traffic models.  We describe an example estimator, with the properties defined in Section~\ref{s:problem}, which we show has small $\estGap$ for many types of distributions.  We then illustrate executions of both \Linear and \LinPow using this estimator.

\subsection{Poisson Estimator} \label{s:PoissonEstimator}

Assume good jobs are generated via a Poisson distribution with parameter $\lambda$, so that the expected interval length between good jobs is $1/\lambda$. Poisson job arrivals are one of the most commonly used statistical models, and are consistent with empirical findings~\cite{liu2001traffic,cao2003internet}.

For simplicity, we assume exact knowledge of the Poisson parameter $\lambda$, but a constant factor approximation suffices for the analysis below.  For an interval $\intvl$, let $\ell(\intvl)$ be the length of $\intvl$.  Then, for any interval $\intvl$, set
$$\hat{g}(\intvl) = \ell(\intvl)/ \lambda$$

\subsubsection{Bounding \texorpdfstring{{\boldmath{$\gamma$}}}{gamma}}\label{sec:bounding-gamma}
We now show that \whp in $g$, i.e. probability at least $1-1/g^k$ for any positive $k$, $\estGap = O(\log g)$ for the Poisson estimator.  This holds for \emph{all} distributions of bad jobs including: no bad jobs; more bad jobs than good; bad jobs uniformly distributed; and bad jobs that are clustered in any manner.  

As a special case, consider when the adversary inserts bad jobs using the same Poisson distribution as the good jobs. Then, it is impossible to differentiate a good job from a bad job with better than a (fair) coin toss.  Thus, the estimation gap remains $O(\log g)$ even though we can not classify jobs as good or bad. So, estimation can be easier than classification.\medskip

\noindent{\bf Analysis.} To formally bound $\gamma$, consider any input distribution with $g$ good jobs generated by the Poisson process.  We show that \whp in $g$, $\hat{g}$ ensures that $\estGap=O(\log g)$.

%We show that \whp in $g$, i.e. probability at least $1-1/g^k$ for any positive $k$, $\hat{g}$ ensures that $\estGap=O(\log g)$.
Recall that $g(\intvl)$ is the number of good jobs generated during $\intvl$.  We rely on the following fact which holds by properties of the Poisson process, Chernoff and union bounds.  In the following \whp means with probability of error that is polynomially small in $g$.

\begin{fact} \label{f:contigPoisson}
For some constant $C$, \whp for any interval $\intvl$: 
\begin{enumerate}
    \item If $g(\intvl) < C \ln g$, then $\ell(\intvl) = O((1/\lambda) \log g)$
    \item If $g(\intvl) \geq C \ln g$, then $\ell(\intvl) = \Theta((1/\lambda) g(\intvl))$.
\end{enumerate}
\end{fact}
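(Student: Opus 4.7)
The plan is to reduce the statement to concentration for sums of i.i.d.\ Exponential$(\lambda)$ random variables (i.e.\ Gamma/Erlang tail bounds) applied over a discrete family of canonical intervals whose endpoints are arrival times, and then transfer the bound to arbitrary intervals via a two-sided sandwich. Let $T_1 < \cdots < T_g$ denote the arrival times of the good jobs, set $T_0 = 0$, and let $X_i = T_i - T_{i-1}$, which by memorylessness of the Poisson process are i.i.d.\ Exponential$(\lambda)$. For any interval $\intvl = (a,b]$ that contains arrivals $T_i, \ldots, T_{i+r-1}$ with $r = g(\intvl) \geq 1$, I would use the sandwich
\[
  T_{i+r-1} - T_i \;\leq\; \ell(\intvl) \;\leq\; T_{i+r} - T_{i-1},
\]
which pinches $\ell(\intvl)$ between an Erlang sum of $r-1$ consecutive gaps below and an Erlang sum of $r+1$ consecutive gaps above. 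Boundary cases (interval before $T_1$, past $T_g$, or containing no arrival) are handled by appending a virtual Exponential$(\lambda)$ gap and arguing identically.

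Next I would invoke the standard Chernoff bound: writing $S_m$ for a sum of $m$ consecutive gaps, $\lambda S_m \sim \Gamma(m,1)$ with mean $m$, so
\[
  \Pr\bigl[\lambda S_m \geq m + t\bigr] \leq \exp\bigl(-\Omega(\min(t, t^2/m))\bigr),
  \qquad
  \Pr\bigl[\lambda S_m \leq m/2\bigr] \leq \exp\bigl(-\Omega(m)\bigr).
\]
Choosing $t = c \ln g$ for a constant $c$ large enough (depending on the desired polynomial probability) makes the upper tail at most $g^{-(c'+2)}$, and choosing $C$ large enough makes the lower tail at most $g^{-(c'+2)}$ whenever $m \geq C \ln g$. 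A union bound over the $O(g^2)$ index pairs $(i,m)$ with $0 \leq i \leq i+m \leq g+1$ then shows that, with probability at least $1 - g^{-c'}$, every partial sum simultaneously satisfies $S_m \leq (m + c \ln g)/\lambda$, and also $S_m \geq m/(2\lambda)$ whenever $m \geq C \ln g$.

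Conditioned on this event, both parts follow from the sandwich. For part (1), if $r < C \ln g$, then $\ell(\intvl) \leq S_{r+1} \leq (r+1+c\ln g)/\lambda = O((\ln g)/\lambda)$. For part (2), if $r \geq C \ln g$ with $C$ large, then $r + c\ln g = \Theta(r)$, so the upper branch gives $\ell(\intvl) \leq S_{r+1} = O(r/\lambda)$, while the lower branch gives $\ell(\intvl) \geq S_{r-1} \geq (r-1)/(2\lambda) = \Omega(r/\lambda)$, yielding $\ell(\intvl) = \Theta(g(\intvl)/\lambda)$. The step that will need the most care is part (1): a naive argument multiplying $r = \Theta(\ln g)$ arrivals by the per-gap maximum $\Theta((\ln g)/\lambda)$ would lose a $\log g$ factor and only give $O((\log^2 g)/\lambda)$. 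What rescues the stated bound is the Gamma tail's \emph{additive} $\ln g$ slack in the upper deviation, which lets a sum of $\Theta(\ln g)$ Exponentials concentrate at $\Theta((\ln g)/\lambda)$ rather than at worst-case-gap-times-count; getting this additive-versus-multiplicative distinction exactly right is the crux of the argument.
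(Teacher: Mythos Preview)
Your approach is sound and aligns with what the paper indicates (``properties of the Poisson process, Chernoff and union bounds''); the paper does not spell out a proof of this Fact beyond that one line. However, there is one concrete slip in your concentration step.

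You assert that with the single choice $t = c\ln g$ the upper-tail bound $\Pr[\lambda S_m \geq m + t] \leq \exp(-\Omega(\min(t, t^2/m)))$ is at most $g^{-(c'+2)}$ for \emph{every} $m \leq g+1$, and hence that $\lambda S_m \leq m + c\ln g$ holds uniformly. This fails once $m$ is large: for $m \gg (\ln g)^2$ the exponent $\min(t, t^2/m) = (c\ln g)^2/m$ is $o(1)$, so the bound is vacuous. Concretely, $\lambda S_m$ has standard deviation $\sqrt{m}$, so for $m$ of order $g$ the event $\lambda S_m \geq m + c\ln g$ has probability close to $1/2$. This breaks the upper half of part~(2), where you invoke $S_{r+1} \leq (r+1+c\ln g)/\lambda$ for $r$ potentially as large as $g$.

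The fix is immediate and keeps the argument intact: for $m \geq C\ln g$, use the multiplicative upper tail $\Pr[\lambda S_m \geq 2m] \leq e^{-\Omega(m)} \leq g^{-\Omega(C)}$, exactly paralleling the lower-tail bound you already state. Then part~(2) reads $\ell(\intvl) \leq S_{r+1} \leq 2(r+1)/\lambda = O(r/\lambda)$ directly, while part~(1) still uses your additive bound, which \emph{is} valid in the regime $m = O(\ln g)$ where you actually need it. Equivalently, a single uniform bound $\lambda S_m \leq 2m + c\ln g$ holds for all $m \leq g+1$ w.h.p.\ and suffices for both parts.
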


We can now show the following simple fact.
\begin{fact} 
With high probability, the estimator $\hat{g}$ has estimation gap $\gamma = O(\log g)$.
\end{fact}
\begin{proof}
Consider two cases:\smallskip

\noindent {\it Case:} $g(\intvl) < C \ln g$. Then, by Fact~\ref{f:contigPoisson} (1), $\ell(\intvl) = O((1/\lambda) \log g)$, and so $\hat{g}(\intvl) = O(\log g)$.  Thus, there exists some $\estGap$, $\estGap = O(\log g)$ such that 
$$g(\intvl) - \estGap \leq \hat{g}(\intvl) \leq g(\intvl) + \estGap$$
Hence, the estimation gap on this type of interval is $O(\log g)$.\smallskip

\noindent
{\it Case:} $g(\intvl) \geq C \ln g$.  Then by Fact~\ref{f:contigPoisson} (2), \whp, $\ell(\intvl) = \Theta((1/\lambda) g(\intvl))$, and so $\hat{g}(\intvl) = \Theta(g(\intvl))$.  So, there exists some $\estGap$, $\estGap = O(1)$ such that 
$$(1/\estGap) g(\intvl) \leq \hat{g}(\intvl) \leq \estGap g(\intvl)$$
Hence, the estimation gap on this type of interval is $O(1)$. We conclude that for this example input distribution, \whp,  $\hat{g}$ is an estimator with $\estGap = O(\log g)$.
\end{proof}

\medskip

\noindent{\it Estimating  $\lambda$.}  A simple way to estimate $\lambda$ is to use job arrival data from a similar day and time.  To handle possible spikes in usage by clients, one may need to analyze data on current jobs, including: packet header information~\cite{carl2006denial}, geographic client location~\cite{feng2010kapow}, timing of requests \cite{ChakrabortyMM23}, and spikes in traffic rate~\cite{lakhina2004diagnosing}. 

\subsection{An Estimator for Weibull, Gamma, and Distributions with Exponentially Bounded Moments}

The $\gamma = O(\log g)$ result for the Poisson distribution can be generalized to arrival times given by other distributions whose moments do not grow super-exponentially.

In particular, for any $i \geq 1$, let $X_i$ be a random variable giving the time elapsed between the $i-1$ and $i$-th arrival.  Assume that the $X_i$ values are independent and identically distributed.  Then, let $Y_i = X_i - E(X_i)$, and assume that for some real positive $L$ and every positive integer $k$,
$$E(|Y^k_i|) \leq \frac{1}{2} E(Y^2_i)L^{k-2}k!$$

In other words, the above says that the $k$-th moment of $Y_i$ is bounded by an exponential in $k$.  This is true for common arrival time processes such as the Weibull~\cite{cao2004stochastic} and the gamma distributions~\cite{choi1999behavioral} with fixed parameters.

Then set $\rho = 1/E(X_i)$, and we can use an estimator that for any interval $\intvl$, sets
$$\hat{g}(\intvl) = \rho \ell(\intvl)$$

We can again show the following.
\begin{fact} 
With high probability, the estimator $\hat{g}$ has estimation gap $\gamma = O(\log g)$.
\end{fact}

To show this, we use Bernstein's inequality (specifically, the second inequality in \cite{bernstein-inequalities}; see also~\cite{uspensky1937introduction}) to prove a result analogous to Fact~\ref{f:contigPoisson}, but with $1/\lambda$ replaced by $\mu$. Then, the result follows directly.

\subsection{Adversarial Queuing Theory Estimator}

To capture a range of network settings, adversarial queuing theory (AQT) \cite{borodin2001adversarial} parameterizes the behavior of packet arrivals by an injection rate,  $\rho$ where $0<\rho \leq 1$, and a non-negative integer, {\boldmath{$b$}}, that denotes burstiness.  In AQT, a common arrival model is the \defn{leaky-bucket adversary}\cite{cruz-one} where, for any $t$ consecutive slots, at most $\rho t + b$ packets may arrive, for some number $b$ (for examples, see \cite{garncarek:stable,garncarek2018local,anantharamu2010deterministic,chlebus2009maximum,anantharamu2015broadcasting}).  The physical metaphor is that there is a bucket of size $b$ gallons.  Water enters the bucket according to the (adversarial) arrival process and it leaves at a constant rate of $\rho$.  The model requires that $b$ be large enough so that the bucket never overflows.

We define a new variant, the \defn{rain-barrel model}, that both lower bounds and upper bounds the number of arriving good jobs.  In particular, we require that: for any interval of $t$ seconds, between $(1/b) \rho t - b$ and $b \rho t + b$ good jobs arrive.  The physical metaphor is that there is a barrel of size $b$ and an additional reservoir of $b$ gallons that can be added to the barrel at any time.  Water enters the barrel according to the (adversarial) arrival process and leaves at a tunable rate in the range $\rho\cdot[1/b,b]$.  The new requirement is that there is always water available, but that the barrel never overflows.

In such a model, with knowledge of $\rho$, it is easy to design an estimator with estimation gap $\gamma = b$.  In particular, again set $\hat{g}(\mathcal{I}) = \rho \,\ell(\mathcal{I})$. By the definition of the bounded bucket model, this estimator has estimation gap $\gamma = b$.

Notably, the value $b$ need not be known a priori.  Further,  
as with the previous estimators, using a constant factor approximation to $\rho$ will not change the asymptotic value of $\gamma$.

\subsection{Zero Latency and Asynchronous Example Runs}\label{s:scenarios} 

We now illustrate the execution of our zero-latency and asynchronous algorithms under Poisson arrivals, using our first estimator (Section \ref{s:PoissonEstimator}); results are similar for our other two estimators.\smallskip

%We give example runs for our zero-latency and asynchronous algorithms  under Poisson arrivals.

\noindent{\bf Zero-Latency.} First, suppose the estimator is perfectly accurate.  Then, $\hat{g}(\intvl)$ is $0$ during an iteration until a good job arrives.  Once a good job arrives, $\hat{g}(\intvl)$ increases from $0$ to $1$, and a new iteration begins, starting with this new good job.  The value $\thresh$ is set back to $1$, and the good job pays a fee of $1$.  Then, the cost of each good job is $1$ and the cost of each bad job is the number of bad jobs received since the last good job was received. There are exactly $g$ iterations.  If $b$ bad jobs are submitted, the adversary's costs are minimized when the bad jobs are evenly spread over the iterations, so the cost for each bad job ranges from $1$ to $b/g$ in each of the $g$ iterations, for a total cost of $B=\Theta(b^2/g)$.

Now, consider the special case where the arrival of good jobs is a Poisson process with parameter $\lambda$, and we use the example estimator from Section~\ref{s:PoissonEstimator}.  Figure~\ref{f:estimator-example} illustrates this scenario. Here, iterations are demarcated by multiples of $1/\lambda$ and \thresh is reset to $1$ at the end of each iteration. Many bad jobs may be present over these iterations, and there may be more than $1$ good job per iteration (as depicted).

The lengths of time between good jobs are independently distributed, exponential random variables with parameter $\lambda$ (see \cite{mitzenmacher2017probability}, Chapter 8) that is, the probability that $j$ good jobs occur in time $t$ is $e^{(-t\lambda)} \frac{(t\lambda)^j}{j^!}$.
The adversary can decide where to place bad jobs. 
We now consider what happens with \Linear.   When using the example estimator from Section~\ref{s:PoissonEstimator}, for all $i\geq 1$, the $i$-th iteration ends at time $i/\lambda$, and the \currentprice is reset to $1$.
\begin{enumerate} 
\item   If there are no bad jobs, the expected cost of the $i$-th iteration to the algorithm is $O(X_i^2)$ where $X_i$ is the number of good jobs in a period of length $1/\lambda$, and $\sum_i X_i = g$. 
Since $E[X_i^2]= \sum_j  j^2 e^{(-\lambda/\lambda)} \frac{(\lambda/\lambda)^j}{j^!}$, the expected cost to the algorithm for any iteration $i$ is
$(1/e)\sum_j j^2 (1/j!)=O(1)$. 
\item Fix any iteration $i$.  By paying $\Omega(b^2_i)$, the adversary can post $b_i$ bad jobs early on, thereby forcing each good job to pay a fee of at least $b_i+1$.  The algorithm also pays a service cost of $b_i+1$. \vspace{-3pt}
\end{enumerate}

\begin{figure*}[t!]
    \centering
    \includegraphics[width =\textwidth]{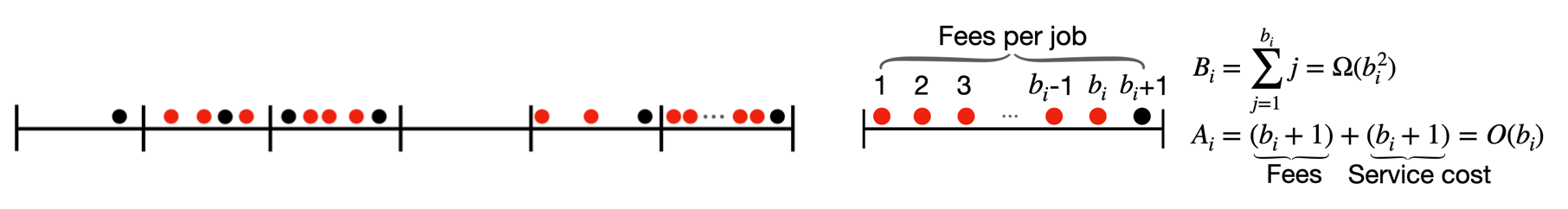}
    \caption{(Left) \Linear with iterations of length $1/\lambda$, using the estimator from the zero-latency example in Section~\ref{s:scenarios}.  Black circles are good jobs, red circles are bad.  Since the estimation gap is small, each iteration contains about $1$ good job.  (Right) Blowup of the last iteration, with $b_i$ bad jobs followed by $1$ good job.  The adversary's cost is $B_i = \Omega(b^2_i)$ and the algorithm's cost is $A_i = O(b_i)$, as described in Section~\ref{s:scenarios}.}
    \label{f:estimator-example} 
\end{figure*}

\noindent
\textbf{Asynchronous.} With communication latency between the server and the clients, jobs may submit multiple fees over time.  If the submitted solution hardness is less than the \currentprice,  the job request is denied.  When this happens, we say that the corresponding job is \defn{bounced} (see Figure~\ref{fig:est-example2}). Alternatively, if the fee exceeds the price, then the job overpays.

Now, assume our example estimator, so that iterations end at the same time as the zero-latency case above.
Further, assume a simple distribution for jobs where exactly $1$ job is generated per iteration at some uniformly distributed time.

\begin{wrapfigure}[17]{r}{.5\textwidth}
\centering
\vspace{-25px}
%\begin{figure}
%\begin{center}
\includegraphics[width=0.45\textwidth]{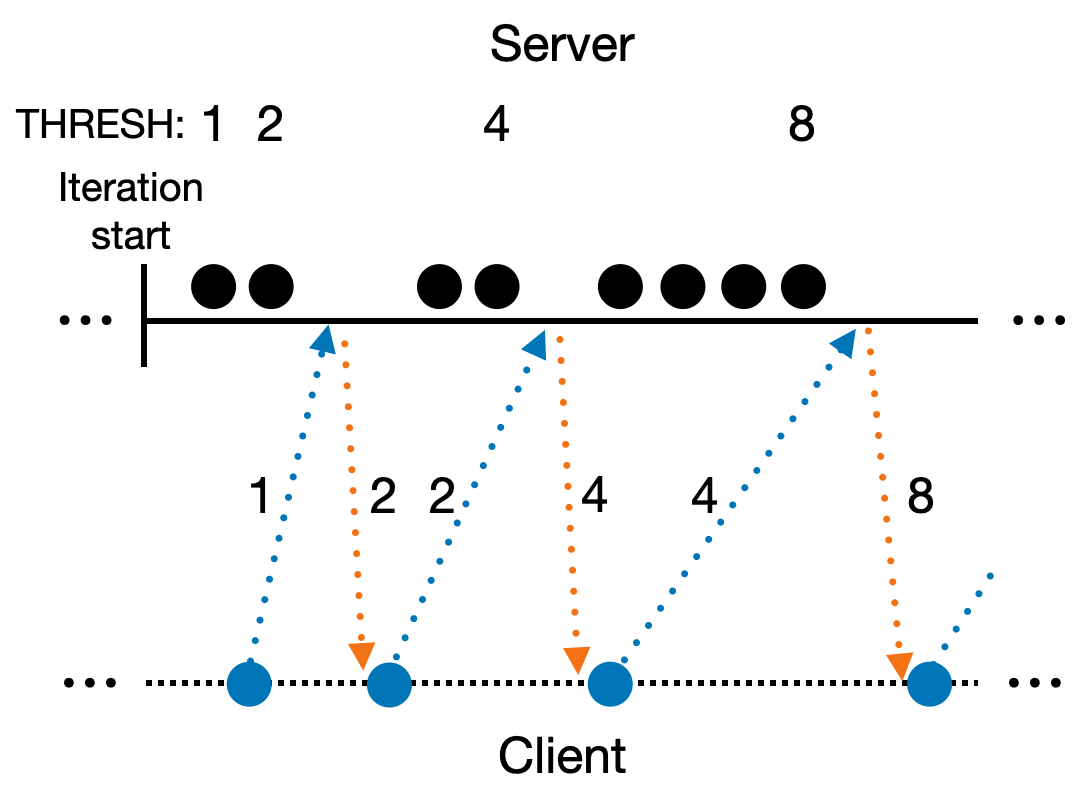}
%\end{center}
\vspace{-8pt}
\caption{The blue circle is a job that is repeatedly bounced, the black circles are other serviced jobs.    Arrows are communication from the server to the client of the $\thresh$ value, and from the client to the server of fees.  The blue job is bounced $3$ times and pays $1+2+4+8$.}\label{fig:est-example2}
%\end{figure}
\vspace{0cm}
\end{wrapfigure}

Note that $M$ is determined by the maximum latency $\Delta$, and $\lambda$. If the latency $\Delta$ is much larger than $1/\lambda$, then the adversary can delay messages so that $M=\Delta/\lambda$ good jobs, bounced from the previous $\Delta/\lambda$ iterations, are received in the current iteration.  The first $M/2$ of these jobs are serviced, driving up the value of $\thresh$ to at least $M/2$.  During this time, the remaining jobs are bounced up to $\log (M/2)$ times.  Thus, the algorithm's total cost is $\Omega((g/M)(M^2)) = \Omega(g M)$.

For these jobs $i=2,...,(M-1)$,  the adversary can cause the $i$-th job to bounce $\lfloor{\lg (M-i)}\rfloor$ times with payments of $2^{j}$, for $j=0,..., \lfloor{\lg (M-i)}\rfloor$ in the current iteration. Thus, the total cost to the algorithm is $\Omega((g/M)(M^2)) = \Omega(g M)$.     

%%%%%%%%%%%%%%%%%%%%%%%%%%%%%%%%%%%%%%%%%%%%%%
%%%%%%%%%%%%%%%%%%%%%%%%%%%%%%%%%%%%%%%%%%%%%%
%%%%%%%%%%%%%%%%%%%%%%%%%%%%%%%%%%%%%%%%%%%%%%

%\section{Analysis}

%In this section, we provde the analysis for \Linear (Section~\ref{s:lin-anal}) and \LinPow (Section~\ref{s:linPow-anal}), and our lower-bound argument (Section \ref{s:lower}).  

%%%%%%%%%%%%%%%%%%%%%%%%%%%%%%%%%%%%%%%%%%%%%%
%%%%%%%%%%%%%%%%%%%%%%%%%%%%%%%%%%%%%%%%%%%%%%
%%%%%%%%%%%%%%%%%%%%%%%%%%%%%%%%%%%%%%%%%%%%%%

\section{Analysis of \Linear}\label{s:lin-anal}

In this section, we formally analyze \Linear.  To give intuition about the choice of cost function, we analyze a general class of algorithms that define iterations like \Linear: algorithms that charge the $i$-th job in the iteration a fee of $i^{\algExp}$, for some positive $\algExp$.  Our analysis shows that \Linear, which sets $\algExp = 1$, is asymptotically optimal across this class of algorithms.  

Let $b$ be the total number of bad jobs, and recall that $g$ is the number of good jobs.   
We note that all good jobs pay a cost of at most $1$ before they learn the current value of \thresh, which incurs a total cost of $O(g)$.  This is asymptotically equal to the total service cost.  Thus, for simplicity, in this section, we only count the costs for the fees submitted when the jobs are serviced.  Throughout, we let $\log(\cdot)$ denote the logarithm base $2$.

%\begin{restatable}{lemma}{upperIter}
\begin{lemma} \label{l:upperIter}
    The number of iterations is no more than $\estGap (g + 1)$.
\end{lemma}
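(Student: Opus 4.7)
The plan is to bound the iteration count by summing the estimator values $\hat{g}(I_i)$ over iteration intervals and then comparing that sum to the estimator's upper bound. Intuitively, each iteration (except possibly the last) contributes at least one unit to $\hat{g}$, while the estimator can accumulate at most $\gamma g + \gamma$ units across the whole execution. The additivity property of $\hat{g}$ is the bridge between these two observations.

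The steps I would carry out, in order, are as follows. First, let $I_1, I_2, \ldots, I_k$ denote the iteration intervals in temporal order. By the server's transition rule in the pseudocode of Figure~\ref{alg:linear}, a new iteration is started precisely when the estimator on the current interval first reaches $1$; so each of the first $k-1$ iterations must satisfy $\hat{g}(I_i) \geq 1$. Second, apply additivity to the disjoint union $\mathcal{J} = \bigcup_{i=1}^{k-1} I_i$ to obtain
\[
k - 1 \;\leq\; \sum_{i=1}^{k-1} \hat{g}(I_i) \;=\; \hat{g}(\mathcal{J}).
\]
Third, apply the estimator's upper bound $\hat{g}(\mathcal{J}) \leq \gamma\, g(\mathcal{J}) + \gamma$ together with the trivial inequality $g(\mathcal{J}) \leq g$ to conclude $\hat{g}(\mathcal{J}) \leq \gamma(g+1)$. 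Chaining these inequalities gives $k - 1 \leq \gamma(g+1)$, from which the lemma follows after accounting for the (possibly incomplete) last iteration.

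The main obstacle is a mild off-by-one issue around the final iteration $I_k$, whose $\hat{g}(I_k)$ need not have reached $1$ by the time the execution ends; that iteration is not captured by the summation above. This is absorbed by the additive slack $+\gamma$ already present in the estimator's upper bound (recall $\gamma$ is a positive integer, so $\gamma \geq 1$), or equivalently handled by the trivial case $g = 0$ in which there is at most a single iteration and the bound $\gamma(g+1) \geq 1$ still holds. Aside from this bookkeeping detail, the argument is a one-step application of additivity together with the upper-bound property of the estimator from the model description.
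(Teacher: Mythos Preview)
Your proposal is correct and takes essentially the same approach as the paper's proof: sum $\hat{g}$ over the iteration intervals, invoke additivity, and compare against the estimator's upper bound $\gamma g + \gamma$. The only difference is that the paper simply asserts $\hat{g}(\intvl_j)\geq 1$ for \emph{every} iteration (implicitly counting only completed iterations) and thereby gets $\ell\leq\gamma(g+1)$ directly, whereas your more cautious treatment of a possibly incomplete final iteration yields $k-1\leq\gamma(g+1)$; your claim that the extra $+1$ is ``absorbed by the additive slack $+\gamma$'' is not quite right as stated (that slack was already spent in bounding $\hat{g}(\mathcal{J})$), but the discrepancy is a harmless constant that does not affect any downstream asymptotic use.
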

%\label{l:upperIter}
%\end{restatable}
\begin{proof}
    Let $\ell$ be the total number of iterations.  For all $j$, $1 \leq j \leq \ell$, let $\intvl_j$ be the time interval spanning the $j$-th iteration, i.e the interval $(t_0,t_1]$ where $t_0$ is the time the $j$-th iteration begins and $t_1$ is the time the $j$-th iteration ends.

By the algorithm, $\hat{g}({\intvl_j}) \geq 1$ for all $j$.  Let $\intvl_{tot} = \cup_{1 \leq j \leq \ell} \intvl_j$.  Using the additive property of $\hat{g}$, we get $\hat{g}(\intvl_{tot}) = \sum_{1 \leq j \leq \ell} \hat{g}(\intvl_j) \geq \ell$.  But by our definition of $\estGap$, we know that $\hat{g}(\intvl_{tot}) \leq \estGap (g(\intvl_{tot}) + 1) = \estGap(g + 1)$.  Thus, $\ell \leq \estGap(g+1)$.
\end{proof}

\begin{lemma} \label{l:upperGood}
    The number of good jobs serviced in any iteration is no more than $\estGap(\estGap+1)+1$.
\end{lemma}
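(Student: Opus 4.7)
The plan is to combine the iteration's termination rule with the lower-bound side of the estimator's guarantee. Fix any iteration $j$, let $\intvl_j = (t_j, t_{j+1}]$ be its time interval, and consider an arbitrary sub-interval $\intvl^- = (t_j, t]$ with $t < t_{j+1}$. By the server's rule in Step 1 of the pseudocode, if $\hat{g}(\intvl^-)$ were already at least $1$, the iteration would have ended by time $t$; hence $\hat{g}(\intvl^-) < 1$. Plugging this into the estimator's guarantee $g(\intvl^-)/\estGap - \estGap \leq \hat{g}(\intvl^-)$ and rearranging gives $g(\intvl^-) < \estGap(\estGap+1)$. Since $g(\intvl^-)$ and $\estGap$ are integers, we in fact have $g(\intvl^-) \leq \estGap(\estGap+1) - 1$.

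It remains to account for what happens at the terminating instant $t_{j+1}$. The model specifies that no two events occur simultaneously, so at most one good job can be located exactly at time $t_{j+1}$. Letting $t \to t_{j+1}^-$ in the bound above and adding at most one for a possible good job at the endpoint, we obtain $g(\intvl_j) \leq \estGap(\estGap+1)$. Because \Linear operates without latency, clients always submit jobs with an RB solution meeting the current threshold, so every good job that arrives during $\intvl_j$ is serviced within iteration $j$; the number serviced therefore equals $g(\intvl_j)$, yielding the claimed bound.

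The only real subtlety is the boundary behavior at $t_{j+1}$: depending on the exact convention, a triggering good job could be counted either in iteration $j$ or in iteration $j+1$. The ``at most one simultaneous event'' assumption produces exactly one unit of slack, which is precisely what closes the gap between the strict inequality $g(\intvl^-) < \estGap(\estGap+1)$ obtained before the iteration ends and the weak inequality $g(\intvl_j) \leq \estGap(\estGap+1)$ stated in the lemma. No further case analysis or extra bookkeeping is needed.
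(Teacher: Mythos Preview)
Your proof is correct and follows essentially the same approach as the paper: both arguments restrict to a sub-interval that stops just short of the iteration's end, use $\hat{g} < 1$ on that sub-interval together with the lower side of the estimator guarantee to get $g < \estGap(\estGap+1)$, and then pick up at most one additional good job at the terminating instant, invoking integrality to close the gap. The only cosmetic difference is that the paper fixes an explicit $\epsilon$ (half the gap between the last two job arrivals) to define the shortened interval, whereas you phrase the same move as taking $t \to t_{j+1}^-$; the paper's version is slightly more concrete, but the content is identical.
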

\begin{proof}
Fix some iteration that starts at time $t_0$ and ends at time $t_1$.  Let $\intvl = [t_0,t_1]$ be the interval associated with that iteration.  Let $\intvl' = [t_0,t_1-\epsilon]$ where $\epsilon \geq 0$ is half the amount of time elapsed between $t_1$ and the time the server receives the second to last job request of the iteration.    First, by construction of $\intvl'$, we have
\begin{equation} \label{e:1}
   g(\intvl') \geq g(\intvl) - 1 
\end{equation}

Then, by the $\estGap$ estimation gap for $\hat{g}$ on the input, we know that $g(\intvl') \leq \estGap (\hat{g}(\intvl') + \estGap)$.  Plugging in the fact that $\hat{g}(\intvl') < 1$, which holds by the properties of \Linear, we get:
\begin{equation} \label{e:2}
 g(\intvl') < \estGap + \estGap^2
\end{equation}
Putting together equations~\ref{e:1} and~\ref{e:2} we get that $g(\intvl) \leq \estGap + \estGap^2 + 1$.
\end{proof}

\begin{lemma} \label{l:lowerIter}
    The number of iterations is at least $\max\left(1, \frac{g-\estGap^2}{3\estGap^4}\right)$.  
\end{lemma}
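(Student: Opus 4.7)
The plan is to derive the lower bound almost directly from Lemma~\ref{l:upperGood}. That lemma caps the number of good jobs appearing in any single iteration's interval by $\estGap(\estGap+1)$. Since the iterations partition the timeline, every good job in the input lies in exactly one interval $\intvl_j$, so
\[
g \;=\; \sum_{j=1}^{\ell} g(\intvl_j) \;\leq\; \ell\cdot \estGap(\estGap+1),
\]
which immediately yields $\ell \geq g/(\estGap^2+\estGap)$.

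To match the form stated in the lemma, I would split into two cases. If $g \leq \estGap^2$, then $(g-\estGap^2)/(3\estGap^4) \leq 0$, so $\max(1,\cdot)=1$; it then suffices to observe that whenever the input contains at least one job the server enters at least one iteration, giving $\ell \geq 1$. (The case $g=0$ with no bad jobs at all is trivially outside the regime of interest.) In the remaining case $g > \estGap^2$, I would translate the bound $\ell \geq g/(\estGap^2+\estGap)$ into the form in the lemma by elementary algebra: for $\estGap \geq 1$, the inequality $3\estGap^4 \geq \estGap^2+\estGap$ holds (since $3\estGap^3 \geq \estGap+1$), and combining this with $g - \estGap^2 \leq g$ gives
\[
3\estGap^4 \cdot g \;\geq\; (\estGap^2+\estGap)\,g \;\geq\; (\estGap^2+\estGap)(g-\estGap^2),
\]
which rearranges to $g/(\estGap^2+\estGap) \geq (g-\estGap^2)/(3\estGap^4)$, as required.

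I do not expect any substantive technical obstacle. The only point that requires a moment of care is whether Lemma~\ref{l:upperGood} applies to \emph{all} good jobs in an iteration, rather than only to those that are successfully serviced; inspecting that lemma's proof shows that it bounds $g(\intvl)$ via the estimation-gap inequality, which counts every good job generated in the interval, so the summation $g = \sum_j g(\intvl_j)$ above is legitimate. The stated bound is in fact looser than the $g/(\estGap^2+\estGap)$ bound actually produced by this argument; the weaker form $(g-\estGap^2)/(3\estGap^4)$ is presumably chosen so that it plugs cleanly into the cost analyses that follow.
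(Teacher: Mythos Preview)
Your argument is correct, and in fact it is more direct than the paper's. The paper does not sum the $g(\intvl_j)$ values outright; instead it passes through the estimator. Concretely, the paper first converts Lemma~\ref{l:upperGood} into a bound on $\hat{g}(\intvl_j)$ via the estimation-gap inequality, obtaining $\hat{g}(\intvl_j)\le\estGap^3+\estGap^2+\estGap$; then it uses additivity of $\hat{g}$ to get $\hat{g}(\intvl_{tot})\le(\estGap^3+\estGap^2+\estGap)\,\ell$; and finally it applies the estimation-gap inequality once more in the other direction, $\hat{g}(\intvl_{tot})\ge g/\estGap-\estGap$, to conclude $\ell\ge(g-\estGap^2)/(\estGap^4+\estGap^3+\estGap^2)\ge(g-\estGap^2)/(3\estGap^4)$.

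Your route---summing $g(\intvl_j)$ directly---avoids both estimator conversions and therefore lands on the strictly sharper bound $\ell\ge g/(\estGap^2+\estGap)$ before you relax it to the stated form. The paper's detour through $\hat{g}$ is what produces the extra $\estGap^2$ subtracted in the numerator and the extra $\estGap^2$ factor in the denominator; nothing in the downstream cost analysis actually needs that slack, so your tighter inequality would serve just as well (and your weakening step to match the lemma's exact statement is fine). Your remark about Lemma~\ref{l:upperGood} bounding $g(\intvl)$ rather than only serviced good jobs is the right sanity check; the paper's own proof of Lemma~\ref{l:lowerIter} relies on the same identification $g(\intvl_{tot})=g$, so you are on equal footing there.
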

\begin{proof}
    Let $\intvl_j$ be the interval spanning iteration $j$ for $1 \leq j \leq \ell$, where $\ell$ is the total number of iterations. By Lemma \ref{l:upperGood}, $g(\intvl_j) \leq \estGap(\estGap + 1)+1$ for all $j$.  By the estimation gap for $\hat{g}$ on the input, we get $\hat{g}(\intvl_j) \leq \estGap g(\intvl_j) + \estGap$.  Combining these two inequalities, we get $\hat{g}(\intvl_j) \leq \estGap^3 + \estGap^2 + 2\estGap$.
    
    Let $\intvl_{tot} = \cup_{1 \leq j \leq \ell} \intvl_j$.  Using the additive property of $\hat{g}$, we get 
    $$\hat{g}(\intvl_{tot}) = \sum_{1 \leq j \leq \ell} \hat{g}(\intvl_j) \leq (\estGap^3 + \estGap^2 + 2\estGap) \ell.$$ 
    
    Also, by the estimation gap for $\hat{g}$ on the input, we have $\hat{g}(\intvl_{tot}) \geq g(\intvl_{tot})/\estGap - \estGap$.  Thus, $$\ell \geq (g/\estGap - \estGap)/(\estGap^3 + \estGap^2 + 2\estGap) = (g-\estGap^2)/\estGap(\estGap^3+ \estGap^2 + 2\estGap)$$
which proves the lemma.
\end{proof}

\noindent Recall that $\AdvTotal$ is the cost to the adversary.
\begin{lemma} \label{l:costAdv}
For any positive $\algExp$,
    $\AdvTotal \geq \frac{(b-\estGap (g+1))^{\algExp + 1}}{(\algExp+1)(\estGap (g+1))^\algExp}$.
\end{lemma}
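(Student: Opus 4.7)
The plan is to lower bound $\AdvTotal$ through two exchange arguments followed by an integral estimate, and then plug in the upper bound on the number of iterations from Lemma~\ref{l:upperIter}. Let $\ell$ be the total number of iterations of \Linear, and for each iteration $i \in \{1,\dots,\ell\}$ let $b_i$ be the number of bad jobs it contains, so that $\sum_{i=1}^{\ell} b_i = b$. Because the algorithms in this class charge the $s$-th job serviced in an iteration an RB cost of $s^{\algExp}$, the adversary pays, in iteration $i$, the sum of $s^{\algExp}$ over the positions $s$ occupied by its bad jobs.

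The first step is an \emph{ordering} exchange: with $\ell$ and the $b_i$'s fixed, $\AdvTotal$ is minimized when, inside each iteration, every bad job precedes every good job. If some bad job sits at position $s+1$ while a good job sits at position $s$, swapping them leaves the iteration boundary untouched and changes the bad job's bill from $(s+1)^{\algExp}$ to $s^{\algExp}$, a strict decrease. After exhaustively performing such swaps, $\AdvTotal \geq \sum_{i=1}^{\ell} f(b_i)$, where $f(x) := \sum_{s=1}^{x} s^{\algExp}$. The second step is a \emph{balancing} exchange on the multiset $\{b_i\}$. The forward difference $f(x+1)-f(x) = (x+1)^{\algExp}$ is non-decreasing for $\algExp>0$, so $f$ is discretely convex; whenever $b_i \geq b_j+2$, moving one bad job from iteration $i$ to iteration $j$ changes $\sum_k f(b_k)$ by $(b_j+1)^{\algExp} - b_i^{\algExp} \leq 0$. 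Iterating this swap, the minimum is attained when all $b_i$ differ by at most one, which forces $b_i \geq \lfloor b/\ell\rfloor$ for every $i$, and hence
\[
\AdvTotal \;\geq\; \ell \cdot f\!\left(\lfloor b/\ell \rfloor\right).
\]

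To finish, I would bound $f$ below by an integral and substitute the iteration bound. Since $x \mapsto x^{\algExp}$ is non-negative and non-decreasing,
\[
f\!\left(\lfloor b/\ell \rfloor\right) \;\geq\; \int_{0}^{\lfloor b/\ell \rfloor} x^{\algExp}\,dx \;=\; \frac{\lfloor b/\ell\rfloor^{\algExp+1}}{\algExp+1} \;\geq\; \frac{((b-\ell)/\ell)^{\algExp+1}}{\algExp+1},
\]
using $\lfloor b/\ell\rfloor \geq (b-\ell)/\ell$ (the final bound being vacuous if $b \leq \ell$). Multiplying through by $\ell$ yields $\AdvTotal \geq (b-\ell)^{\algExp+1}/((\algExp+1)\,\ell^{\algExp})$. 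A short derivative calculation shows that $h(\ell) := (b-\ell)^{\algExp+1}/\ell^{\algExp}$ is monotonically decreasing in $\ell$ on $(0,b]$, so applying Lemma~\ref{l:upperIter} to substitute the upper bound $\ell \leq \estGap(g+1)$ only weakens the inequality, producing exactly the stated bound.

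The main obstacle, such as it is, is formalizing the balancing exchange under the integer constraint on the $b_i$: convexity of $f$ on the integers must be used to argue that the unbalanced configurations can be strictly improved, and one must verify that the argument still produces something useful (rather than vacuous) once the floors are replaced by the clean expression $(b-\ell)/\ell$. The ordering exchange, the integral lower bound, and the monotonicity substitution step are then all routine.
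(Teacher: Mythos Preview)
Your proposal is correct and follows essentially the same route as the paper: two exchange arguments (bad-before-good within each iteration, then balancing bad jobs across iterations), an integral lower bound on $\sum_{s=1}^{\lfloor b/\ell\rfloor} s^{\algExp}$, and finally substitution of the iteration bound $\ell\le\estGap(g+1)$ from Lemma~\ref{l:upperIter}. The only cosmetic difference is in the last step: the paper carries out the substitution via a chain of inequalities (separately bounding $b-\ell$ and $(b-\ell)/\ell$), whereas you verify directly that $(b-\ell)^{\algExp+1}/\ell^{\algExp}$ is decreasing in $\ell$; both are equally valid.
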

%an exchange argument shows that 
%: Assume $\estGap g > 0$
\begin{proof}
The cost to the adversary is minimized when the bad jobs in each iteration occur before the good jobs.  Furthermore, this cost is minimized when the bad jobs are distributed as uniformly as possible across iterations. To see this, assume $j$ jobs are distributed  non-uniformly among $\ell$ iterations. Thus, there exists at least one iteration $i_1$ with at least $\ceil{j/\ell}+1$ bad jobs, and also at least one iteration $i_2$ with at most $\floor{j/\ell}-1$. Since the cost function $i^{\algExp}$ is monotonically increasing, moving one bad job from iteration $i_1$ to iteration $i_2$ can only decrease the total cost.

To bound the adversarial cost in an iteration, we use an integral lower bound.  Namely, for any $x$ and $\algExp$, $\sum_{i=1}^x i^{\algExp} \geq 1 + \int_{i=1}^x i^\algExp \,di \geq \frac{x^{\algExp+1}}{\algExp+1}$, where the last inequality holds for $\algExp \geq 0$.  If the number of iterations is $\ell > 0$, we let $x = \floor{b/\ell}$ in the above and bound the total cost of the adversary as:
\begin{align*}
\AdvTotal & \geq \frac{\ell}{\algExp+1} {\floor{\frac{b}{\ell}}^{\algExp+1}}\\
& \geq \frac{\ell}{\algExp+1} \left(\frac{b}{\ell} - 1 \right)^{\algExp+1}\\
& \geq \frac{b-\ell}{\algExp+1} \left(\frac{b}{\ell} - 1 \right)^{\algExp}\\
%& \geq \frac{b-\estGap (g+1)}{\algExp+1} \left(\frac{b}{\estGap (g+1)} - 1 \right)^{\algExp}\\
& \geq \frac{b-\estGap (g+1)}{\algExp+1} \left(\frac{b-\estGap (g+1)}{\estGap (g+1)}\right)^{\algExp}\\
& \geq \frac{(b-\estGap (g+1))^{\algExp + 1}}{(\algExp+1)(\estGap (g+1))^\algExp}
\end{align*}

The second step holds since $(b/\ell)-1 = (b-\ell)/\ell$.
The fourth step holds since, by Lemma~\ref{l:upperIter}, $\ell \leq \estGap (g+1)$.    
\end{proof}

Recall that the cost to the total algorithm is the sum of the fees paid by clients and the service cost. In the following, let {\boldmath{$\AlgTotal_{F}$}} denote the fees paid by the clients.

\begin{lemma} \label{l:CostAlgRB}
    For $\algExp \geq 1$, $$\AlgTotal_{F} = O\left(\estGap^{2\algExp} g +  \estGap^2 b^{\algExp} \right). $$
 
    For $0 \leq \algExp < 1$, $$\AlgTotal_{F} = O\left( \estGap^{2\algExp + 4}g + \frac{\estGap^{4\algExp} b^{\algExp + 1}}{\max\{1,(g-\estGap^2)^{\algExp}\}}\right).$$
\end{lemma}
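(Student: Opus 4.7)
The plan is to find the worst-case arrangement of jobs across iterations via a sequence of exchange arguments, then evaluate the RB cost of that arrangement using the structural bounds from Lemmas~\ref{l:upperIter}, \ref{l:upperGood}, and~\ref{l:lowerIter}.

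First, I reduce to a canonical worst-case input. Within any iteration, placing all $g_i$ good jobs after the $b_i$ bad jobs maximizes that iteration's cost, yielding $f_i = \sum_{j = b_i + 1}^{b_i + g_i} j^\algExp$, since $j^\algExp$ is nondecreasing. Sorting iterations so that $b_1 \geq b_2 \geq \cdots$, a swap argument then shows the total cost is maximized when the $g_i$'s are also nonincreasing, so (combined with the cap $g_i \leq \estGap(\estGap + 1)$ from Lemma~\ref{l:upperGood}) good jobs are ``packed left'' into the first $\lceil g / (\estGap(\estGap + 1)) \rceil$ iterations; the swap of a block of good jobs from iteration $k$ to iteration $i < k$ moves each job to a weakly larger position since $b_i \geq b_k$.

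The main obstacle is the fourth exchange argument, which decides bad-job placement and diverges between the two cases of the lemma. Extending $f_i$ to real $b \geq 0$ via $f_i(b) = ((b + g_i)^{\algExp + 1} - b^{\algExp + 1})/(\algExp + 1)$, the function is smooth with $f_i'(b) = (b + g_i)^\algExp - b^\algExp$, and a second differentiation shows $f_i''$ has the same sign as $\algExp - 1$; so the marginal cost of a bad job is nondecreasing in $b$ when $\algExp \geq 1$ and nonincreasing when $\algExp < 1$. For a single-job exchange from iteration $k$ to iteration $i$ with $i < k$, the MVT writes the cost change as $f_i'(\xi_i) - f_k'(\xi_k)$ with $\xi_i \in [b_i, b_i + 1]$ and $\xi_k \in [b_k - 1, b_k]$; using $b_i \geq b_k$, $g_i \geq g_k$, and the monotonicity of $f_i'$, this difference is nonnegative when $\algExp \geq 1$ (so bad jobs are also packed left) and nonpositive when $\algExp < 1$ (so bad jobs should be spread as evenly as possible across the $\ell$ iterations). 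The subtlety here is controlling the dependence of $f_i'$ on the iteration-specific $g_i$, not just on $b$.

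With the worst case identified, the final bound is a direct calculation. For $\algExp \geq 1$, all bad jobs sit in iteration $1$: its cost is $g_1 (b + g_1)^\algExp = O(\estGap^2 b^\algExp + \estGap^{2 \algExp + 2})$ via the convex inequality $(x + y)^\algExp \leq 2^{\algExp - 1}(x^\algExp + y^\algExp)$, and every later iteration contributes at most $g_i^{\algExp + 1} \leq \estGap^{2 \algExp} g_i$; summing gives $O(\estGap^{2 \algExp} g + \estGap^2 b^\algExp)$ after absorbing $\estGap^{2 \algExp + 2}$ into the main terms. For $\algExp < 1$, each iteration holds $b / \ell$ bad jobs; the subadditivity $(x + y)^\algExp \leq x^\algExp + y^\algExp$ yields a per-iteration bound $g_i \bigl((b/\ell)^\algExp + g_i^\algExp\bigr)$, summing to $g(b / \ell)^\algExp + O(\estGap^{2 \algExp} g)$. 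Substituting the smallest admissible $\ell \geq \max\{1, (g - \estGap^2)/(3 \estGap^4)\}$ from Lemma~\ref{l:lowerIter} gives $g(b / \ell)^\algExp = O\bigl(\estGap^{4 \algExp} b^\algExp g / \max\{1, (g - \estGap^2)^\algExp\}\bigr)$, and a short case split on whether $g \leq b$ converts the trailing $g$ factor into $b$ (absorbing a residual $g^{1 - \algExp}$ factor into $\estGap^{2 \algExp + 4} g$ when $g > b$), yielding the second bound.
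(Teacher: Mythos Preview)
Your approach mirrors the paper's almost exactly: the same four exchange arguments (good jobs to the end of each iteration; sort by $b_i$; pack good jobs left; then a convexity-based MVT argument for bad jobs), followed by a direct calculation. The paper's version differs mainly in that it first replaces every $g_i$ by the uniform cap $\beta = \estGap(\estGap+1)$, so the function used in the MVT step is the single function $f(x)=\sum_{j=1}^{\beta}(x+j)^{\algExp}$ rather than an iteration-dependent $f_i$.

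There is, however, a real gap in your $\algExp<1$ exchange. You write the cost change as $f_i'(\xi_i)-f_k'(\xi_k)$ and then invoke ``$b_i\ge b_k$, $g_i\ge g_k$, and monotonicity of $f_i'$'' to conclude nonpositivity. But for $\algExp<1$ the derivative $f_i'(b)=(b+g_i)^{\algExp}-b^{\algExp}$ (or, for the exact sum, $F_i'(b)=\algExp\sum_{j=1}^{g_i}(b+j)^{\algExp-1}$) is \emph{decreasing in $b$ yet increasing in $g_i$}. With $\xi_i\ge\xi_k$ and $g_i\ge g_k$ these two effects pull in opposite directions, so the sign of $f_i'(\xi_i)-f_k'(\xi_k)$ is genuinely ambiguous. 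A concrete counterexample: take $\algExp=1/2$, $g_i=2$, $g_k=1$, $b_i=10$, $b_k=5$; then moving one bad job from $k$ to $i$ \emph{increases} the discrete cost, contradicting your claim. The paper avoids this by first passing to the overcount with $g_i\equiv\beta$ for all iterations (valid since $g_i\le\beta$ and we only need an upper bound), after which $f_i=f_k$ and the $g$-dependence disappears. You should do the same.

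A smaller point: your ``extension'' $f_i(b)=((b+g_i)^{\algExp+1}-b^{\algExp+1})/(\algExp+1)$ is the integral, not the discrete sum $\sum_{j=1}^{g_i}(b+j)^{\algExp}$, and the exchange must be carried out on the actual cost. The fix is painless: keep $F_i(b)=\sum_{j=1}^{g_i}(b+j)^{\algExp}$ as a smooth function of real $b$ and apply the MVT to it directly; the convexity/concavity in $b$ is identical. Your final calculations for both regimes are otherwise fine and land on the stated bounds.
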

\begin{proof}
 Let $\ell$ be the last iteration with some positive number of good jobs, and for $i \in [1,\ell]$ let $g_i$ and $b_i$ be the number of good and bad jobs in iteration $i$, respectively.  Then, the fees paid in an iteration are maximized when all good jobs come at the end of the iteration.  For a fixed iteration $i$, the cost is:
$\sum_{j=1}^{g_i} (b_i + j)^\algExp$.  Then, the total fees paid, $\AlgRB$ is:
$$\sum_{i=1}^\ell \sum_{j=1}^{g_i} (b_i + j)^\algExp$$
Note that for all $1 \leq i \leq \ell$, $g_i \leq \estGap^2 + \estGap +1$, by Lemma~\ref{l:upperGood}.  For simplicity of notation, in this proof, we let $\beta = \estGap^2 + \estGap + 1$; and note that for all $i \in [1,\ell]$, $g_i \leq \beta$. 

We use exchange arguments to determine the settings for which $\AlgRB$ is maximized subject to our constraints. Consider any initial settings of $\overrightarrow{g}$ and $\overrightarrow{b}$. We assume, without loss of generality, that the $b_i$ values are sorted in decreasing order, since if this is not the case, we can swap iteration indices to make it so, without changing the function output.

\medskip
\noindent\textbf{Good jobs always packed left.} We first show that $\AlgRB$ is maximized when the good jobs are ``packed left'': $\beta$ good jobs in each of the first $\floor{g/\beta}$ iterations, and $g \bmod \beta$ good jobs in the last iteration. 

To see this, we first claim that,  without decreasing $\AlgRB$, we can rearrange good jobs so that the $g_i$ values are non-increasing in $i$.  In particular, consider any two iterations $j,k$ such that $1 \leq j < k \leq \ell$, where $g_j < g_k$.  We move the last $g_k - g_j$ good jobs in iteration $k$ to the end of iteration $j$. This will not decrease $\AlgRB$ since the cost incurred by each of these moved jobs can only increase, since $b_j \geq b_k$, and our RB-cost function can only increase with the job index in the iteration. Repeating this exchange establishes the claim.

Next, we argue that, without decreasing $\AlgRB$, we can rearrange good jobs so that the $g_i$ values are $\beta$ for all $1 \leq i < \ell$.  To see this, consider the case where there is any iteration before the last that has less than $\beta$ good jobs, and let $j$ be the leftmost such iteration with $g_j<\beta$.  Since $j<\ell$, there must be some iteration $k$, where $j<k$, such that $g_k > 0$.  We move the last good job from iteration $k$ to the end of iteration $j$.  This will not decrease $\AlgRB$ since the cost incurred by this moved job can only increase, since our RB-cost function can only increase with the job index in the iteration.
Repeating this exchange establishes the claim.

\smallskip
\noindent\textbf{Analyzing {\boldmath{$\delta_i$}}.}
Next, we set up exchange arguments for the bad jobs.  For any integer $x$, define $$\delta(x) = \sum_{j=1}^{\beta} (x + 1 + j)^{\algExp} -  \sum_{j=1}^{\beta} (x  + j)^{\algExp}$$
If we consider swapping a single bad job from iteration $k$ into iteration $i$ where $1 \leq i < k \leq \ell$, this results in the following change in the overall cost: $\delta(b_i) - \delta(b_{k}-1)$. 

Let $f: \mathbb R \rightarrow \mathbb R$ such that,  $f(x) = \sum_{j=1}^\beta (x + j)^\algExp$. For any $i \in [1,\ell]$, this function is continuous when $x \in [b_i,b_i+1]$ and differentiable when $x \in (b_i, b_i + 1)$, since it is the sum of a finite number of continuous and differentiable functions respectively. For any $i \in [1,\ell]$, by applying the Mean Value Theorem for the interval $[b_i, b_i + 1]$, we have:
\begin{align*}
\delta(b_i) &= \frac{f(b_i+1) - f(b_i)}{(b_i+1)-b_i} = f'(x_i)  \mbox{~for some~} x_i\in[b_i,b_i+1]. 
\end{align*}
\noindent{}\textbf{Bad jobs packed left when {\boldmath{$\algExp \geq 1$}}.} When $\algExp \geq 1$, $f$ is convex. Hence, $f'(x)$ is non-decreasing in $x$.  Using this fact and the Mean Value Theorem analysis above, we get that $\delta(b_i) \geq f'(b_i)$; and, for any $i < k \leq \ell$, $\delta(b_{k} - 1) \leq f'(b_k)$.  In addition, $f'(b_i) \geq f'(b_k)$ since $b_i \geq b_k$, by our initial assumption that the $b_i$ values are non-increasing in $i$.  Thus, $\delta(b_i) - \delta(b_{k} - 1) \geq 0$.  

This implies that $\AlgRB$ does not decrease  whenever we move a bad job from some iteration with index $k>1$ to iteration $i=1$.  Hence, $\AlgTotal_{F}$ is maximized when all bad jobs occur in the first iteration.  

\medskip
\noindent{}\textbf{Bad jobs evenly spread when \boldmath{$\algExp < 1$}.}
When $\algExp < 1$, $f$ is concave. Hence, $f'(x)$ is non-increasing in $x$. Using this fact and the Mean Value Theorem analysis above, we get that $\delta(b_i) \leq f'(b_i)$; and, for any $1 < k \leq \ell$, $\delta(b_{k}-1) \geq f'(b_k)$. In addition, $f'(b_k) \geq f'(b_i)$ since $b_i \geq b_k$, by our initial assumption that the $b_i$ values are non-increasing in $i$. Thus, $\delta(b_i) - \delta(b_{k}-1) \leq 0$.
 
 This implies that $\AlgRB$ increases whenever we swap bad jobs from an iteration with a larger number of bad jobs to an iteration with a smaller number.  Hence, $\AlgRB$ is maximized when bad jobs are distributed as evenly as possible across iterations.\smallskip

\begin{figure*}[t!]
    \centering
    \includegraphics[width = \textwidth]{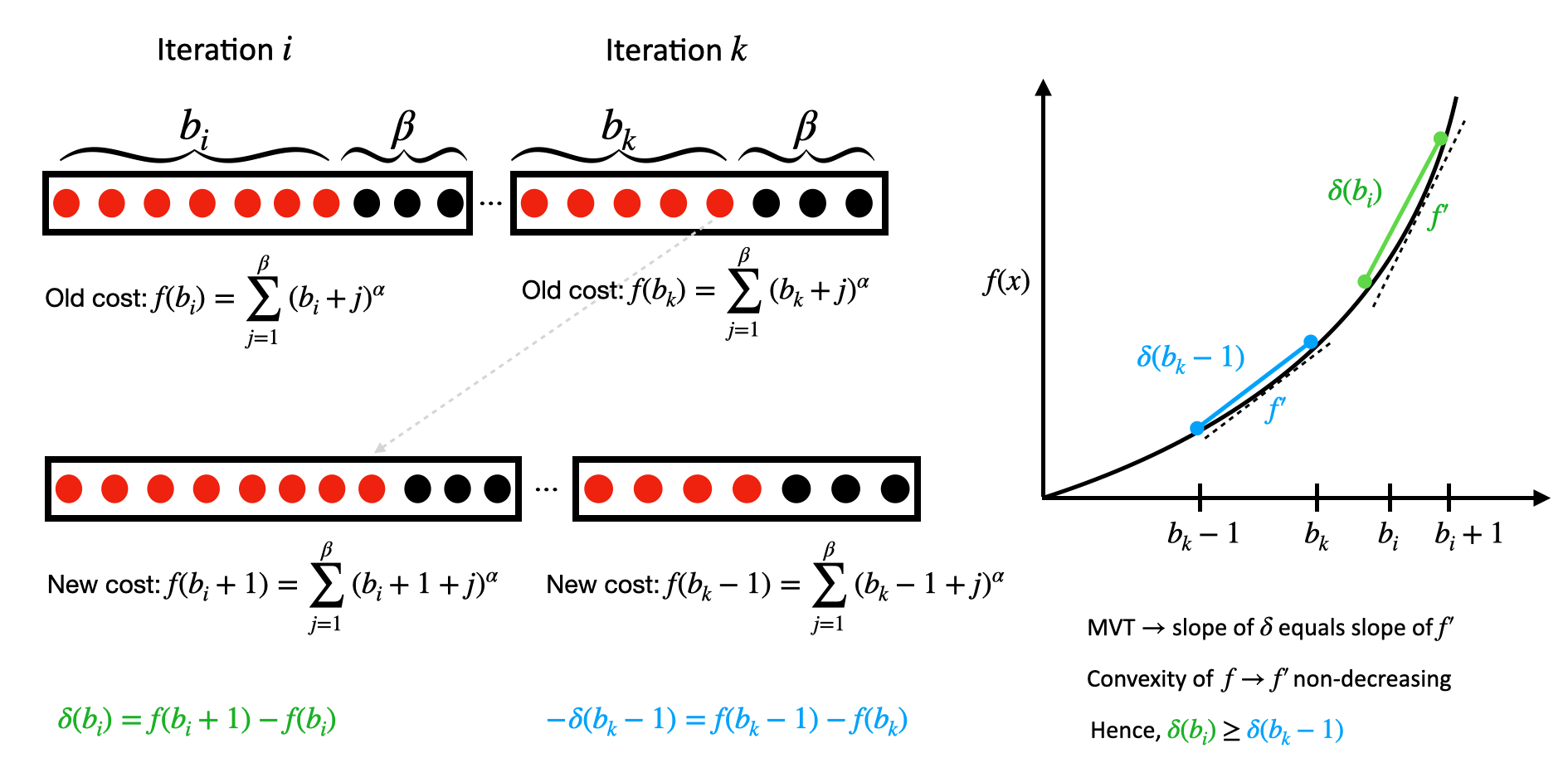}
    \caption{Illustration of an exchange of one bad job (a red ball) from iteration $k$ to iteration $i$, with all good jobs (black balls) remaining in place.  The change in cost for iteration $i$ is $\delta(b_i)$, and change in cost for iteration $k$ is $\delta(b_k - 1)$.  The Mean Value Theorem (MVT) says that for any integer $x$, the value of $\delta(x)$ equals $f'(y)$ for some $y \in [x,x+1]$.  Using the convexity of $f'$ (for $\algExp \geq 1$), this implies that $\delta(b_i) \geq \delta(b_k - 1)$.  This means that the exchange shown can only increase the algorithmic cost.}
    \label{f:mvt}
\end{figure*}

\noindent\textbf{Upper bound on $\AlgTotal_{F}$.}
We can now bound $\AlgRB$ via a case analysis.\smallskip

\noindent{\bf Case {\boldmath{$\algExp \geq 1$}}:} With the above exchange argument in hand, we have:
\begin{align*}
\AlgTotal_{F} & \leq \left(\frac{g}{\beta}\right) \sum_{j=1}^{\beta}j^{\algExp} +  \sum_{j=1}^{\min(\beta,g)}(b+j)^{\algExp}\\
& \leq \left(\frac{g}{\beta}\right) \sum_{j=1}^{\beta}j^{\algExp} +  \min(\beta,g) (b+\beta)^{\algExp}\end{align*}
\begin{align*}
&\leq  2\left(\frac{g}{\beta}\right)  \frac{\beta^{\algExp+1}}{\algExp+1}  +  \min(\beta,g)(b+\beta)^{\algExp}\\
& \leq  2\left(\frac{g}{\beta}\right)  \frac{\beta^{\algExp+1}}{\algExp+1} + \beta(2b)^{\algExp} + g(2\beta)^{\algExp}\\
&= O\left(\beta^{\algExp} g +  \beta b^{\algExp} \right) 
\end{align*}
where the third line follows by upper bounding the sum by an integral, and the fourth line follows from the inequality $(x+y)^{d} \leq (2x)^{d} + (2y)^{d}$ for any positive $x,y$, and $d$, which holds because $x+y \leq 2x$ or $x+y\leq 2y$. Finally,  note that replacing $\beta$ with $\estGap^2+\estGap + 1 \leq 3 \estGap^2$ gives the same asymptotic upper bound in the lemma statement.  \medskip

\noindent{\bf Case {\boldmath{$\algExp < 1$}}:}  With the above exchange argument in hand, we have:
\begin{align*}
\AlgTotal_{F} & \leq \ell \sum_{j=1}^{\beta} (b/\ell + j)^{\algExp} \\
& \leq \ell \sum_{j=1}^{\beta + b/\ell + 1} j^{\algExp} \\
& \leq 2\ell \frac{(\beta + b/\ell + 1)^{\algExp + 1}}{\algExp + 1}
\end{align*}
\begin{align*}
& \leq 2\ell \frac{(2\beta + 2)^{\algExp + 1} + (2b/\ell)^{\algExp+1}}{\algExp + 1}\\
& = O\left( \estGap^{2\algExp + 4}g + \frac{\estGap^{4\algExp} b^{\algExp + 1}}{\max\{1,(g-\estGap^2)^{\algExp}\}}  \right)
\end{align*}
\noindent{} In the above, the last line follows by Lemma~\ref{l:lowerIter}, and noting that $1/\algExp$ is a fixed constant.
\end{proof}

\noindent Note from Lemma~\ref{l:CostAlgRB} that for $b \geq \estGap^2$,  $\AlgRB$ is minimized when $\algExp \geq 1$.  Using this fact, we show $\algExp = 1$ minimizes the total cost of the algorithm as a function of $\AdvTotal$ and $g$.  The following theorem bounds the algorithm's cost.\medskip\smallskip

\begin{lemma}\label{l:costLinear}
The total cost to \Linear is $O\left( \estGap^{5/2} \sqrt{B(g+1)} + \estGap^3 (g+1) \right)$.
\end{lemma}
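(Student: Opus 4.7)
The plan is to prove Lemma~\ref{l:costLinear} by a direct combination of Lemma~\ref{l:CostAlgRB} and Lemma~\ref{l:costAdv}, both specialized at $\algExp=1$ (since, by its pseudocode, \Linear sets the threshold to $s+1$ and hence charges the linear cost schedule). The strategy is: (i) read off an upper bound on $\AlgRB$ in terms of $b$ from Lemma~\ref{l:CostAlgRB}; (ii) use Lemma~\ref{l:costAdv} to trade the unknown quantity $b$ for the adversary's total spend $B$; and (iii) absorb the $O(g)$ provisioning plus leading-RB contribution noted at the top of Section~\ref{s:lin-anal}.

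First, specializing Lemma~\ref{l:CostAlgRB} at $\algExp=1$ (the branch $\algExp\geq 1$) gives $\AlgRB = O(\estGap^{2}\,g + \estGap^{2}\,b)$. Next, specializing Lemma~\ref{l:costAdv} at $\algExp=1$ yields the quadratic lower bound $\AdvTotal \geq (b - \estGap(g+1))^{2} / \bigl(2\,\estGap(g+1)\bigr)$. I would invert this with a short case split: either $b \leq 2\estGap(g+1)$, so that $b = O(\estGap(g+1))$ trivially; or $b - \estGap(g+1) \geq b/2$, and then solving for $b$ gives $b \leq 2\sqrt{2\,\estGap(g+1)\,\AdvTotal}$. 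Combining the two cases, $b = O\bigl(\estGap(g+1) + \sqrt{\estGap(g+1)\,\AdvTotal}\bigr)$.

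Substituting this bound on $b$ back into the expression for $\AlgRB$ produces $\AlgRB = O\bigl(\estGap^{2}(g+1) + \estGap^{3}(g+1) + \estGap^{5/2}\sqrt{(g+1)\,\AdvTotal}\bigr)$, which simplifies to $O\bigl(\estGap^{3}(g+1) + \estGap^{5/2}\sqrt{(g+1)\,\AdvTotal}\bigr)$. Finally, the total algorithmic cost is $\AlgRB$ plus the initial-payment and provisioning contribution of $O(g)$ identified at the start of Section~\ref{s:lin-anal}; this extra term is absorbed into $\estGap^{3}(g+1)$, yielding the statement of the lemma.

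The only step requiring any care is the quadratic inversion in (ii), and even that is a routine application of the quadratic formula together with a one-line case split; I do not expect a genuine obstacle. The main sanity check is that the $\estGap$-exponents line up correctly: the $\estGap^{5/2}$ in the final bound comes precisely from multiplying the $\estGap^{2}$ prefactor of $b$ by the $\sqrt{\estGap}$ emerging from $\sqrt{\estGap(g+1)\,\AdvTotal}$, while the $\estGap^{3}$ on the additive term comes from multiplying $\estGap^{2}$ by the $\estGap(g+1)$ tail of $b$.
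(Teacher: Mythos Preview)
Your proposal is correct and mirrors the paper's own proof, which likewise splits on whether $b \ge 2\estGap(g+1)$, inverts Lemma~\ref{l:costAdv} at $\algExp=1$ to obtain $b = O(\sqrt{\estGap(g+1)B})$ in the large-$b$ case, and substitutes into Lemma~\ref{l:CostAlgRB}. One small misreading to fix: the remark at the start of Section~\ref{s:lin-anal} does \emph{not} say the total provisioning cost is $O(g)$; provisioning is $g+b$ (one unit per serviced job, good or bad), and the paper tracks the $b$ portion explicitly in each case. This does not break your argument---since $g+b = O(\estGap^2 g + \estGap^2 b)$ is already dominated by your $\AlgRB$ bound---but you should write the provisioning contribution as $g+b$ and note that it is absorbed, rather than claim it is $O(g)$.
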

\begin{proof}
To bound the cost of \AlgTotal, we  perform a case analysis.
\smallskip 
    
\noindent{\bf Case}: {\boldmath{$b \geq 2 \estGap (g + 1)$}}. In this case, $\estGap (g+1) \leq b/2$.  By Lemma~\ref{l:costAdv}, we have:
$$ \AdvTotal  \geq \frac{(b-\estGap (g+1))^{\algExp + 1}}{(\algExp+1)(\estGap (g+1))^\algExp}   \geq \frac{(b/2)^{\algExp + 1}}{(\algExp+1) (\estGap (g+1))^\algExp} = \frac{b^{\algExp+1}}{2^{\algExp+1}(\algExp+1)\estGap^\algExp (g+1)^{\algExp}}$$

\noindent The service cost is $b+g \leq 2b$, since in this case $g \leq b/(2 \estGap) \leq b/2$.  Let $\AlgTotal$ denote the total cost to the algorithm. We now use Lemma~\ref{l:CostAlgRB}; specifically, the case for $\algExp \geq 1$, which gives the smallest worst case total fees paid, across all values of $g$ and $b$ and $\estGap$.  This yields:
$$\AlgTotal \leq  2b+ c(\estGap^{2\algExp} g +  \estGap^2 b^{\algExp})$$
\noindent where $c>0$ is a sufficiently large constant.  Thus we have:
\begin{align*}
    \AlgTotal & \leq 2b + c(\estGap^{2\algExp} g +  \estGap^2 b^{\algExp})  \\
    & \leq 2b + c  (\estGap^{2\algExp} (b/2\estGap) +  \estGap^2 b^{\algExp})  \\
    & \leq 2b + c  (\estGap^{2\algExp-1} b +  \estGap^2 b^{\algExp}) 
\end{align*}
\noindent Where the second line holds since $\estGap (g+1) \leq b/2$ implies that $\estGap g \leq b/2$, or $g \leq b/(2\estGap)$. Note that the above is minimized over the range $\algExp \geq 1$, when $\algExp = 1$.   In this case, we have: 
$$\AlgTotal  \leq 2b + c \left(\estGap b + \estGap^2 b \right) \leq (c\estGap +\estGap^2 + 2)b = O \left( \estGap^{5/2} \sqrt{B(g+1)}\right).$$
%\begin{align*}
%    \AlgTotal & \leq 2b + c \left(\estGap b + \estGap^2 b \right)\\
%     & \leq (c\estGap +\estGap^2 + 2)b\\
%     & = O \left( \estGap^{5/2} \sqrt{B(g+1)}\right).
%\end{align*}
In the equality above, for $\algExp=1$, $\AdvTotal \geq \frac{b^{2}}{8\estGap (g+1)}$, which implies that $b \leq \sqrt{8\estGap B(g+1)}$.  
\medskip

\noindent{\bf Case:} {\boldmath{$b < 2 \estGap (g+1)$}}. In this case, the service cost is $g+b \leq (2\estGap+1) (g+1)$.  So when $\algExp = 1$, we can bound the total cost as:
\begin{align*}
    \AlgTotal  & \leq (2\estGap+1) (g+1) + c(\estGap^2 g +  \estGap^2 b)\\
    &\leq (2\estGap+1) (g+1) + c(\estGap^2 g +  \estGap^2 (2 \estGap (g+1)))\\
    & = O(\estGap^3 (g+1))
\end{align*}

%\begin{align*}
%    \AlgTotal & \leq (2\estGap+1) (g+1) + c(\estGap^2 g +  \estGap^2 b)\\
%    & \leq (2\estGap+1) (g+1) + c(\estGap^2 g +  \estGap^2 (2 \estGap (g+1)))\\
%   & = O(\estGap^3 (g+1))
%\end{align*}
\noindent which completes the proof.
\end{proof}

\noindent The communications bounds for \Linear are straightforward.

\begin{lemma}\label{l:bwLinear}
    In \Linear, the server sends $O(g)$ messages to clients, and clients send a total of $O(g)$ messages to the server.  Also, each good job is serviced after an exchange of at most $3$ messages between the client and server
\end{lemma}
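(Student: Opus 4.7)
The plan is to obtain all three bounds — the three-message-per-job bound, the $O(g)$ server-to-client total, and the $O(g)$ client-to-server total — by tracking, for each good job, the sequence of protocol messages it induces in the without-latency setting. I would first establish the per-job bound. When a client receives a new good job, it initializes $x \leftarrow 1$ and sends the job together with a $1$-hard RB solution (message~1). If the server's current $\thresh$ equals $1$ (which holds at the start of every iteration), the job is serviced immediately. Otherwise the server returns the current value of $\thresh$ to that client (message~2), the client updates $x \leftarrow \thresh$, and the client resubmits with an $x$-hard RB solution (message~3), which is then serviced.

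The key observation required here is that a single retry always suffices. Under the no-latency assumption all messages are delivered instantaneously, and the model stipulates that events (job generation and message receipt) occur one at a time. Consequently, between the server's transmission of the current $\thresh$ and the client's resubmission no other job can be processed, so $\thresh$ cannot increase in the interim. The client's second submission therefore meets the current threshold exactly and is serviced, yielding the bound of at most three messages per good job.

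The aggregate bounds then follow immediately by summing across good jobs. Each good job contributes at most two client-to-server messages (initial submission plus at most one retry) and at most one server-to-client message (the $\thresh$ value returned on an initial rejection, if any occurs), so both totals are at most $2g = O(g)$. For the pseudocode line instructing the server to send an updated $\thresh$ to ``all clients'' whenever the threshold changes, the natural interpretation in the without-latency setting is that no honest client is mid-interaction at the instant $\thresh$ changes — every service step completes instantaneously — so the only server-to-client messages that must be counted are the rejection responses attached to good jobs, which are bounded above by $g$.

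I do not anticipate a substantive obstacle. The argument is essentially bookkeeping enabled by the sequencing guarantee of the model; the only care required is in justifying that the single retry always succeeds, which is immediate from the no-simultaneous-events assumption together with zero latency.
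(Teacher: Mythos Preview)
Your proposal is correct and follows essentially the same approach as the paper: the paper's proof is a terse version of your argument, observing that with zero latency a client learns the current \thresh after one rejection and succeeds on the second attempt, giving at most two client-to-server messages and one server-to-client message per good job. Your treatment is in fact more careful than the paper's, which does not explicitly justify why the retry cannot itself be rejected nor address the ``send to all clients'' line in the pseudocode.
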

\begin{proof}
     Since there is no latency, each client immediately receives the current value of \thresh if its first message to the server has a fee with value too small.  Then, on the second attempt the client attaches a fee with the correct value, and so receives service.  Hence, each good job is serviced with at most $2$ messages from client to server and at most $1$ message from server to client.
\end{proof}

\noindent Theorem~\ref{t:Linear} now follows directly from Lemmas~\ref{l:costLinear} and~\ref{l:bwLinear}. 

%%%%%%%%%%%%%%%%%%%%%%%%%%%%%%%%%%%%%%%%%%%%%%%%%%%%
%%%%%%%%%%%%%%%%%%%%%%%%%%%%%%%%%%%%%%%%%%%%%%%%%%%%
%%%%%%%%%%%%%%%%%%%%%%%%%%%%%%%%%%%%%%%%%%%%%%%%%%%%

\section{Analysis of \LinPow} \label{s:linPow-anal}

How much do the total fees paid increase for \LinPow in our harder asynchronous problem variant (recall Section~\ref{s:problem})?  We say that a good job ``overpays'' if the fee submitted when the job is serviced exceeds the price that the server demands.  We bound the total amount overpaid by all good jobs in the asynchronous model.  When compared to \Linear, the adversary's cost in \LinPow decreases by at most a factor of $2$; and the total amount spent by good jobs that do not overpay increases by at most a factor of $2$.  Thus, overpayment is the only possible source of asymptotic increase in algorithmic costs.  

Thus we upper bound the overpayment amount.  To do this, we partition time into epochs.  An \defn{epoch} is a contiguous time interval ending with a period of $2 \Delta$ seconds, where the \currentprice during the entire period never increases. 

In the following, for all $i \geq 1$, let $B_i$ be the amount spent by the adversary in epoch $i$.  For  convenience, we define $B_0 = 0$.

\begin{lemma} \label{l:maxRBThresh}
    For all $i \geq 1$, the maximum \currentprice in epoch $i$ is at most $$3 \estGap^2 \max\left(1, 2\sqrt{B_i}\right) \leq 6\estGap^2\sqrt{B_i+1}$$
\end{lemma}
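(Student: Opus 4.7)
The plan is to split into two cases based on how $T$ compares to $2\gamma^2$. If $T \leq 2\gamma^2$, the bound holds immediately, since $2\gamma^2\max(1, 2\sqrt{B_i}) \geq 2\gamma^2$. The remaining task is the main case $T > 2\gamma^2$, where I would aim to establish $B_i \geq T^2/(16\gamma^4)$; this inverts to $T \leq 4\gamma^2\sqrt{B_i}$, from which the combined bound follows.

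For the main case, I would first locate the iteration $I$ in which the threshold first reaches its maximum $T$ within epoch $i$, say at time $t^*$. By the \LinPow threshold rule $\thresh = 2^{\lfloor \log(s+1)\rfloor}$, reaching $T = 2^k$ at $t^*$ requires exactly $T$ jobs serviced in $I$ up to and including $t^*$: namely $2^j$ jobs at each threshold level $2^j$ for $j = 0,\ldots,k-1$, plus one at threshold $T$. Lemma~\ref{l:upperGood} then caps the good jobs in $I$ at $\gamma(\gamma+1)$. Placing those good jobs pessimistically at the $\gamma(\gamma+1)$ highest-threshold positions (to minimize bad-job RB), the remaining bad jobs in $I$ still collectively pay $\Theta(T^2)$ in RB via the standard geometric-sum identity $\sum_{k=1}^n 2^{\lfloor \log k\rfloor} = \Theta(n^2)$, assuming $T$ exceeds $\gamma^2$ by a sufficient constant factor (which is why the case split uses $2\gamma^2$; the constant might need mild strengthening for the arithmetic).

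The step I expect to be the main obstacle is charging this bad-job RB specifically to epoch $i$, since iteration $I$ may have started in an earlier epoch. To handle this, I would exploit the epoch structure: each upward threshold jump in $I$ (occurring when $s+1$ crosses a power of $2$) cannot lie inside the $2L$-non-increasing suffix of any epoch, so the jumps are confined to the ``active'' parts of epochs. Combined with Lemma~\ref{l:upperGood}, the plan is to argue that a constant fraction of the threshold-$T/2$ jobs in $I$ (of which there are $T/2$ total) must fall inside epoch $i$, and that most of these are bad. Each such bad job pays at least $T/2$, so contributing $\Omega(T/\gamma^2)$ of them yields $B_i \geq \Omega(T^2/\gamma^2)$, and with the constants tracked, $B_i \geq T^2/(16\gamma^4)$.

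Inverting gives $T \leq 4\gamma^2\sqrt{B_i}$ in the main case, which combined with the trivial case produces $T \leq 2\gamma^2\max(1, 2\sqrt{B_i})$. The final inequality $2\gamma^2\max(1, 2\sqrt{B_i}) \leq 4\gamma^2\sqrt{B_i+1}$ follows from the elementary fact $\max(1, 2\sqrt{x}) \leq 2\sqrt{x+1}$ for $x \geq 0$.
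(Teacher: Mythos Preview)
Your overall strategy matches the paper's: both argue that reaching threshold $T = 2^x$ in the current iteration requires roughly $2^x$ serviced jobs, of which at most $\gamma(\gamma+1) \leq 2\gamma^2$ are good (Lemma~\ref{l:upperGood}), so the remaining bad jobs collectively pay $\Omega(T^2)$, and inverting gives $T = O(\gamma^2\sqrt{B_i})$. The paper packages this additively in the exponent---bad jobs contribute at most $(1/2)\log B_i + 1$ to $x$, good jobs at most $\log 2\gamma^2$---while you package it as a direct lower bound on $B_i$; these are the same computation. The paper's version is considerably shorter because it does not attempt to localize jobs to epoch $i$.

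You go beyond the paper in explicitly raising the cross-epoch charging issue: the iteration $I$ achieving the maximum threshold may have begun before epoch $i$, so some of the jobs contributing to the threshold count paid their RB in earlier epochs rather than into $B_i$. The paper's brief proof does not address this point.

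However, your proposed fix does not close the gap. You plan to argue that a constant fraction of the $T/2$ jobs serviced at level $T/2$ must land in epoch $i$, but this fails when the jump from $T/2$ to $T$ already occurred in an earlier epoch: then all level-$T/2$ jobs precede epoch $i$, the threshold is already $T$ when epoch $i$ begins, and epoch $i$ can be a bare $2L$-second interval with $B_i = 0$ while its maximum threshold is still $T$. Your correct observation that threshold jumps avoid the $2L$ non-increasing suffix of each epoch does not force any particular jump into epoch $i$ itself. So the obstacle you flagged is genuine, and the sketch you give would need a sharper argument---for instance, one that separates the case where $T$ is freshly attained via an increase during epoch $i$ from the case where $T$ is merely inherited at the epoch's start.
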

\begin{proof}
Fix an epoch $i \geq 1$ and let $2^x$ be the maximum \currentprice in that epoch.  

If $B_i > 0$, and the epoch consists solely of bad jobs, we know that $B_i \geq 2^{2(x-1)}$, because of the way that our server increases the \currentprice.  Hence, the bad jobs contribute no more than $((1/2) \log B_i) + 1$ to the value $x$, when $B_i > 0$.

In any period where the \currentprice monotonically increases, the number of good jobs is no more than the number of good jobs in an iteration, which is at most $\estGap(\estGap+1)+1 \leq 3 \estGap^2$  by Lemma~\ref{l:upperGood}.  Thus, the total increase in $x$ from good jobs is at most $\log (3 \estGap^2)$.

Putting these together, we get that when $B_i=0$, $x \leq \log 3\estGap^2$ and when $B_i>0$,  $x \leq ( (1/2) \log B_i) + 1 + \log 3\estGap^2$.  Hence the maximum \currentprice   is at most $3 \estGap^2 \max(1, 2\sqrt{B_i}) \leq 6 \estGap^2 \sqrt{B_i + 1}$.  
\end{proof}

\begin{lemma} \label{l:numJobsEpoch}
    For all $i \geq 1$, at most $6M (\log (\estGap (B_i+1)) + 1)$ good jobs arrive in epoch $i$. 
\end{lemma}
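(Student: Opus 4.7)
The plan is to bound the length $T_i$ of epoch $i$ in terms of the number $k_i$ of threshold-increase events it contains, bound $k_i$ using Lemma~\ref{l:maxRBThresh}, and then convert the length into a count of good-job arrivals via the arrival-rate constraint that at most $M$ good jobs appear in any $L$-second window.

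First, I would show that $T_i \le 2L(k_i+1)$. By the definition of an epoch, its last $2L$ seconds contain no threshold increase. Moreover, no two consecutive threshold-increase events within the epoch can be separated by $2L$ seconds or more: otherwise that intermediate $2L$-second stretch would already satisfy the epoch-ending condition, and the epoch would have terminated earlier. Summing the $k_i$ inter-increase gaps together with the final $2L$-second stable tail yields the claimed bound on $T_i$.

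Second, I would bound $k_i$ by $O(\log(\estGap(B_i+1)))$. Each threshold-increase event is a doubling, and by Lemma~\ref{l:maxRBThresh} the maximum threshold reached in the epoch is at most $4\estGap^2\sqrt{B_i+1}$. If the threshold trajectory across the epoch were monotone, this would immediately yield $k_i \le \log(4\estGap^2\sqrt{B_i+1}) \le 2\log(\estGap(B_i+1)) + O(1)$. The complication is that iterations reset the threshold to $1$, so in principle the same low values can recur. To handle this, I would invoke the structural argument used inside the proof of Lemma~\ref{l:maxRBThresh}: any maximal monotone-increasing sub-interval of the threshold trajectory lies inside a single iteration, each iteration contains at most $\estGap(\estGap+1)$ good jobs (Lemma~\ref{l:upperGood}), and pushing the threshold up to $2^j$ with bad jobs costs the adversary $\Omega(2^{2j})$. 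These facts let the per-iteration contributions be charged either to good-job activity (bounded inside each iteration) or to the adversary's budget $B_i$, so the aggregate count aggregates to $O(\log(\estGap(B_i+1)))$ rather than scaling with the number of iterations.

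Combining the two steps, the number of good-job arrivals in epoch $i$ is at most $M\lceil T_i/L\rceil \le M(2(k_i+1)+1) \le 4M(\log(\estGap(B_i+1))+1)$, yielding the lemma. The main obstacle I anticipate is the second step: since the threshold can in principle re-traverse small power-of-$2$ values many times across iterations within a single epoch, one needs a careful charging argument that amortizes these ``restarts'' against $B_i$ (or against the iteration-local good-job bound from Lemma~\ref{l:upperGood}), so that $k_i$ grows only logarithmically in $\estGap(B_i+1)$ rather than linearly in the iteration count.
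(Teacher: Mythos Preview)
Your overall plan mirrors the paper's: bound the epoch length via the number of threshold-doubling events, bound that count via Lemma~\ref{l:maxRBThresh}, and convert to a good-job count using the $M$-per-$L$ arrival rate. The paper is in fact terser than you on the second step: it simply asserts that over each successive $2L$-second window the threshold must double, so the epoch has at most $x+1$ such windows where $2^x$ is the maximum threshold reached---implicitly treating those doublings as distinct, i.e., treating the threshold trajectory as monotone within the epoch.

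You correctly flag the subtlety the paper glosses over: iteration boundaries reset the threshold to $1$, so low threshold values can recur and the total number $k_i$ of doubling events is not obviously bounded by the logarithm of the maximum threshold. However, your proposed resolution does not close this gap. The charging you describe yields, \emph{per iteration} inside the epoch, at most $O(\log\gamma)$ doublings attributable to good jobs (via Lemma~\ref{l:upperGood}) and $O(\log B_i)$ attributable to bad jobs; summed over the $I$ iterations the epoch spans, that is $O(I\log(\gamma B_i))$, which still scales with $I$. You assert that the aggregate is $O(\log(\gamma(B_i+1)))$ ``rather than scaling with the number of iterations,'' but you give no mechanism for suppressing the factor $I$, and the only available handle on $I$ (the estimator-gap bound behind Lemma~\ref{l:upperIter}) controls it only in terms of the number of good jobs generated during the epoch---precisely the quantity you are trying to bound. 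So the sketched argument is circular. To make this step go through you would need an independent reason why an epoch cannot straddle many iteration boundaries, or a different way to control $k_i$; note that the paper's own proof does not supply one either, so the concern you raise is real.
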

\begin{proof}
    Over every $2 \Delta$ seconds, starting from the beginning of the epoch, the \currentprice  must double, or else the epoch will end.  Thus, the epoch lasts at most $2 \Delta (x + 1)$ seconds, where $2^x$ is the maximum \currentprice achieved in that epoch.

    By Lemma~\ref{l:maxRBThresh}, $2^x \leq  6\estGap^2\sqrt{B_i+1}$; taking the log and simplifying, we get $x \leq 3 + 2 \log (\estGap (B_i+1))$.  The lemma follows since over a period of $2 \Delta$ seconds, at most $2M$ good jobs can arrive.
\end{proof}

\begin{lemma} \label{l:nonZeroEpoch}
Any epoch has overpaying jobs only if either the epoch or its preceding epoch spans the beginning of some iteration.
\end{lemma}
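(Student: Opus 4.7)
I will prove the contrapositive: assuming no iteration begins in either $E$ or its preceding epoch $E'$, no good job overpays in $E$. Under this hypothesis, $E$ and $E'$ lie entirely within a single iteration $I$ whose start time $t_I$ must strictly precede the start of $E'$. Inside $I$ the value of $\thresh$ is non-decreasing, and by the epoch definition it is non-increasing during the last $2L$ seconds of $E'$; together these force $\thresh$ to be constant on that final $2L$, call the value $c$. Thus $\thresh$ equals $c$ at the start of $E$ and is at least $c$ throughout $E$.

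Now suppose for contradiction that some job is serviced at time $t_s$ in $E$ with attached RB value $x$ strictly greater than the threshold value $v$ observed at $t_s$. Since $v \geq c$ we have $x > c$. The value $x$ must have arrived in a server response sent at some time $t_\sigma$ when the server's threshold equalled $x$. The threshold in $I$ never reaches $c$, let alone $x$, before the start of $E$; and any time in $E$ at which the threshold equals $x$ would, by monotonicity, force $v \geq x$ and preclude overpayment. Hence $t_\sigma < t_I$, and the client has $x$ in hand by time $t_x \leq t_\sigma + L < t_I + L$. From $t_x$ onward every client message carries RB value at least $x$.

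The key step is to show that the first server-side arrival $\tau^*$ of such a message that occurs at or after $t_I$ satisfies $\tau^* < t_I + 2L$. Let $t_k$ denote the client's $k$-th send time and $\tau_k$ its server-side arrival, so that $\tau_k - t_k$ (the forward latency) and $t_{k+1} - \tau_k$ (the backward latency that triggers the next send) each lie in $[0,L]$. Pick $k^*$ minimal with $\tau_{k^*} \geq t_I$. The constraint $\tau_{k^*-1} < t_I$ forces $\tau_{k^*-1} - t_{k^*-1} < t_I - t_{k^*-1}$, and combined with $t_{k^*} - \tau_{k^*-1} \leq L$ this yields $t_{k^*} < t_I + L$, and so $\tau^* \leq t_{k^*} + L < t_I + 2L$. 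Because $t_I$ strictly precedes the start of $E'$ and $E'$ has length at least $2L$, we obtain that $\tau^*$ strictly precedes the start of $E$, so the overpaying message is received inside iteration $I$ at a time when $\thresh \leq c < x$; the job is therefore serviced strictly before $E$, contradicting its service at $t_s \in E$.

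The main obstacle will be the tight bound $\tau^* < t_I + 2L$ in the third paragraph. A naive analysis that only uses the fact that consecutive sends are at most $2L$ apart gives merely $\tau^* < t_I + 3L$, which is insufficient when $E'$ has length exactly $2L$ and $t_I$ lies in the $L$-second window immediately preceding $E'$. The improvement comes from coupling the adversary's choice of forward latency for the $(k^*-1)$-th send (which must be small enough to keep $\tau_{k^*-1}$ strictly below $t_I$) with the subsequent round-trip duration that determines $t_{k^*}$.
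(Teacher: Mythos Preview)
Your proof is correct, but it takes a different and more laborious route than the paper's.

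The paper argues forward and structurally: it first shows that any job serviced in epoch $E$ must have entered during $E$ or during the final $2L$ seconds of the preceding epoch $E'$ (because jobs entering earlier get serviced before $E'$ ends, thanks to the non-increasing threshold in that final $2L$ window). Then, under the hypothesis that no iteration begins in $E$ or $E'$, the threshold is non-decreasing throughout each such job's lifetime, so the client's attached value---which is the maximum threshold it has ever seen---can never exceed the current threshold. Hence no overpayment. This is three sentences and no message-timing arithmetic.

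Your approach is by contradiction and traces a single hypothetical overpaying job: you locate the server-send time $t_\sigma$ at which the value $x$ was produced, argue $t_\sigma < t_I$ by monotonicity, and then bound the arrival time of the client's next message carrying $x$ to be strictly before $E$ begins, forcing earlier service. This is also valid, and it is arguably more careful about exactly when a bouncing job gets through; the paper's intermediate claim about service-by-end-of-epoch is stated somewhat loosely.

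Two small remarks on your write-up. First, the phrase ``the threshold in $I$ never reaches $c$ \ldots\ before the start of $E$'' should read ``never exceeds $c$'': you yourself established that the threshold equals $c$ throughout the final $2L$ seconds of $E'$. Second, the ``main obstacle'' paragraph is overblown. Once you have $\tau_{k^*-1} < t_I$ (which follows because the bounce producing $x$ occurs at $\tau_j = t_\sigma < t_I$, so $k^* \geq j+1 \geq 2$) and the round-trip bound $\tau_{k^*} \leq \tau_{k^*-1} + 2L$, the inequality $\tau^* < t_I + 2L$ is immediate; there is no need for the coupling argument you describe, and the ``naive $3L$'' worry does not actually arise.
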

\begin{proof}
All jobs entering prior to the last $2 \Delta$ seconds of an epoch are serviced before the end of the epoch, since, by definition, the \currentprice is non-increasing during those $2 \Delta$ seconds.  Thus, all jobs serviced in some fixed epoch either have entered during that epoch or entered in the final $2 \Delta$ seconds of the 
preceding epoch.  

If neither the current epoch nor the preceding epoch spanned the beginning of an iteration, then the \currentprice never resets to $1$ during the lifetime of any of these jobs.  So the \currentprice is non-decreasing during each such job's lifetime.  Thus, none of the jobs can ever have an attached fee that is larger than the \currentprice, and so no overpayment occurs.
\end{proof}

\begin{lemma} \label{l:overpayEpoch}
  For all $i \geq 1$, the total amount that good jobs overpay in epoch $i$ is at most $18M \estGap^2 (\sqrt{B_{i-1}} + \sqrt{B_i} + 1) (\log (\estGap (B_i+1))+2)$.
\end{lemma}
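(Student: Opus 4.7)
The plan is to bound the total overpayment in epoch $i$ as the product of two factors: a bound on the number of good jobs that can possibly overpay in epoch $i$, and a bound on how much any single such job can overpay. The preceding lemmas give exactly the pieces needed for each factor.

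First, I would identify which jobs can contribute. Adapting the observation in the proof of Lemma~\ref{l:nonZeroEpoch}, any job serviced in epoch $i$ must have arrived either during epoch $i$ or during the last $2L$ seconds of epoch $i-1$; jobs entering earlier would already have been serviced before epoch $i$ began, because the threshold is non-increasing during that $2L$-second tail. Lemma~\ref{l:numJobsEpoch} bounds the first group by $4M(\log(\estGap(B_i+1))+1)$, and since at most $M$ good jobs arrive per $L$-second window, at most $2M$ more can come from the tail of epoch $i-1$. Summing these two counts and absorbing the additive constant into the logarithmic factor yields an upper bound of $4M(\log(\estGap(B_i+1))+2)$ on the total number of overpaying jobs.

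Second, I would bound the per-job overpayment. By the \LinPow client protocol, the $x$-value a client submits upon being serviced is the maximum threshold it has ever been told by the server, so its overpayment (submitted value minus the server's current threshold) is at most $x$. Any job contributing to epoch $i$'s overpayment was alive only during a subinterval of epochs $i-1$ and $i$, so its $x$ is at most the largest threshold posted across those two epochs. Applying Lemma~\ref{l:maxRBThresh} separately to epochs $i-1$ and $i$, this maximum is at most $2\estGap^2\max(1,2\sqrt{B_{i-1}},2\sqrt{B_i})$, which is at most $4\estGap^2(\sqrt{B_{i-1}}+\sqrt{B_i}+1)$ using the elementary inequality $\max(1,2a,2b)\leq 2(a+b+1)$.

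Multiplying the two bounds gives the claimed estimate of $8M\estGap^2(\sqrt{B_{i-1}}+\sqrt{B_i}+1)(\log(\estGap(B_i+1))+2)$, after straightforward collapsing of constants. The main obstacle — and the reason both $B_{i-1}$ and $B_i$ appear in the bound — is the careful treatment of jobs that straddle the boundary between the two epochs: such a job may have learned a large threshold while in the tail of epoch $i-1$ and then be bounced into epoch $i$ (where the threshold has reset because an iteration began), so its per-job overpayment inherits the $\sqrt{B_{i-1}}$ term in addition to the $\sqrt{B_i}$ term that one would expect from a job entirely contained in epoch $i$. Once this dependence is recognized, the bookkeeping reduces to the two lemma applications sketched above.
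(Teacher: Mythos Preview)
Your proposal is correct and follows essentially the same two-factor decomposition as the paper: count the good jobs serviced in epoch $i$ as those arriving in epoch $i$ (bounded via Lemma~\ref{l:numJobsEpoch}) plus at most $2M$ stragglers from the $2L$-second tail of epoch $i-1$, then bound each job's overpayment by the maximum threshold seen across epochs $i-1$ and $i$ via Lemma~\ref{l:maxRBThresh}, and multiply. One minor bookkeeping note: with your stated per-job bound of $4\estGap^2(\sqrt{B_{i-1}}+\sqrt{B_i}+1)$ the product actually comes out to $16M\estGap^2(\cdots)$ rather than $8M\estGap^2(\cdots)$; the paper obtains the constant $8$ by using $2\estGap^2(\sqrt{B_{i-1}}+\sqrt{B_i}+1)$ for the per-job bound (a slightly looser reading of Lemma~\ref{l:maxRBThresh}), but either constant is immaterial for the downstream asymptotics.
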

\begin{proof}
    First, note that every good job serviced in epoch $i$ either arrived in epoch $i$: at most $6M (\log (\estGap (B_i+1) + 1)$ such jobs by Lemma~\ref{l:numJobsEpoch}; or arrived in the last $2 \Delta$ seconds of the preceding epoch: at most $2M$ such jobs.  Thus, there are at most $6M (\log (\estGap (B_i+1)) + 2)$ good jobs serviced in epoch $i$.
    
    By Lemma~\ref{l:maxRBThresh}, the maximum \currentprice  in either epoch $i$ or $i-1$ is:
    $$3 \estGap^2 \max\left\{\sqrt{B_{i-1}}, \sqrt{B_{i}}, 1\right\} \leq 3 \estGap^2 \left(\sqrt{B_{i-1}} + \sqrt{B_i} + 1 \right)$$  
 This is the maximum amount that any job serviced in epoch $i$ can overpay.  Multiplying by the number of good jobs serviced, we obtain the bound in the lemma statement. %  bound the total amount overpaid: $$4M (\log (\estGap (B_i+1)) + 2) \cdot 2 \estGap^2 \left(\sqrt{B_{i-1}} + \sqrt{B_i} + 1 \right) \leq 13M \estGap^2 \left(\sqrt{B_{i-1}} + \sqrt{B_i} + 1\right) \log (\estGap (B_i+1))).$$
\end{proof}

\begin{lemma} \label{l:totOverpay}
The total amount overpaid by good jobs in \LinPow is
$$O\left(M \log (\estGap (B+2)) \left(\estGap^{5/2} \sqrt{(g+1)B} + \estGap^3 (g+1) \right)  \right)$$
\end{lemma}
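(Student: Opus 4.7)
The plan is to sum the per-epoch overpayment bound from Lemma~\ref{l:overpayEpoch} over only those epochs that actually contribute overpayment, which by Lemma~\ref{l:nonZeroEpoch} is a set tightly controlled by the number of iterations. Let $S$ denote the set of epochs containing at least one overpaying good job.

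First I would bound $|S|$. By Lemma~\ref{l:nonZeroEpoch}, every $i \in S$ is either an epoch that spans the start of some iteration, or is the epoch immediately following such an epoch. Each iteration-start lies inside at most one epoch, so each iteration contributes at most two epochs to $S$. Combining with Lemma~\ref{l:upperIter}, which gives at most $\estGap(g+1)$ iterations, we get $|S| \leq 2\estGap(g+1)$.

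Next I would sum the bound from Lemma~\ref{l:overpayEpoch} over $i\in S$. The logarithmic factor $\log(\estGap(B_i+1))+2$ is monotone in $B_i$ and thus uniformly bounded by $O(\log(\estGap(B+2)))$, so it can be factored out of the sum. This leaves the sum
\[
\sum_{i\in S} 8M\estGap^2\bigl(\sqrt{B_{i-1}}+\sqrt{B_i}+1\bigr).
\]
I would handle the $\sqrt{B_i}$ and $\sqrt{B_{i-1}}$ terms with Cauchy--Schwarz: since $\sum_i B_i \leq B$, we have
\[
\sum_{i\in S}\sqrt{B_i}\ \leq\ \sqrt{|S|}\,\sqrt{\textstyle\sum_i B_i}\ \leq\ \sqrt{|S|\,B},
\]
and likewise for $\sqrt{B_{i-1}}$. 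The remaining $\sum_{i\in S} 1 = |S|$ is already bounded above. Plugging in $|S|=O(\estGap(g+1))$ gives $\sqrt{|S|B} = O(\estGap^{1/2}\sqrt{(g+1)B})$ and $|S| = O(\estGap(g+1))$.

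Multiplying these two terms by the factored-out $M\estGap^2\log(\estGap(B+2))$ yields exactly
\[
O\!\left(M\log(\estGap(B+2))\bigl(\estGap^{5/2}\sqrt{(g+1)B} + \estGap^{3}(g+1)\bigr)\right),
\]
which is the claimed bound. The step that requires the most care is the counting argument for $|S|$: one must use Lemma~\ref{l:nonZeroEpoch} correctly to account for both the "spanning" epoch and its successor, and then invoke Lemma~\ref{l:upperIter} to convert the iteration count into the factor $\estGap(g+1)$. After that, the Cauchy--Schwarz step is the only non-routine algebraic move and is what produces the crucial $\sqrt{(g+1)B}$ scaling rather than the naive $\sqrt{B}\cdot|S|$ one would get by pulling the max out of the sum.
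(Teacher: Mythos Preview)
Your proposal is correct and follows essentially the same route as the paper: bound $|S|$ via Lemmas~\ref{l:nonZeroEpoch} and~\ref{l:upperIter}, sum Lemma~\ref{l:overpayEpoch} over $S$, factor out the logarithm, and apply Cauchy--Schwarz to $\sum_{i\in S}\sqrt{B_i}$. The paper's proof additionally derives a second, cruder bound $X \le 4(g+1)\estGap^2\sqrt{B+1}$ (by multiplying the maximum per-job overpayment by $g$) and combines it with the Cauchy--Schwarz bound via a $\min$; this extra step is not needed for the lemma as stated but is what later yields the $\min(g+1,\,M\sqrt{g+1},\,M\sqrt{B+1})$ factor in Theorem~\ref{t:LinPow}.
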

\begin{proof}
Let $X$ be the total amount overpaid by good jobs in \LinPow.  We will give two upperbounds on $X$ and then combine them.

First, by Lemma~\ref{l:maxRBThresh} and the fact that for all $i\geq 1$, $B_i \leq B$, we know that the maximum amount any good job can overpay is at most $6 \estGap^2\sqrt{B + 1}$.  Multiplying by all good jobs, we get:
\begin{equation} \label{e:X1}
    X \leq 6 (g+1) \estGap^2\sqrt{B + 1}
\end{equation}
Next, let $S$ be the set of indices of epochs that have overpaying jobs.  Then, by Lemma~\ref{l:overpayEpoch}: 

\begin{align*}
X & \leq \sum_{j \in S} 18M \estGap^2 (\sqrt{B_{i-1}} + \sqrt{B_i} + 1) (\log (\estGap (B_i+1))+2)\\
& \leq 18 M \estGap^2 (\log (\estGap (B + 1))+2) \sum_{j \in S} (\sqrt{B_{j-1}} + \sqrt{B_j} + 1)\\
& \leq 18 M \estGap^2 (\log (\estGap (B + 1))+2) \left( |S| + \sum_{j \in S} \sqrt{B_{j-1}} + \sum_{j \in S} \sqrt{B_j} \right)\\
& \leq 18 M \estGap^2 (\log (\estGap (B + 1)) + 2) \left(|S| + 2 \sqrt{B} \left(\min(\sqrt{S}, \sqrt{B})\right)\right)
\end{align*}

In the above, the fourth step holds by noting that the sum of the $B_{j-1}$ and $B_j$ terms are both at most $B$, applying Cauchy-Schwartz to both sums, and noting that the sum of the $\sqrt{B_i}$ can never be larger than $B$.  

Next, noting that $|S| \leq 2\estGap (g + 1)$ by Lemmas~\ref{l:upperIter} and \ref{l:nonZeroEpoch}, we have for some constant $C$:
\begin{equation} \label{e:X2}
    X \leq  C M \estGap^{2} \left(\estGap (g+1) + \sqrt{B} \left( \min(\sqrt{\estGap (g+1)}, \sqrt{B}  \right) \log (\estGap (B+2))) \right)
\end{equation}

Putting together equations~\ref{e:X1} and~\ref{e:X2}, we have

\begin{align*}
X &= O \left(\estGap^2 \sqrt{B+1} \min \left(g+1, M \log(\estGap(B+2)) \left(\min(\sqrt{\estGap(g+1)},\sqrt{B+1})  \right) \right) \right)
\end{align*}
\end{proof}

\noindent Lemmas~\ref{l:linpow-total},~\ref{l:max-time}, and~\ref{l:message-complexity} below complete the proof of our second main result, Theorem~\ref{t:LinPow}.

\begin{lemma} \label{l:linpow-total}
  \LinPow has total cost: $$\tilde{O} \left(\estGap^3 \left(\sqrt{B+1} \min \left(g+1, M \sqrt{g+1},M \sqrt{B+1} \right) + (g+1) \right) \right).$$
\end{lemma}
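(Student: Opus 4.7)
The plan is to split the total cost of \LinPow into two pieces: (i) the server's provisioning cost plus the RB cost paid by good jobs that never overpay (summed across their failed bounce attempts and their one successful submission), and (ii) the total overpayment by good jobs. I will bound each piece separately and then combine.

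For piece (i), I reduce to the analysis of \Linear. Since \LinPow's threshold $2^{\lfloor\log(s+1)\rfloor}$ is at most the \Linear threshold $s+1$, for any fixed job schedule the adversary pays at most a constant factor less in \LinPow than in \Linear (so denoting by $B'$ the cost the same adversary would incur under \Linear, $B'\leq 2B$). Likewise, a good job in \LinPow that does not overpay submits RB values $1,2,4,\ldots,T$ for its final value $T$; this geometric sum is at most $2T$, i.e.\ at most twice what the same job pays in \Linear, whose threshold is no smaller. Hence piece (i) is at most a constant times \Linear's total cost, which by Lemma~\ref{l:costLinear} is $O\bigl(\estGap^{5/2}\sqrt{B(g+1)} + \estGap^3(g+1)\bigr)$.

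For piece (ii), I would invoke Lemma~\ref{l:totOverpay}; for a tight fit to the target expression it is cleanest to use the intermediate bound derived inside that proof, namely
\[
X = O\!\left(\estGap^{2}\sqrt{B+1}\,\min\!\Bigl(g+1,\;M\log(\estGap(B+2))\,\min\!\bigl(\sqrt{\estGap(g+1)},\sqrt{B+1}\bigr)\Bigr)\right).
\]
Distributing the inner minimum through the outer one, using $\sqrt{\estGap(g+1)}\leq\estGap^{1/2}\sqrt{g+1}$, and folding the $\log(\estGap(B+2))$ factor into $\tilde{O}$, this rewrites as
\[
X = \tilde{O}\!\left(\estGap^{5/2}\sqrt{B+1}\,\min\bigl(g+1,\;M\sqrt{g+1},\;M\sqrt{B+1}\bigr)\right).
\]

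Summing the bounds on (i) and (ii) and using $\estGap^{5/2}\leq\estGap^3$ gives the target expression provided the residual $\estGap^{5/2}\sqrt{B(g+1)}$ from (i) is absorbed into $\estGap^3\sqrt{B+1}\min(g+1,M\sqrt{g+1},M\sqrt{B+1}) + \estGap^3(g+1)$. I would verify this by a short case analysis on which argument of the $\min$ is smallest. If the minimum is $g+1$ or $M\sqrt{g+1}$, then (using the natural assumption $M\geq 1$) $M\sqrt{g+1}\geq \sqrt{g+1}$, so $\sqrt{B(g+1)}=\sqrt{B+1}\cdot\sqrt{g+1}$ is dominated (up to the spare $\estGap^{1/2}$ factor) by the product term. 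If the minimum is $M\sqrt{B+1}$, then $B\leq g$, whence $\sqrt{B(g+1)}\leq g+1$ is absorbed into the additive $\estGap^3(g+1)$ term. The main obstacle I anticipate is precisely this last case analysis: cleanly matching the $\sqrt{B(g+1)}$ leftover from the \Linear-style bound against each branch of the $\min$, and handling the edge regime of very small $M$ as an implicit side condition.
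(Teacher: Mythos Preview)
Your proposal is correct and follows essentially the same route as the paper: split the total cost into the \Linear-like portion (non-overpaying good jobs plus provisioning, handled via the factor-of-2 comparison to \Linear and Lemma~\ref{l:costLinear}) and the overpayment portion (handled via Lemma~\ref{l:totOverpay}), then sum and simplify. Your case analysis for absorbing the residual $\estGap^{5/2}\sqrt{B(g+1)}$ term into the final $\tilde{O}$ expression is more explicit than the paper's, which simply asserts the simplification; your treatment (including the $M\geq 1$ side condition) is a reasonable way to fill that gap.
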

\begin{proof}
Fix the order in which jobs are serviced by \LinPow.  The \currentprice  when a job is serviced in \LinPow will differ by at most a factor of $2$ from \Linear because of the fact that \LinPow uses powers of $2$.  Thus, the total amount spent by the adversary decreases by at most a factor of $2$, and the total amount spent by the algorithm on all jobs that do not overpay increases by at most a factor of $2$.   It follows that the asymptotic results in Theorem~\ref{t:Linear} hold, except for the costs due to overpayment by the good jobs.

So, we can bound the asymptotic cost of the algorithm as the cost from \Linear plus the cost due to overpayment.  The cost in the theorem statement is the asymptotic sum of the costs from Theorem~\ref{t:Linear} and from Lemma~\ref{l:totOverpay}.  This is:
%\begingroup\makeatletter\def\f@size{8.7}\check@mathfonts
\begin{align*}
&O\mbox{\Large (}\estGap^2 \sqrt{B+1} \min \mbox{\Large (}g+1, M \log(\estGap(B+2)) \mbox{\Large (}\min(\sqrt{\estGap(g+1)},\sqrt{B+1})) \mbox{\Large )} \mbox{\Large )} \\
& + \estGap^{5/2}\sqrt{B(g+1)} + \estGap^3(g+1) \mbox{\Large )}
 \end{align*}
% \endgroup
which is 
$$ 
\tilde{O} \left(\estGap^3 \left(\sqrt{B+1} \min \left(g+1, M \sqrt{g+1},M \sqrt{B+1} \right) + (g+1) \right) \right)
$$

%$$ 
%\tilde{O} \left(\estGap^3 \left(\sqrt{B+1} \min \left(g+1, M \sqrt{(g+1)},M \sqrt{B+1})) \right) + (g+1) \right) \right)
%$$
\noindent which completes the argument.
\end{proof}

\begin{lemma}\label{l:max-time}
    The maximum time for any good job to start service is $\Delta(4\log (\estGap (B+1)) + 4)$ seconds.
\end{lemma}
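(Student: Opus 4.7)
The plan is to bound the number of messages exchanged for a single good job and then multiply by the maximum per-message latency $L$. The two parties exchange messages strictly sequentially for a given job (the client cannot resend until it learns a new threshold, and the server cannot respond until it receives the client's attempt), so the total elapsed time from the client's first transmission until the server accepts the job is at most $L$ times the number of messages exchanged for that job.

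I first bound the number of times the job bounces. A bounce occurs only when the client's attached value $x$ is strictly less than the server's current threshold $\thresh$, after which the client updates $x$ to that threshold, which is a power of $2$. Starting from $x = 1$ and updating monotonically to strictly greater powers of $2$ capped at the global maximum threshold $T_{\max}$ ever attained across the job's lifetime, the number of bounces is at most $\log T_{\max}$. Each bounce accounts for two messages (the server's rejection carrying $\thresh$, then the client's resend), and the initial transmission adds one more, giving at most $2\log T_{\max} + 1$ messages in total, hence elapsed time at most $L(2\log T_{\max} + 1)$.

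To finish, I invoke Lemma~\ref{l:maxRBThresh}. Since $B_i \leq B$ in every epoch, the lemma yields $T_{\max} \leq 4\estGap^2\sqrt{B+1}$ in general, together with the tighter form $T_{\max} \leq 2\estGap^2$ when $B = 0$. A short calculation then shows
\[
\log T_{\max} \;\leq\; 1 + 2\log(\estGap(B+1))
\]
for all integer $B \geq 0$ and $\estGap \geq 1$: the $B = 0$ case is exactly $\log(2\estGap^2) = 1 + 2\log\estGap$ via the tight form, while for $B \geq 1$ we start from $\log T_{\max} \leq 2 + 2\log\estGap + \tfrac{1}{2}\log(B+1)$ and absorb the extra additive constant into the coefficient of $\log(B+1)$ using $\log(B+1) \geq 1$. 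Substituting into the message count, the total time is at most $L\bigl(2(1 + 2\log(\estGap(B+1))) + 1\bigr) = L\bigl(4\log(\estGap(B+1)) + 3\bigr)$, as claimed.

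The only real obstacle is the algebraic simplification in the last step: the square root in Lemma~\ref{l:maxRBThresh} naturally yields a $\tfrac{1}{2}\log(B+1)$ contribution, while the target expression has a generous $4\log(B+1)$ coefficient. The looseness accommodates the extra additive constants coming from $T_{\max} \leq 4\estGap^2\sqrt{B+1}$, with the sole delicate point being the $B = 0$ corner, where one must appeal to the tighter $2\estGap^2$ form rather than the rounded $4\estGap^2\sqrt{1}$ form of that lemma.
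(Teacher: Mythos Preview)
Your argument is correct and follows the same route as the paper: bound the number of bounces by $\log T_{\max}$ (since the client's $x$ strictly ascends through powers of $2$), count $2$ messages per bounce plus the final accepted send, and then bound $T_{\max}$ so that $\log T_{\max} \le 1 + 2\log(\estGap(B+1))$. The only cosmetic difference is that you obtain the $T_{\max}$ bound by citing Lemma~\ref{l:maxRBThresh} (and its tighter $2\estGap^2$ form for $B=0$), whereas the paper re-derives the same inequality inline from Lemma~\ref{l:upperGood} and a direct count of how many doublings bad jobs can cause; the arithmetic and final substitution are identical.
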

\begin{proof}
Fix any good job.  Every time the job is bounced, the \currentprice sent from the server to the corresponding client at least doubles.  The client receives the new \currentprice from the server within $\Delta$ seconds.

How many times can the \currentprice change?  If $B>0$, the number of times that bad jobs can double the \currentprice is at most $(1/2) \log B$.  By Lemma~\ref{l:upperGood}, the number of good jobs in an iteration is at most $\estGap(\estGap+1) + 1 \leq 3 \estGap^2$.  Hence, the total number of times the \currentprice can change for $B=0$ is at most $2 + 2 \log \estGap$; and for $B>0$, at most $2 + 2 \log \estGap + (1/2) \log B$.  In both cases, this is at most $2 + 2 \log \estGap + (1/2) \log (B+1) \leq 2 + 2 \log (\estGap (B+1))$.

Every time the job is bounced, the latency increases by at most $2\Delta$ seconds: $\Delta$ seconds to send the job to the server, and $\Delta$ seconds to receive the new \currentprice  from the server.  When the job is serviced it takes at most $\Delta$ seconds: the time to send the job to the server.  Thus, the total time for the job to be serviced is at most $2\Delta (2\log( \estGap (B+1)) + 2) + \Delta \leq \Delta(4\log (\estGap (B+1)) + 4)$.
\end{proof}

The following lemma bounds the number of messages sent from the server to clients and from clients to the server.  Note that the adversary can cause the server to send a message to it for every request that it sends.

\begin{lemma}\label{l:message-complexity}
    In \LinPow the server sends $O(g \log(B+(3 \estGap^2)))$ messages to the clients and the clients send a total of $O(g \log (B+3\estGap^2))$ messages to the server. 
\end{lemma}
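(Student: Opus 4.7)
The plan is to bound the number of messages exchanged between the server and good clients by first bounding the number of bounces per good job, then summing over all $g$ good jobs. By the convention from Section~\ref{s:problem}, ``clients'' refers only to honest clients, so messages to and from the adversary are not counted here.

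First, I would unpack the \LinPow pseudocode to observe that communication with a single good job proceeds in lockstep: the client sends one initial request, and then each time the server bounces the request, the server sends exactly one threshold-value message to the client and the client responds with exactly one retry carrying an updated RB solution. Thus, if a good job is bounced $k$ times before being serviced, it accounts for exactly $k$ server-to-client messages and exactly $k + 1$ client-to-server messages.

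Second, I would bound $k$ using the same accounting as in the proof of Lemma~\ref{l:max-time}. By the pseudocode, the client's stored value $x$ is monotonically non-decreasing and only takes values that are powers of $2$, as does every threshold value set by the server. So the number of bounces equals the number of strict increases in $x$, which is at most $\log_2$ of the largest threshold ever set by the server, plus $1$. The adversary can double the server's threshold at most $(1/2) \log B$ times (when $B > 0$) through its spending by the threshold-update rule, and good jobs within any single iteration can contribute at most $\log(2 \estGap^2)$ additional doublings since Lemma~\ref{l:upperGood} bounds good jobs per iteration by $2 \estGap^2$. Combining, we get $k \leq 1 + 2 \log(\estGap (B+1))$, which is $O(\log(B + 2 \estGap^2))$ after absorbing $\log \estGap$ and $\log(B+1)$ into $\log(B + 2 \estGap^2)$.

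Summing the per-job bounds over all $g$ good jobs immediately yields $O(g \log(B + 2 \estGap^2))$ messages in each direction, as claimed. The main subtlety, as in Lemma~\ref{l:max-time}, is that a good job's lifetime can span iteration boundaries where the server's threshold resets to $1$; however, since $x$ is monotonically non-decreasing and iteration resets are invisible to the client (the client only learns of thresholds through bounce responses), resets cannot generate additional server-to-client messages for an in-flight good job, so the bound above is not inflated. I expect this step to be the only place where care is needed; the rest is a straightforward counting argument.
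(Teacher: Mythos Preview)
Your proposal is correct and takes essentially the same approach as the paper: bound the number of bounces per good job by $O(\log(B+2\estGap^2))$ via the monotonicity of the client's stored value $x$ together with a bound on the maximum threshold ever set, then sum over all $g$ good jobs. The only cosmetic difference is that you invoke the accounting from Lemma~\ref{l:max-time} (yielding the slightly sharper per-job bound $1+2\log(\estGap(B+1))$), whereas the paper re-derives a cruder bound directly by noting that the number of serviced jobs in any iteration is at most $B+2\estGap^2$; both give the same asymptotic and the handling of iteration resets is the same.
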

\begin{proof}
Every time the server sends a message to the client, the \currentprice sent to that client must at least double. How often can this happen?  By Lemma~\ref{l:upperGood} the number of good jobs in any iteration is at most $\estGap(\estGap+1) + 1\leq 3 \estGap^2$.  So, the total number of changes to the \currentprice in any iteration is $\log(B+(3 \estGap^2))$.  So, the maximum \currentprice ever achieved is at most $2^{\log(B+(3\estGap^2))}$.  Hence, the maximum number of messages sent to any client is $O(\log(B+(3 \estGap^2)))$.  It follows that the total number of messages sent to clients is $O(g \log(B+(3 \estGap^2)))$

Every time a client sends a new message to the server, the fee attached to that message doubles.  The maximum \currentprice set by the server over all iterations is at most $\log (B+3\estGap^2)$, so a good job will send at most that number of messages. Hence, all $g$ good jobs send a total of $O(g \log (B+3\estGap^2))$ messages to the server.  
\end{proof}

%%%%%%%%%%%%%%%%%%%%%%%%%%%%%%%%%%%%%%%%%%%%%%
%%%%%%%%%%%%%%%%%%%%%%%%%%%%%%%%%%%%%%%%%%%%%%
%%%%%%%%%%%%%%%%%%%%%%%%%%%%%%%%%%%%%%%%%%%%%%

\section{Lower Bound Proof}\label{s:lower}

In this section we give a lower bound for randomized algorithms for our problem zero-latency.  This lower bound also holds directly for our harder asynchronous variant of our model.

\subsection{Our Adversary} \label{s:adv}
Let $\estGap$ be any positive integer, $g_0$ be any multiple of $\estGap$, and $n$ be any multiple of $g_0 \estGap$.  To prove our lower bound, we choose both an input distribution and also a specific estimator.

\medskip
\noindent{}
\textbf{Distribution.}
There are two possible distributions of $n$ jobs.  In both distributions, the jobs are evenly spread in a sequence over $T$ seconds for any value $T$;  in particular, the $i$-th job occurs at time $T (i/n)$.

In distribution one, the number of good jobs is $g_1 = \estGap g_0$.  In distribution two, it is $g_2 = g_0/\estGap$.  Our adversary chooses one of these two values, each with probability $1/2$.

Then, for $i \in \{1,2\}$, we partition from left to right the sequence of jobs into contiguous subsequences of length $n/g_i$. In each partition, we set one job, selected uniformly at random, to be good and the remaining jobs to be bad.   

\medskip
\noindent{}\textbf{Estimator.}  For any interval $\intvl$, let $\numjobs(\intvl)$ be the number of (good and bad) jobs overlapped by $\intvl$, i.e. $\numjobs(\intvl) = |\intvl \cap \{k (T/n): k \in \mathbb{Z}^+ \}|$. Then 
$$\hat{g}(\intvl) = (g_0/n)\numjobs(\intvl)$$

It is easy to verify that the above function has the additive property, and so is an estimator.  Moreover, this estimator has two important properties.  First, the estimation gap is always at most $\gamma$ for either input distribution (see Lemma~\ref{l:estim}).  Second, it is independent of the choice of distribution one or two.

\subsection{Analysis}

We first show that this estimator has $\estGap$ estimation gap for both distributions. 

\begin{lemma} \label{l:estim}
For both distributions, for any interval $\intvl$, we have:
$$g(\intvl)/\estGap - \estGap \leq \hat{g}(\intvl) \leq \estGap g(\intvl) + \estGap$$
\end{lemma}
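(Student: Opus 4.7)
The plan is to compare $g(\intvl)$ and $\hat{g}(\intvl)$ via the block structure imposed by each distribution, and then deduce the claimed inequalities with straightforward algebra. Fix $i \in \{1,2\}$ and consider distribution $i$, where the $n$ jobs are partitioned into $g_i$ contiguous blocks of size $K_i = n/g_i$, each containing exactly one uniformly random good job. For any interval $\intvl$, let $B$ be the number of blocks whose slots $\intvl$ intersects, and let $P \in \{0,1,2\}$ be the number of those blocks that are only partially intersected.

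The key step is to prove $|g(\intvl) - \numjobs(\intvl)/K_i| \leq 2$. I would argue that both quantities lie in the interval $[B - P, B]$: each of the $B - P$ fully-contained blocks contributes exactly one good job to $g(\intvl)$ and exactly $K_i$ jobs---that is, $1$ after dividing by $K_i$---to $\numjobs(\intvl)/K_i$, while each partial block contributes $0$ or $1$ to $g(\intvl)$ and a value in $(0,1)$ to $\numjobs(\intvl)/K_i$. Subtracting and bounding by $P \leq 2$ yields the discrepancy bound.

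With this in hand, the lemma follows by rewriting $\hat{g}(\intvl) = \numjobs(\intvl)/K_0$, where $K_0 = n/g_0$. For distribution 1, $K_1 = K_0/\estGap$, so $\hat{g}(\intvl) = (\numjobs(\intvl)/K_1)/\estGap = g(\intvl)/\estGap \pm 2/\estGap$; for distribution 2, $K_2 = \estGap K_0$, so $\hat{g}(\intvl) = \estGap (\numjobs(\intvl)/K_2) = \estGap\, g(\intvl) \pm 2$. The target inequalities $g(\intvl)/\estGap - \estGap \leq \hat{g}(\intvl) \leq \estGap g(\intvl) + \estGap$ then drop out in each case: the leading term of $\hat{g}(\intvl)$ already falls between the two bounds since $g(\intvl)/\estGap \leq \estGap g(\intvl)$, and the $O(1)$ additive error is absorbed by the $\estGap$ slack on the appropriate side.

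The main obstacle is the boundary-block bookkeeping in the discrepancy step, since $g(\intvl)$ is a random integer while $\numjobs(\intvl)/K_i$ is a fixed rational and the two can differ by nearly $2$ when both partial blocks misalign their randomly placed good jobs with $\intvl$. The argument therefore hinges on carefully showing that these worst-case misalignments never exceed $P \leq 2$ in total; once that is settled, the remainder of the proof is a short arithmetic check that scales the bound through the two ratios $K_1/K_0 = 1/\estGap$ and $K_2/K_0 = \estGap$.
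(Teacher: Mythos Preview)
Your scaling step for distribution~2 contains an arithmetic error. From $\numjobs(\intvl)/K_2 = g(\intvl) \pm 2$ and $\hat{g}(\intvl) = \estGap \cdot \numjobs(\intvl)/K_2$ you obtain $\hat{g}(\intvl) = \estGap\, g(\intvl) \pm 2\estGap$, not $\estGap\, g(\intvl) \pm 2$ as you wrote. The additive error after scaling is therefore $2\estGap$, not $O(1)$, and it is not absorbed by the available slack: from $\hat{g}(\intvl) \leq \estGap\, g(\intvl) + 2\estGap$ the required upper bound $\hat{g}(\intvl) \leq \estGap\, g(\intvl) + \estGap$ does not follow. An analogous shortfall appears for distribution~1 on the lower side when $\estGap = 1$, where your bound yields only $\hat{g}(\intvl) \geq g(\intvl) - 2$ rather than the needed $g(\intvl) - 1$. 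So the ``short arithmetic check'' you sketch does not actually close.

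The paper's argument follows the same block-counting outline but works with a $\pm 1$ discrepancy rather than $\pm 2$: it bounds $g(\intvl)$ between $\lfloor \numjobs(\intvl)/(n/g_i) \rfloor$ and $\lceil \numjobs(\intvl)/(n/g_i) \rceil$, and then immediately replaces $g_i$ by $g_{\min}$ (for the lower bound on $g$) or $g_{\max}$ (for the upper bound on $g$) before rearranging. With a $\pm 1$ error the scaled slack is exactly $\estGap$ on the side where it matters, and the lemma drops out. Your bound $|g(\intvl) - \numjobs(\intvl)/K_i| \leq P \leq 2$ comes from allowing both boundary blocks to misalign simultaneously; to recover the paper's constants you would need the sharper floor/ceiling estimates rather than the interval-width bound $P$.
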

\begin{proof}
    Let $g_{\max} = \max \{ g_1, g_2 \} = \estGap g_0$ and $g_{\min} = \min \{ g_1, g_2\} = g_0/\estGap$, and fix any interval $\intvl$.  Then, for each distribution we have the following based on the observation that there is one good job in every partition of size $n/g_i$ for $i \in \{1,2\}$
\begin{align}
g(\intvl) \leq \lceil \numjobs(\intvl)/(n/g_i)\rceil \leq  \left(\frac{g_{\max}}{n}\right)\numjobs(\intvl)+1 \leq  \left(\frac{\estGap g_0}{n}\right)\numjobs(\intvl)+1. \label{eqn:upper-g}
\end{align}
\noindent Thus:
\begin{align*}
\left(\frac{1}{\estGap}\right) g(\intvl)-\estGap &\leq \left(\frac{1}{\estGap}\right)\left(\frac{\estGap g_0}{n}\numjobs(\intvl)+1\right) -\estGap\\
&\leq \left(\frac{g_0}{n}\right)\numjobs(\intvl) + \frac{1}{\estGap} - \estGap\\
&\leq \hat{g}(\intvl)
\end{align*}
\noindent In the above, the first line follows from Equation~\ref{eqn:upper-g}, and the last line follows for $\estGap\geq 1$ and by the properties of our estimator. 

For the other direction, note that for any interval $\intvl$, we have:
\begin{align}
g(\intvl) \geq \lfloor \numjobs(\intvl)/(n/g_i)\rfloor \geq \left(\frac{g_{\min}}{n}\right)\numjobs(\intvl)-1 \geq  \left(\frac{g_0}{\estGap n}\right)\numjobs(\intvl)-1.\label{eqn:lower-g}
\end{align}\noindent Thus:
\begin{align*}
\estGap g(\intvl) + \estGap &\geq \estGap\left( \frac{g_0}{\estGap n}\numjobs(\intvl)-1 \right) + \estGap\\
& \geq \left(\frac{g_0}{n}\right)\numjobs(\intvl)\\
& = \hat{g}(\intvl)
\end{align*}

\noindent In the above, the first line follows from Equation~\ref{eqn:lower-g}, and the last line follows from the fact that $\estGap \geq 1$, and the definition of the estimator. Hence, by the above analysis, our estimator has $\estGap$ estimation gap for both distributions. 
\end{proof}

Since the estimator returns the same outputs on both distributions, any deterministic algorithm will set job costs to be the same.  This fact helps in proving the following lemma about expected costs.

 \begin{lemma} \label{l:lbDeterm}
Consider any deterministic algorithm that runs on the above adversarial input distribution and any estimator.  Define the following random variables: $A$ is the cost to the algorithm; $B$ is the cost to the adversary; and $g$ is the number of good jobs in the input.  Then:
     $$E\left(A - \sqrt{Bg\estGap}\right) \geq 0$$
 \end{lemma}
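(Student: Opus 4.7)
The plan is to leverage the fact that the estimator returns the identical output on both distributions (since $\hat{g}(\intvl) = (g_0/n)\numjobs(\intvl)$ depends only on the total number of jobs in $\intvl$, not on which are good). This means a deterministic algorithm's threshold schedule is the same function of position $i$ under either distribution. We will then condition on the distribution choice, apply Jensen's inequality where $g$ becomes deterministic, and reduce the lemma to an algebraic inequality handled by AM-GM.

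First, I fix a deterministic algorithm and let $c_i$ be the RB threshold it sets at the $i$-th job position, and $S = \sum_{i=1}^n c_i$. Since the estimator's output and the observed job stream look identical under both distributions (assuming the adversary attaches sufficient RB solutions to its bad jobs as part of its specified strategy), the sequence $c_1,\ldots,c_n$ is a fixed function of position, independent of whether the adversary chose distribution 1 or 2.

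Next I compute conditional expectations. Under distribution $d \in \{1,2\}$, each of the $g_d$ partitions of length $n/g_d$ contains one uniformly random good job, so each position $i$ is good with probability $g_d/n$. By linearity of expectation,
\begin{equation*}
E[A_{RB} \mid d] = \frac{g_d}{n} S, \qquad E[B \mid d] = \left(1 - \frac{g_d}{n}\right) S,
\end{equation*}
and $E[A \mid d]$ additionally includes the provisioning cost (at least $g_d$ for honest jobs plus provisioning for each bad job the adversary chooses to service). Conditioning on $d$, the value $g$ is the constant $g_d$, so Jensen's inequality applied to the concave square-root function gives
\begin{equation*}
E\!\left[\sqrt{Bg\estGap} \,\big|\, d\right] \leq \sqrt{E[Bg\estGap \mid d]} = \sqrt{g_d \,\estGap\, E[B\mid d]}.
\end{equation*}
Averaging over the $1/2$-$1/2$ choice of distribution then yields
\begin{equation*}
E\!\left[\sqrt{Bg\estGap}\right] \leq \tfrac{1}{2}\sqrt{g_1\,\estGap\, E[B\mid 1]} + \tfrac{1}{2}\sqrt{g_2\,\estGap\, E[B\mid 2]},
\end{equation*}
so the lemma reduces to showing that $E[A] = \tfrac{1}{2}(E[A\mid 1]+E[A\mid 2])$ dominates this right-hand side.

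The final step is the algebraic bound. Using $g_1 g_2 = g_0^2$ and $g_1 + g_2 = g_0(\estGap + 1/\estGap)$, I would express both sides as a function of the single scalar $S$ (and the parameters $n, g_0, \estGap$). The total algorithmic cost $E[A]$ combines a provisioning contribution (at least linear in $n$ or $g$) with the RB term $\frac{g_0(\estGap+1/\estGap)}{2n}S$, while the right-hand side is essentially $\sqrt{S g_0}\,(\estGap+1)/2$ after simplification. Applying AM-GM to the provisioning/RB sum gives a lower bound of the form $\Omega(\sqrt{S g_0(\estGap+1/\estGap)})$, which matches the target when multiplied out and compared term-by-term.

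The main obstacle is this last algebraic verification: the tight step is balancing the asymmetry between the two distributions. In distribution 1, the algorithm is forced to pay RB cost on a $g_1/n$ fraction of all positions, so a large $S$ is automatically expensive; in distribution 2, $g_2$ is small so $\sqrt{g_2\estGap E[B\mid 2]}$ stays moderate. The proof must carefully use the fact that $g_1$ and $g_2$ are geometric reflections of $g_0$ through $\estGap$, so that the "large-$g$" and "large-$B$" pieces across the two distributions cannot simultaneously overwhelm $E[A]$. Once this algebraic inequality is in hand, averaging the two conditional Jensen bounds completes the proof of $E(A-\sqrt{Bg\estGap})\geq 0$.
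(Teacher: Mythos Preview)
Your setup matches the paper's: the estimator output is distribution-independent, so the threshold schedule $(c_i)$ and $S=\sum_i c_i$ are the same under both distributions, and you compute $E[A\mid d]=(g_d/n)S+n$ and $E[B\mid d]=(1-g_d/n)S$ exactly as the paper does.

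The genuine gap is the algebraic verification you explicitly defer, and the AM--GM route you sketch cannot close it. After your conditional Jensen bound you need
\[
\frac{(g_1+g_2)S}{2n}+n \;\ge\; \tfrac12\sqrt{g_1\estGap\,E[B\mid 1]}+\tfrac12\sqrt{g_2\estGap\,E[B\mid 2]},
\]
but this inequality is false once $\estGap$ is large. Take $\estGap=10$, $g_0=10$, $n=1000$, and $c_i\equiv 10$ (so $S=10^4$): the left side is $1505$ while the right side is $\tfrac12\sqrt{1000\cdot 9000}+\tfrac12\sqrt{10\cdot 9990}\approx 1658$. Your AM--GM bound on the left scales like $\sqrt{g_0(\estGap+1/\estGap)S}$, whereas the right side scales like $\tfrac12(\estGap+1)\sqrt{g_0 S}$, larger by a factor $\Theta(\sqrt{\estGap})$; so no symmetric AM--GM on the averaged sum can work. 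The paper proceeds differently: it does \emph{not} apply Jensen first and then compare sums, but instead forms the \emph{ratios} $E_d(A)/\sqrt{E_d(B)g_d}$, rewrites both in the single variable $V_1=\sqrt{C\estGap g_0/n}$, and explicitly minimizes $(1+1/\estGap)V_1+(1+\estGap)n/V_1$ over $V_1$ to extract the factor $\sqrt{\estGap}$, invoking Jensen only afterward. That explicit optimization over the shared parameter $C$ (equivalently $V_1$) is the idea your proposal is missing. (Even so, be cautious: the paper's passage from the averaged-ratio bound back to $E(A)\ge\sqrt{E(B)g\estGap}$ via ``cross multiplying'' is itself not a straightforward step, and the same numerical example shows that inequality also fails; the argument as written is delicate at exactly this point.)
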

\begin{proof}
By Lemma~\ref{l:estim} the estimator has $\estGap$ estimation gap, and returns the exact same values  whether the number of actual good jobs is $g_1$ or $g_2$.  Thus, the server can set the price based only on the job index.  So, for each index $i \in [1,n]$, let $c_i$ be the price for the $i$-th job; and let $C = \sum_{i=1}^n c_i$.  Let $E_1(A)$ (resp. $E_1(B)$) be the expected cost to the algorithm (resp. the expected adversarial total cost) for distribution $1$; and $E_2(A)$, $E_2(B)$ be analogous expected costs for distribution $2$. 

To show $E(A - \sqrt{Bg\estGap}) \geq 0$, we will show that $E(A)/\sqrt{E(B)g} \geq \sqrt{\estGap}$, and then cross-multiply and use Jensen's inequality.   
If we define a random variable $a_i$ that has value $c_i$ if the $i$-th job is good and $0$ otherwise, and note that $A = \sum_{i=1}^n a_i$, then by linearity of expectation, the expected algorithm cost due to fees in distribution $1$ is $\sum_{i=1}^n E(a_i) = C \estGap g_0/n$, with service cost of $n$.  A similar analysis for $E_1(B)$ gives:
\begin{align*}
E_1(A) & = C\estGap g_0/n + n\\
E_1(B) & = \frac{n - \estGap g_0}{n} C. 
\end{align*}
\noindent Similarly, for the second distribution:
\begin{align*}
E_2(A) & = Cg_0/(\estGap n) + n\\
E_2(B) & = \frac{n - g_0/\estGap}{n} C.
\end{align*}

\noindent{}
Hence, we have:
\begin{align*}
\frac{E_1(A)}{\sqrt{E_1(B) g_1}} & = \sqrt{\frac{C \estGap g_0}{n(n-\estGap g_0)}} + \sqrt{\frac{n^3 }{C\estGap g_0(n-\estGap g_0)}}\\
& = \frac{V_1 + n/V_1}{\sqrt{(n-\estGap g_0)}}
\end{align*}
where $V_1 = \sqrt{\frac{C \estGap g_0}{n}}$. Also:
\begin{align*}
\frac{E_2(A)}{\sqrt{E_2(B) g_2}} & = \sqrt{\frac{C(g_0/\estGap)}{n(n-g_0/\estGap)}} + \sqrt{\frac{n^3}{C(g_0/\estGap)(n-g_0/\estGap)}}\\
& = \frac{V_2 + n/V_2}{\sqrt{(n-g_0/\estGap)}}
\end{align*}
\noindent where $V_2 = \sqrt{\frac{Cg_0/\estGap}{n}} = V_1/\estGap$.

Let $f(C)$ be the expected value of the ratio of $E(A)/\sqrt{E(B) g}$ as a function of $C$.  Since the adversary chooses each distribution with probability $1/2$, we have:
\begin{align*}
    f(C) & = 1/2 \left( \frac{V_1 + n/V_1}{\sqrt{n -\estGap g_0}} + \frac{V_1/\estGap + n\estGap/V_1}{ \sqrt{n - g_0/\estGap}}\right) \\
     & \geq \frac{1}{2(\sqrt{n -\estGap/g_0})} \left((1+1/\estGap)V_1 + (1+\estGap) n/V_1)\right) 
\end{align*}

Consider the expression in parenthesis above: $g(V_1) = (1+1/\estGap)V_1 + (1+\estGap) n/V_1$
The derivative of $g(V_1)$ with respect to $V_1$ is $1/\estGap + 1 - (\estGap +1)n/V_1^2$.  From this, we see that $g(V_1)$ is minimized when $V_1 = \sqrt{n\estGap}$.  Plugging this back in to $f$, we get that $f(C) \geq \sqrt{\estGap}$. 
\begin{align*}
    f(C) &\geq \frac{1}{2(\sqrt{n -\estGap/g_0})} \left((1+1/\estGap)V_1 + (1+\estGap) (n/V_1))\right) \\
%    & =  \frac{1}{2(\sqrt{n -\estGap/g_0})} \left( (1+ 1/\estGap) \sqrt{n\estGap} + (1+\estGap)(n/\sqrt{n\estGap})\right)\\
    & =  \frac{1}{2(\sqrt{n -\estGap/g_0})} \left( \sqrt{n\estGap} + \sqrt{n/\estGap} + \sqrt{n/\estGap} + \sqrt{n\estGap} \right)
    \end{align*}
    \begin{align*}
    &= \frac{1}{2(\sqrt{n -\estGap/g_0})} \left(2\sqrt{n\estGap} + 2\sqrt{n/\estGap}\right)\\
    &= \frac{\sqrt{n\estGap} + \sqrt{n/\estGap}}{\sqrt{n -\estGap/g_0}} \\
    & \geq \frac{\sqrt{n\estGap} + \sqrt{n/\estGap}}{\sqrt{n}}\\
    & \geq \sqrt{\gamma}
\end{align*}

Hence, we have proven that the expected value of the ratio $E(A)/\sqrt{E(B) g \estGap}$ is always at least $1$, for our adversarial input distribution, for any deterministic algorithm. 
 Cross multiplying, we get
\begin{align*}
   E(A) & \geq \sqrt{E(B) g \estGap} \\
   & \geq E(\sqrt{B g \estGap})
\end{align*}

where the second line holds by applying  Jensen's inequality~\cite{mcshane1937jensen} to the convex function\footnote{Equivalently, we can directly apply the (less well-known) version of Jensen's inequality for concave functions~\cite{dragomir1992some} to obtain the second line.} $f(x) = -\sqrt{x}$.  Finally, by linearity of expectation, this inequality implies that $E(A - \sqrt{B g \estGap}) \geq 0$, which completes the argument.
\end{proof}

\smallskip
\noindent{\textsc{Theorem}~\ref{t:main-lower}.} {\it
Let $\estGap$ be any positive integer; $g_0$ be any positive multiple of $\estGap$; and $n$ any multiple of $\estGap g_0$. Next, fix any randomized algorithm in our model.  Then, there is an input  with $n$ jobs, $g$ of which are good for $g \in \{ \estGap g_0, g_0/\estGap \}$; and an estimator with estimation gap $\estGap$ on that input. 
Additionally, the randomized algorithm on that input has expected cost: $$E(A) = \Omega\left(E(\sqrt{\estGap B (g+1)}) + \estGap (g+1)\right).$$}

\begin{proof}
By Lemma~\ref{l:lbDeterm}, on our adversarial distribution, any deterministic algorithm has   $E(A) - \sqrt{\estGap E(B) g \estGap} \geq 0$.

We fix values of $g_0$, $n$ and $\estGap$ in our adversarial distribution and define a game theoretic payoff matrix $\mathcal{M}$ as follows.  The rows of $\mathcal{M}$  correspond to deterministic inputs in the sample space used by our adversarial distribution; the columns correspond to all deterministic algorithms over $n$ jobs.  Further, for each input $i$, and algorithm $j$, the matrix entry $\mathcal{M}(i,j)$ is the value $A - \sqrt{B g \estGap}$, where $A$ and $B$ are the costs to the algorithm and adversary respectively, when algorithm $j$ is run on input $i$; and the value $g$ is the number of good jobs in input $i$.

Lemma~\ref{l:lbDeterm} proves that the minimax of this payoff matrix is at least $0$.  So by Von Neumann's minimax theorem~\cite{nikaido1954neumann,yao:probabilistic}, the maximin is also at least $0$.  In other words, via Yao's principal~\cite{yao:probabilistic}, any \emph{randomized} algorithm has some \emph{deterministic} input for which $E(A - \sqrt{\estGap B g}) \geq 0$.  By linearity of expectation:
\begin{align*}
   E(A) & \geq E(\sqrt{\estGap B g}).
\end{align*}

Finally, every algorithm has job service cost that equals $n = \estGap g$, so $E(A) = n \geq \estGap g$.  Adding this to the above inequality gives that $2E(A) \geq E(\sqrt{\estGap B g}) + \estGap g$, or $E(A) = \Omega(\sqrt{\estGap E(B) g} + \estGap g)$.
Since $g$ is positive, this value is: $$\Omega\left(\sqrt{\estGap E(B) (g+1)} + \estGap (g+1)\right).$$  
Since the sample space of our adversarial distribution has $g \in \{\estGap g_0, g_0/\estGap \}$, the worst-case distribution chosen for the maximin will also have $g \in \{\estGap g_0, g_0/\estGap \}$.
\end{proof}

%%%%%%%%%%%%%%%%%%%%%%%%%%%%%%%%%%%%%%
%%%%%%%%%%%%%%%%%%%%%%%%%%%%%%%%%%%%%%
%%%%%%%%%%%%%%%%%%%%%%%%%%%%%%%%%%%%%%

\section{Preliminary Simulation Results}\label{s:simulation}

We conduct simulations for input distributions where there is a large attack and all good jobs are serviced as late as possible; here, our aim is to simply verify the scaling behavior predicted by theory. Our custom simulations are written in Python and the source code is available to the public via GitHub~\cite{chakraborty:code}. 

\subsection{Experiments for \Linear}

In our experiments for $\Linear$, the distribution of good and bad jobs, along with the performance of the estimator, is setup as follows. Let the jobs be indexed as $J_1, J_2, \ldots, J_n$.  Then, let $\intvl_m$ be the time interval containing the single point in time at which the $m$-th job is generated for $1 \leq m \leq n$. 

\begin{itemize}
    \item For $m \in [1,\estGap]$, job $J_m$ is bad and $\hat{g}(\intvl_{m}) = 1$
    \item For $m \in [\estGap + 1, n-\estGap]$, job $J_m$ is bad and $\hat{g}(\intvl_{m}) = 0$
    \item For $m \in [n-\estGap+1, n-1]$, job $J_m$ is good and $\hat{g}(\intvl_{m}) = 0$
    \item For $m = n$, job $J_m$ is good and $\hat{g}(\intvl_{m}) = 1$
\end{itemize}

For all other intervals $\intvl_t$ consisting of single points in time, $t$, $\hat{g}(\intvl_t) = 0$.  Then, the additive property defines the estimator output for all larger intervals.

Thus, the first $\estGap$ jobs are bad, but considered to be good by the estimator (each such job will end an iteration); the next $n-2\estGap$ jobs are bad and considered to be bad by the estimator; the next $\estGap-1$ jobs are good and considered to be bad by the estimator; and the last job is good and considered to be good by the estimator (which ends an iteration).

\begin{figure*}[t]
\hspace{-0pt} 
\includegraphics[width=0.9\textwidth, trim = 0in 0in 0in 0in, clip]{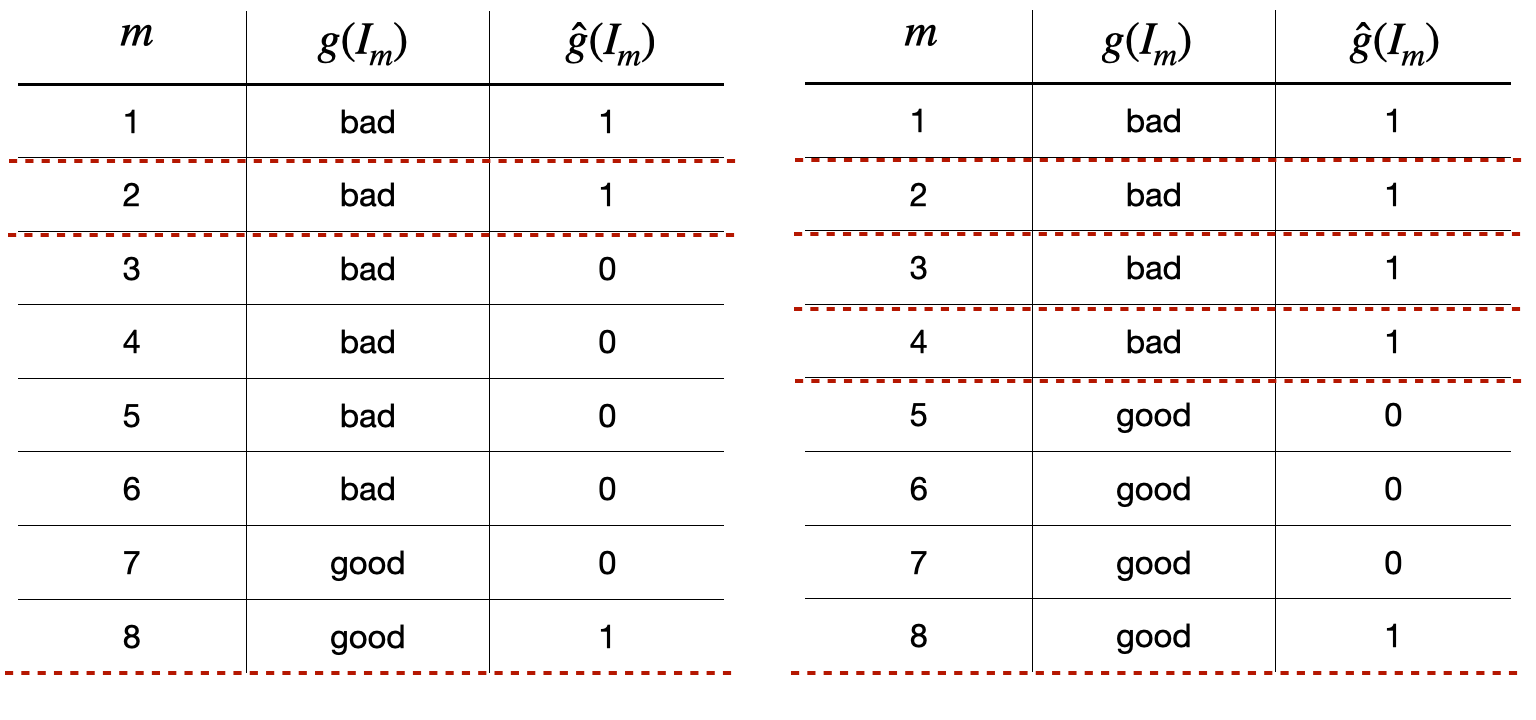} 
\caption{Two small input examples for experiments with $\Linear$ for $n=8$, with $\estGap=2$ (left) and $\estGap=4$ (right). Red dashed lines indicate the end of an iteration.} 
 \label{fig:tables}
\end{figure*}

We highlight that, in the above setup, the estimator errs in a way that favors the adversary and disfavors our algorithm. Given that the first $\estGap$ bad jobs are (erroneously) considered good by the estimator, the adversary pays only $1$ for each such job, since each such job ends an iteration.  Thus, as $\estGap$ grows, the adversary benefits; see the examples for $\estGap=2$ and $\estGap=4$ in Figure~\ref{fig:tables}. Additionally, of the final $\estGap$ (good) jobs, all but the last one are correctly identified as good, which increases the cost to the algorithm.
 \medskip

\noindent{\bf Specific Experiments.}  We conduct two experiments.  For our first experiment,  we  let $n= 10\times2^x$ for integer $x\in[-1,18]$.  Then, we fix $\estGap=1$ and investigate the impact of different values of $\algExp$ (recall Section~\ref{s:lin-anal}); specifically, we let $\algExp=0, 1/2, 1$ and $2$. 

For our second experiment, we set $n= 10\times2^x$, where $x\in[3,18]$.  We fix $\algExp=1$ and investigate the impact of different values of $\estGap$; specifically,  we let $\estGap =1, 2, 4, 8, 16$ and $32$.   The range for $x$ is slightly smaller than in the first experiments, since the smallest power-of-$2$-sized input that holds both good and bad jobs is $10\times2^3=80$ when $\estGap=32$. 
\medskip

%%%%%%%%%%%%%%%%%%%%%%%%%%%%%%%%%%%%%%%%%%%%%%
%%%%%%%%%%%%%%%%%%%%%%%%%%%%%%%%%%%%%%%%%%%%%%
%%%%%%%%%%%%%%%%%%%%%%%%%%%%%%%%%%%%%%%%%%%%%%

\begin{figure*}[t!]
\hspace{-10pt}\begin{subfigure}{0.46\textwidth} 
\includegraphics[width=1.15\textwidth, trim = 0in 0in 0in 0in, clip]{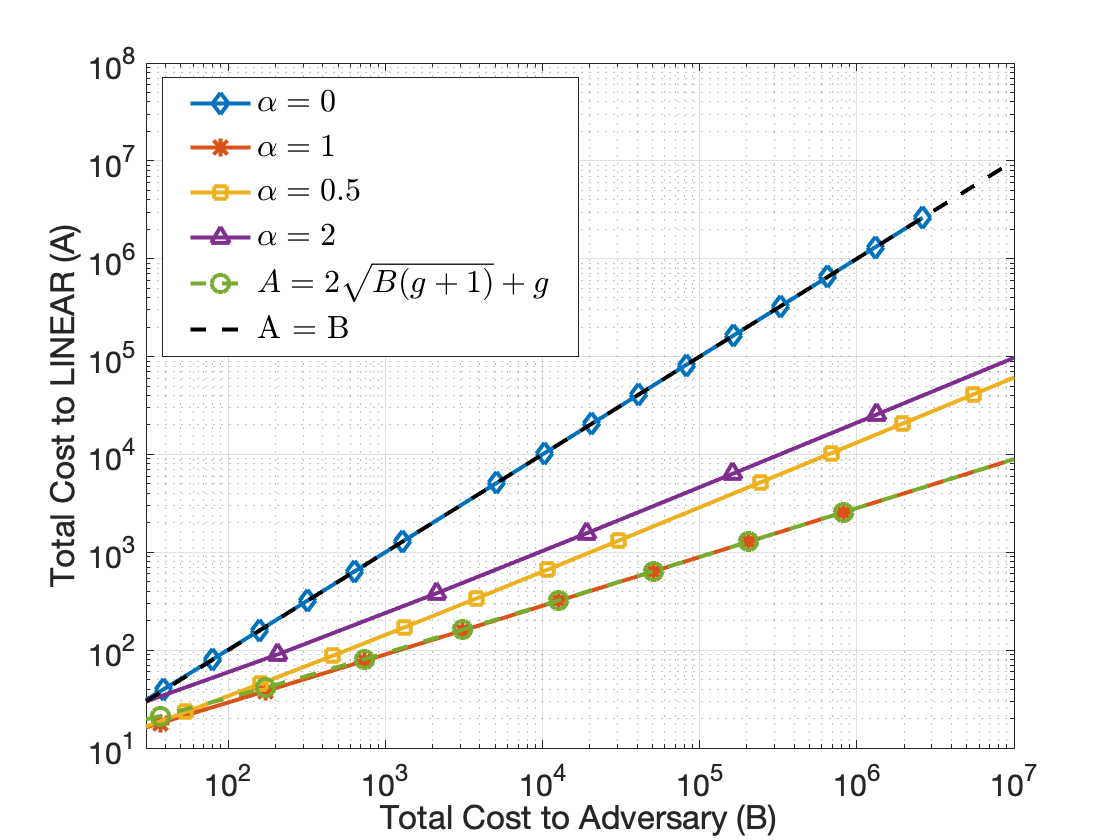} 
\end{subfigure}
\hspace{18pt}\begin{subfigure}{0.46\textwidth} 
\vspace{0pt}\includegraphics[width=1.15\textwidth, trim = 0in 0in 0in 0in, clip]{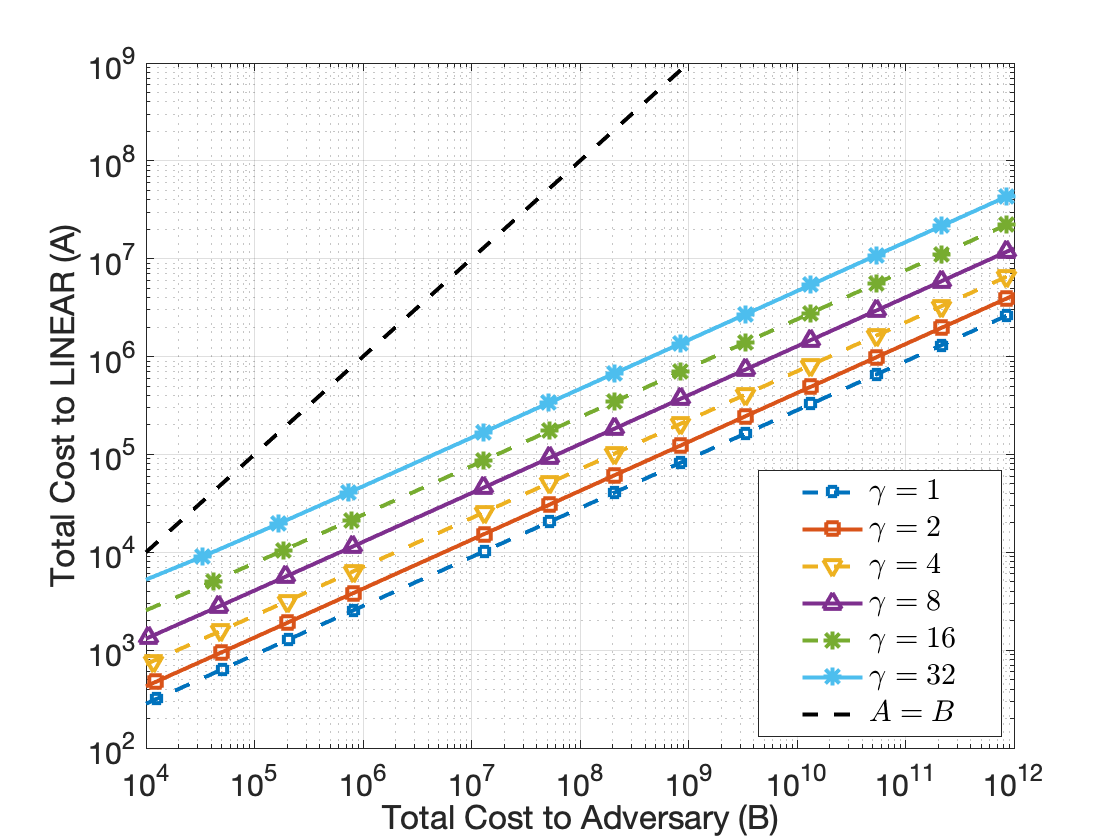} 
\end{subfigure}
\caption{Results for \Linear. (Left) $\estGap=1$; $\algExp$ value varies. (Right) $\algExp = 1$; $\estGap$ value varies.} 
 \label{fig:sim}
\end{figure*}

%%%%%%%%%%%%%%%%%%%%%%%%%%%%%%%%%%%%%%%%%%%%%%
%%%%%%%%%%%%%%%%%%%%%%%%%%%%%%%%%%%%%%%%%%%%%%
%%%%%%%%%%%%%%%%%%%%%%%%%%%%%%%%%%%%%%%%%%%%%%

\noindent{\bf Results for \Linear.} The results of our first experiment are plotted in Figure~\ref{fig:sim} (Left). The algorithm's cost is denoted by $A$, while the adversary's cost is denoted by $B$. The results show scaling consistent with our theoretical analysis. Notably, setting $\algExp=1$ achieves the most advantageous cost for algorithm versus the adversary. We observe that for $\algExp=0, 2$ and $0.5$, the algorithm cost increases by up to a factor of 
 approximately $33.3$, $3.3$, and $1.8$, respectively, relative to  \Linear (i.e., $\algExp=1$) at $B=10^4$. We include the line $A=B$ as a reference, and this closely corresponds to the case for $\algExp=0$ (i.e., each job has cost $1$), since the service cost incurred by the algorithm is close to the number of bad jobs in these experiments. We also include the equation $A=2(\sqrt{\AdvTotal(g+1)}+g)$, which closely fits the line for $\algExp=1$.

The results of our second experiment are plotted in Figure~\ref{fig:sim} (Right).  As expected, when $\estGap$ increases---and, thus, the accuracy of our estimator worsens---the cost ratio of the algorithm to the adversary increases. Specifically,  for $B=10^7$, we see that the values of $\estGap = 2, 4, 8, 16$ and $32$ correspond to a cost ratio that is, respectively, a factor of $1.5$, $2.6$, $4.4$, $8.8$, and $15.5$ larger than than for $\estGap=1$.  Given this, and noting the fairly even spacing of the trend lines on the log-log-scaled plot, the cost appears to scale according to a (small) polynomial in $\estGap$. We note that, despite this behavior, the cost of the algorithm versus the adversary is still  below the $A=B$ line; thus, \Linear still achieves a significant advantage for the values of $\estGap$ tested.  

\vspace{0.5cm}

\subsection{Experiments for \LinPow}

To evaluate a challenging case for \LinPow, we create an input causing many bounced good jobs in the first iteration. From the start of this iteration, we consider disjoint periods of $2\Delta$ seconds.   Each period $i\geq 0$ will have $2^i$ bad jobs serviced, followed by $(i+1)M$ good jobs that are bounced. For example, in the first period, we have $1$ bad job serviced followed by $M$ good jobs; these $M$ good jobs will be bounced, since they start with a solution to a $1$-hard challenge, while the \currentprice has increased to $2$ due to the first bad job. In the second period, two bad jobs are serviced,  followed by $M$ new good jobs, each attaching a fee of $1$.  These are followed by the $M$ good jobs that were bounced in the first period.   All of these $2M$ good jobs will be bounced, since the \currentprice has increased to $4$ due to the two bad jobs.

For each integer $z\in [10, 28]$, we fix a number of bad jobs $b=2^z - 1$, and let $M$ range over values $1, 2, 4,$ and  $8$. Once all $b$ bad jobs have been serviced---which occurs in period $z$--- a final batch of $M$ good jobs arrive with a $1$ hard solution and these are bounced.  However, the other $(z+1)M$ (good) jobs will start being serviced. Specifically, $\estGap$ of these (good) jobs will be serviced, with the first $\estGap-1$ jobs (erroneously) considered bad by the estimator, and $\estGap$-th  job (correctly) considered good; this ends the first iteration. All remaining good jobs will be serviced in subsequent iterations in similar fashion ($\estGap$ per iteration), since the \currentprice never again exceeds $2^{\floor{\log(\estGap(\estGap+1))}} \leq \estGap (\estGap+1)$ (recall Lemma~\ref{l:upperGood}). 

\medskip
\noindent{\bf Results for \LinPow.} Figure~\ref{fig:linear-power} illustrates the results of our simulations when $\estGap=8$. As predicted by our upper bound on cost in Theorem~\ref{t:LinPow}, \LinPow achieves a lower cost than the adversary for $B$ sufficiently large. However, aligning with our analysis, 
we see that this cost advantage decreases as $M$ grows. In particular, between $M=1$ and $M=8$, we see that the cost ratio decreases from roughly $2\times10^3$ to $20$ for a fixed $B=9 \times 10^{11}$. Very similar results are observed for other values of $\estGap$, and so we omit these and note that, for this set of experiments, the algorithm's cost appears to be primarily sensitive to $M$.

%\begin{wrapfigure}[16]{r}{.55\textwidth}
%\centering
%\vspace{-20px}
\begin{figure}[t!]
\begin{center}
\includegraphics[width=0.55\textwidth]{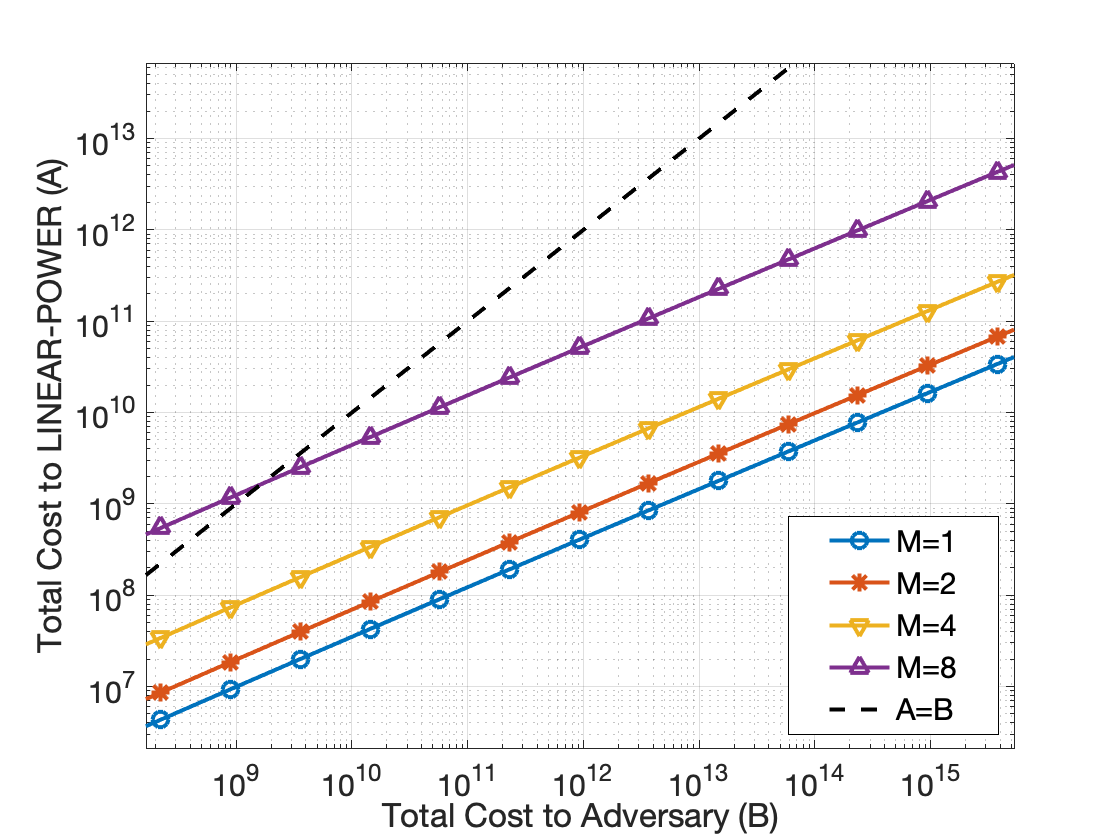}
\vspace{-0pt}
\caption{Our preliminary simulation results for \LinPow. Different values of $M$ are evaluated with $\estGap = 8$.}\label{fig:linear-power}
\end{center}
\end{figure}
%\vspace{-20px}
%\end{wrapfigure}

%%%%%%%%%%%%%%%%%%%%%%%%%%%%%%%%%%%%%%%%%%%%%%
%%%%%%%%%%%%%%%%%%%%%%%%%%%%%%%%%%%%%%%%%%%%%%
%%%%%%%%%%%%%%%%%%%%%%%%%%%%%%%%%%%%%%%%%%%%%%

%%%%%%%%%%%%%%%%%%%%%%%%%%%%%%%%%%%%%%%%%%%%%%
%%%%%%%%%%%%%%%%%%%%%%%%%%%%%%%%%%%%%%%%%%%%%%
%%%%%%%%%%%%%%%%%%%%%%%%%%%%%%%%%%%%%%%%%%%%%%

\section{Conclusion and Future Work}\label{s:conclusion-future-work}
 
We have provided two deterministic algorithms for DoS defense.  Our algorithms dynamically adjust the server's prices based on feedback from an estimator which estimates the number of good jobs in any previously observed time interval.  The accuracy of the estimator on the input, the number and temporal placement of the good and bad jobs, and the adversary's total spending are all unknown to the algorithm, but these all affect the algorithm's cost.  

Critically, during a significant attack the cost to our algorithm is {\it asymptotically} less than the cost to the adversary.  We believe this is a important property for deterring DoS attacks.   We have also given a lower bound for randomized algorithms, showing that our algorithmic costs are asymptotically tight, whenever $\estGap$ is a constant.  We note that, while our algorithms are deterministic, our lower bound holds even for randomized algorithms.   

Many interesting problems remain.  Are there less restrictive properties for the estimator which suffice to achieve a good pricing strategy?  Can we model both the clients and the server as selfish but rational agents?  In particular, can we design a mechanism that results in a Nash equilibrium, and also provides good cost performance as a function of the adversarial cost?\medskip

\noindent{\bf Acknowledgements.} This material is based upon work supported by the National Science Foundation under grant numbers CNS-2210299, CNS-2210300, and CCF-2144410.

%\vspace{-5pt}
%Also, are there other uses of a pricing scheme that can provide other information about the clients. 
 
%\bibliographystyle{plain} 
%\bibliography{DDoS-RB}

\end{document}

\endinput
%%
%% End of file `elsarticle-template-num.tex'.